\newcommand{\E}{\mathcal{E}}
\newcommand{\qed}{\mbox{}\hspace*{\fill}\nolinebreak\mbox{$\rule{0.6em}{0.6em}$}}
\newcommand{\expect}{{\bf \mbox{\bf E}}}
\newcommand{\prob}{{\bf \mbox{\bf Pr}}}
\definecolor{gray}{rgb}{0.5,0.5,0.5}
\newcommand{\e}{{\epsilon}}
\newcommand{\eps}{{\epsilon}}
\newcommand{\bool}{{\{0, 1\}}}
\newcommand{\G}{{\mathcal{G}}}
\newcommand{\YES}{{\bf YES~}}
\newcommand{\NO}{{\bf NO~}}
\newcommand{\ALG}{{\bf ALG~}}
\newtheorem{theorem}{Theorem}[section]
\newtheorem{lemma}[theorem]{Lemma}
\newtheorem{claim}[theorem]{Claim}
\newtheorem{corollary}[theorem]{Corollary}
\newtheorem{definition}[theorem]{Definition}
\newtheorem{remark}[theorem]{Remark}
\newtheorem{observation}[theorem]{Observation}
\newtheorem{proposition}[theorem]{Proposition}
\newtheorem{lemma}{Lemma}[section]
\newtheorem{theorem}[lemma]{Theorem}
\newtheorem{claim}[lemma]{Claim}
\newtheorem{definition}[lemma]{Definition}
\newenvironment{proof}{{\bf Proof:}}{$\qed$\par}
\newenvironment{proofof}[1]{\noindent{\bf Proof of #1:}}{$\qed$\par}
 \gdef\xxxmark{%
   \expandafter\ifx\csname @mpargs\endcsname\relax 
     \expandafter\ifx\csname @captype\endcsname\relax 
       \marginpar{xxx}
     \else
       xxx 
     \fi
   \else
     xxx 
   \fi}
 \gdef\xxx{\@ifnextchar[\xxx@lab\xxx@nolab}
 \long\gdef\xxx@lab[#1]#2{{\bf [\xxxmark #2 ---{\sc #1}]}}
 \long\gdef\xxx@nolab#1{{\bf [\xxxmark #1]}}
\newcommand{\D}{{\mathcal D}}
\newcommand{\poly}{\text{poly}}
\newcommand{\lba}{n^{-1/10}}
\newcommand{\DBHP}{{\bf D-BHP~}}
\newcommand{\wh}[1]{\widehat{#1}}
\begin{document}
\title{Streaming Lower Bounds for Approximating MAX-CUT}
\author{Michael Kapralov\thanks{IBM T. J. Watson Research Center, Yorktown Heights, NY 10598. Email: {\tt michael.kapralov@gmail.com} Work done while at MIT CSAIL. This research was supported by NSF award CCF-1065125, MADALGO center and Simons Foundation. We also acknowledge financial support from grant \#FA9550-12-1-0411
from the U.S. Air Force Office of Scientific Research (AFOSR) and the
Defense Advanced Research Projects Agency (DARPA).} \and Sanjeev Khanna\thanks{Department of Computer and Information Science, University of Pennsylvania,
Philadelphia, PA 19104. Email: {\tt sanjeev@cis.upenn.edu}.  Supported in part by National Science Foundation grant CCF-1116961.} \and Madhu Sudan\thanks{Microsoft Research New England, One Memorial Drive, Cambridge, 
MA 02142, USA. {\tt madhu@mit.edu}}}

\maketitle

\begin{abstract}
We consider the problem of estimating the value of max cut in a graph in the streaming model of computation. 
At one extreme, there is a trivial $2$-approximation for this problem that uses only $O(\log n)$ space, namely, count the number of edges and output half of this value as the estimate for max cut value. On the other extreme, if one allows $\tilde{O}(n)$ space, then a near-optimal solution to the max cut value can be obtained by storing an $\tilde{O}(n)$-size sparsifier that essentially preserves the max cut. An intriguing question is if poly-logarithmic space suffices to obtain a non-trivial approximation to the max-cut value (that is, beating the factor $2$). It was recently shown that the problem of estimating the size of a maximum matching in a graph admits a non-trivial approximation in poly-logarithmic space.

Our main result is that any streaming algorithm that breaks the $2$-approximation barrier requires $\tilde{\Omega}(\sqrt{n})$ space even if the edges of the input graph are presented in random order. Our result is obtained by exhibiting a distribution over graphs which are either bipartite or $\frac{1}{2}$-far from being bipartite, and establishing that $\tilde{\Omega}(\sqrt{n})$ space is necessary to differentiate between these two cases. Thus as a direct corollary we obtain that $\tilde{\Omega}(\sqrt{n})$ space is also necessary to test if a graph is bipartite or $\frac{1}{2}$-far from being bipartite. 
 We also show that for any $\epsilon > 0$, any streaming algorithm that obtains a $(1 + \epsilon)$-approximation to the max cut value when edges arrive in  adversarial order requires $n^{1 - O(\epsilon)}$ space, implying that $\Omega(n)$ space is necessary to obtain an arbitrarily good approximation to the max cut value.
 \end{abstract}
\setcounter{page}{0}

\thispagestyle{empty}
\newpage

\section{Introduction}

In the MAX-CUT problem an undirected graph is given as input, and the goal is to find a bipartition of the vertices of this graph (or, equivalently, a {\em cut}) that maximizes the number of edges that cross the bipartition. It is easy to find a solution to MAX-CUT that achieves a $2$-approximation: a uniformly random bipartition achieves this goal. The Goemans-Williamson algorithm~\cite{GW95} approximates MAX-CUT to a factor of $1.138$\footnote{The approximation ratio achieved by the Goemans-Williamson algorithm is usually stated as $0.878\cdots$ in the literature, but in this paper, we use the convention that approximation ratios are larger than $1$.} using semidefinite programming. This is best possible assuming the Unique Games Conjecture~\cite{KhotKMO07}.  In~\cite{Trevisan09} Trevisan presented an algorithm that achieves approximation ratio of $1.884$ using spectral techniques. A combinatorial algorithm that achieves approximation ratio strictly better than $2$ was presented by~\cite{KS11}. It is known that {\em dense graphs} are an easy case for this problem: polynomial time approximation schemes exist in graphs with $\Omega(n^2)$ edges~\cite{FK96, Vega96, AVKK03, AFNS09, MS08}. 


All results mentioned above optimize the approximation ratio subject to polynomial (sometimes nearly linear) time complexity. However, in many settings {\em space complexity} of algorithms is a crucial parameter to optimize. For example, in applications to big data analysis one would like to design algorithms capable of processing large amounts of data using only few (ideally, a single) pass over the input stream and using limited (i.e. sublinear in input size) space. The {\em streaming model of computation}, formalized by Alon, Matias, and Szegedy~\cite{AlonMS96}, precisely captures this setting. Recently, the problem of developing streaming algorithms for fundamental graph problems has attracted a lot of attention in the literature (e.g. sparsifiers~\cite{anh-guha, kl11, agm_pods}, spanning trees~\cite{agm}, matchings~\cite{guha-ahn-1, guha-ahn-3, gkk:streaming-soda12, kap13, GO12, HRVZ13}, spanners~\cite{agm_pods, kw14}).  However, not much is known thus far on the space complexity of solving the MAX-CUT problem.

The goal of this paper is to understand how much space is necessary to obtain a good approximation to MAX-CUT value when the algorithm is given a single pass over a stream of edges of the input graph. The algorithms for MAX-CUT described above have natural streaming counterparts. For example, the trivial factor $2$ approximation algorithm that outputs a random bipartition leads to a simple factor $2$ approximation in $O(\log n)$ space: simply count the number of edges $m$ in the input graph and output $m/2$. If the input graph is dense, one can see that the techniques of~\cite{FK96, MS08} yield $(1+\e)$-approximation (for any $\eps > 0$) in $\poly(\log n)$ space in the streaming model using sampling. Finally, known results on sparsification in the streaming model~\cite{anh-guha, kl11, agm_pods} show that one can maintain a representation of the graph in $\tilde O(n/\e^2)$ space that preserves all cuts, and hence has sufficient information for obtaining a $(1+\e)$-approximate solution. This state-of-the-art, namely, a $2$-approximation in $O(\log n)$ space, and a $(1 + \e)$-approximation in $\tilde O(n/\e^2)$ space, highlight the following natural question: Can one approximate the max-cut value to a factor strictly better than $2$ in sub-polynomial space (say, poly-logarithmic)?  This is the precisely the question addressed in this work.

\subsection{Our results}

Our main result is that $\tilde{\Omega}(\sqrt{n})$ space is necessary for a streaming algorithm to achieve strictly better than a $2$-approximation.

\begin{theorem}\label{thm:main}
Let $\e>0$ be a constant, and let $G=(V, E), |V|=n, |E|=m$ be an unweighted  (multi) graph. Any algorithm that, given a single pass over a stream of edges of $G$ presented in random order, outputs a $(2-\e)$-approximation to the value of the maximum cut in $G$ with probability at least $99/100$ (over its internal randomness) must use $\tilde \Omega(\sqrt{n})$ space.
\end{theorem}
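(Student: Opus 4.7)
The plan is to reduce the approximation guarantee to a promise/decision problem, construct a hard instance distribution on multigraphs, and then reduce a multi-party one-way communication problem (the \DBHP~problem hinted at by the macro in the preamble) to this decision task in the random-order streaming model.

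I would first observe that if $G$ is bipartite then $\text{MAX-CUT}(G) = m$, while if $G$ is $\frac{1}{2}$-far from bipartite (every bipartition of $V$ leaves at least $m/2$ edges uncut), then $\text{MAX-CUT}(G) \le m/2$. Consequently, any streaming algorithm achieving a $(2-\e)$-approximation can be used to distinguish these two cases with the same failure probability. It therefore suffices to construct a distribution $\mathcal{D}$ over $n$-vertex multigraphs supported roughly evenly on bipartite graphs and on $\frac{1}{2}$-far-from-bipartite graphs, and to show that no $\tilde{o}(\sqrt{n})$-space single-pass streaming algorithm can tell the two cases apart when the edge stream is uniformly randomly ordered.

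Next I would construct $\mathcal{D}$ from $T = \poly(\log n)$ independent uniformly random perfect matchings $M_1, \dots, M_T$ on $[n]$, together with a uniformly random hidden bipartition $x \in \{0,1\}^n$. In the \YES case every matching edge $\{u,v\}$ with $x_u = x_v$ is ``corrected'' (e.g.\ by replacing $v$ with an independent uniform vertex on the opposite side of $x$), so that the resulting multigraph is bipartite with respect to $x$. In the \NO case edges are left unchanged, and a standard second-moment argument for random sparse multigraphs shows that with high probability the graph is $(\tfrac{1}{2}-o(1))$-far from bipartite. The crucial structural property is that each individual edge has essentially the same marginal distribution over vertex pairs in both cases; only the joint correlation of the edges with $x$ distinguishes the two distributions, which is what forces an $\tilde{\Omega}(\sqrt{n})$ memory footprint.

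The core step is then the communication lower bound. I would phrase the distinguishing task as a $T$-player distributional variant of Boolean Hidden Partition, in which player $i$ receives the edges of $M_i$ in its appropriately correlated form and the players speak in one-way order. A one-pass streaming algorithm using space $s$ simulated across $T$ players yields a protocol of total communication $O(sT)$; so an $\tilde\Omega(\sqrt{n})$ lower bound on \DBHP~gives $s = \tilde\Omega(\sqrt{n})$. The hardness of \DBHP~should follow by lifting the classical $\Omega(\sqrt{n})$ one-way lower bound for Boolean Hidden Matching (Gavinsky~\etal) to the multi-player regime via a hybrid argument: the posterior on $x$ shifts by at most $\tilde O(s/\sqrt{n})$ per round, and after $T$ rounds the algorithm still cannot reconstruct enough about $x$ to succeed unless $s = \tilde\Omega(\sqrt{n})$. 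The main obstacles are (i) showing that the concatenation of $T$ random matching blocks, internally shuffled, induces a distribution that is $o(1)$-close in total variation to a uniformly random permutation of the edge multiset, so that the reduction genuinely produces a random-order stream; and (ii) carrying out the multi-player information-theoretic hybrid cleanly enough that the per-round information gain bound tight to the $\sqrt{n}$ threshold of Boolean Hidden Matching survives $T$ rounds of composition. Step (ii) is the real technical heart of the argument, since any loss of a $\sqrt{T}$ or $T$ factor there would degrade the final bound below $\sqrt{n}$.
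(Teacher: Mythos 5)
Your high-level strategy matches the paper: reduce a $(2-\e)$-approximation to distinguishing bipartite from $\frac{1}{2}$-far-from-bipartite, build a multi-block hard distribution, and use a hybrid argument to reduce to a single-shot communication lower bound of Gavinsky-type. But there are two places where your plan diverges from what actually works, and the first is a genuine gap.

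The matching-based construction does not survive the transition to uniformly random edge order, and you cannot repair it the way you hope in obstacle (i). Your instance is a $T$-regular multigraph built from $T$ perfect matchings; a uniformly random permutation of its $Tn/2$ edges does not decompose into contiguous blocks that are each (approximately) a perfect matching --- the first $n/2$ edges of a random permutation of a $T$-regular graph will almost surely contain many collisions on vertices, so the block structure you need for the player-by-player simulation simply is not there. This is exactly the obstruction the paper points out, and why it replaces matchings with subcritical Erd\H{o}s--R\'enyi blocks $\G_{n,\alpha/n}$: the union of $k$ such blocks \emph{does} have the property that a uniformly random permutation of its edges is $O(\alpha\log(1/\alpha))$-close in total variation to the ``canonical random ordering'' that reveals $E'_1,\dots,E'_k$ in sequence (Lemma~\ref{lm:perm} in the paper). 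The price is that the underlying single-round primitive is no longer Boolean Hidden Matching but a generalization (\DBHP) where Bob's combinatorial object is a sparse random graph rather than a matching, and the Fourier-analytic bound of Gavinsky~\etal\ has to be redone for unions of paths rather than single edges (Lemmas~\ref{lm:paths} and~\ref{lm:weight}). Also, your ``correction'' step in the \YES case (resampling one endpoint) breaks the clean correspondence in which the \YES and \NO instances differ only in whether $w=Mx$ or $w$ is uniform; the paper keeps the marginal distribution of the graph identical in both cases and makes the distinction purely a property of the parity labels, which is what makes the reduction to a hidden-partition communication problem exact.

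Your step (ii), the multi-round hybrid, is where you are working harder than you need to. You propose a direct $T$-player information-cost argument (``posterior shifts by $\tilde O(s/\sqrt{n})$ per round''), which you correctly flag as the technical heart and the place where composition losses could kill the bound. The paper sidesteps multi-player communication entirely: it proves that for any algorithm distinguishing $\D^Y$ from $\D^N$, there must exist a single \emph{informative index} $j^*$ at which the total variation distance between the \YES and \NO distributions of the algorithm's memory state increases by $\Omega(1/k)$ (this is a trivial pigeonhole/hybrid over the $k$ phases, Lemma~\ref{lm:inf}). Alice simulates phases $1,\dots,j^*$, Bob runs phase $j^*+1$, and the two players solve the \emph{two-party} \DBHP with advantage $\Omega(1/k)$. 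Because the 2-party lower bound is proved in the form ``advantage $\gamma$ requires $\Omega(\gamma\sqrt{n})$ communication'' and $k=\Theta(\log n)$, the $1/k$ loss only costs a polylog factor. So the right move is to prove a 2-party lower bound that is sharp for \emph{small} advantage, not to attempt a genuinely multi-party argument.
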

Since a $2$-approximation can be obtained in $O(\log n)$ space by simply counting the edges,  our result rules out the possibility of any non-trivial approximation in sub-polynomial space, even for random streams. This makes progress on an open problem posed at the Bertinoro workshop on sublinear and streaming algorithms in 2011~\cite{ber11}.
The same conclusion carries over when the stream contains i.i.d. samples of the edge set of $G$.

\begin{theorem}\label{thm:main-iid}
Let $\e>0$ be a constant, and let $G=(V, E), |V|=n, |E|=m$ be an unweighted simple graph. Moreover, let $\ell$ be any positive integer less than  $\log^C n$ for some constant $C > 0$.
Any algorithm that, given  a single pass over a stream of $\ell \cdot n$ i.i.d. samples of $G$ presented in random order, outputs a $(2-\e)$-approximation to the value of the maximum cut in $G$ with probability at least $99/100$ (over its internal randomness) must use $\tilde \Omega(\sqrt{n})$ space.
\end{theorem}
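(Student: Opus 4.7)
The natural plan is to derive Theorem~\ref{thm:main-iid} from Theorem~\ref{thm:main}, exploiting the crucial fact that Theorem~\ref{thm:main} applies to multigraphs. The strategy is to interpret $\ell n$ i.i.d.\ samples from a simple graph $G$ as (statistically close to) the first $\ell n$ edges of a uniformly random permutation of a multigraph $G_t$, where $G_t$ is obtained from $G$ by duplicating each edge $t$ times. The MAX-CUT value of $G_t$ equals $t$ times that of $G$, and $G_t$ is bipartite if and only if $G$ is, so any $(2-\e)$-approximation algorithm for one immediately yields one for the other.

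First I would choose the duplication factor $t$ so that $tm \gg (\ell n)^2$; since $\ell n \leq n \log^C n$, the choice $t = n \log^{2C+1} n / m$ (rounded up to an integer) suffices. A standard coupling argument bounding the total variation distance between sampling with and without replacement from a pool of size $tm$ then shows that drawing $\ell n$ edges uniformly without replacement from $E(G_t)$ and projecting each to its underlying $G$-edge is within TV distance $O((\ell n)^2 / (tm)) = o(1)$ of $\ell n$ i.i.d.\ samples from $E(G)$.

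Given any i.i.d.-samples algorithm $\mathcal{A}$ with success probability $99/100$ and space $s$, the reduction constructs a random-order algorithm $\mathcal{A}'$ for $G_t$: $\mathcal{A}'$ reads the first $\ell n$ edges of its stream, projects each to the corresponding edge of $G$, feeds these to $\mathcal{A}$, and outputs $\mathcal{A}$'s answer. The coupling guarantees $\mathcal{A}'$ succeeds with probability at least $99/100 - o(1) > 1/2$ on $G_t$. Applying Theorem~\ref{thm:main} to $G_t$ (which has $n$ vertices and $tm$ edges) yields $s = \tilde\Omega(\sqrt{n})$, as required.

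The main obstacle I anticipate is ensuring that the hard distribution witnessing Theorem~\ref{thm:main} can be arranged to consist precisely of multigraphs of the form $G_t$, where $G$ ranges over a simple-graph hard distribution for Theorem~\ref{thm:main-iid}. This should follow by rerunning the proof of Theorem~\ref{thm:main} with the natural parameter choice: since edge duplication preserves bipartiteness and scales cut values uniformly, the same bipartite-versus-$\tfrac{1}{2}$-far-from-bipartite construction, thickened by the factor $t$, should yield a valid hard distribution supported on $G_t$-type multigraphs without new ideas beyond those already present in the proof of Theorem~\ref{thm:main}.
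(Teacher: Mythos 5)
Your plan is to reduce Theorem~\ref{thm:main-iid} to Theorem~\ref{thm:main} as a black box, by duplicating each edge of $G$ a factor $t$ times and coupling the first $\ell n$ edges of a random permutation of $G_t$ with $\ell n$ i.i.d.\ samples from $E(G)$. The coupling step is fine (modulo a small arithmetic slip: for the dense hard instances $G=K_{P,Q}$ or $K_n$ with $m=\Theta(n^2)$, your formula $t=n\log^{2C+1}n/m$ rounds to $1$, and $tm\gg(\ell n)^2$ fails; you would instead want $t=\Theta(\log^{2C+1}n)$). The real problem is in the final step: you cannot apply Theorem~\ref{thm:main} to $G_t$. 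Theorem~\ref{thm:main} is a lower bound against algorithms that succeed on \emph{all} multigraphs, proved by exhibiting a specific hard distribution $\D$ (Section~\ref{sec:dist}) on which any low-space algorithm fails. Your constructed algorithm $\mathcal{A}'$ is only guaranteed correct on multigraphs of the form $G_t$, where $G$ is a simple graph from your chosen hard family. But $\D$ is not supported on $G_t$-type multigraphs: a sample from $\D$ is a union of $k=\Theta(1/(\alpha\e^2))$ subcritical Erd\H{o}s--R\'{e}nyi graphs, has only $\Theta(n/\e^2)$ edges in total, and has most edge multiplicities equal to $1$ with a sparse scattering of $2$'s. So $\mathcal{A}'$ can freely fail on $\D$ without contradicting anything, and no space lower bound follows.

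You anticipate this issue in your final paragraph and suggest ``rerunning the proof of Theorem~\ref{thm:main} with the natural parameter choice'' over a hard distribution supported on $G_t$-type multigraphs. This is not a parameter adjustment: it would require replacing the phase structure of $\D$ (which is essential to Lemma~\ref{lm:perm}, to the reduction in Lemma~\ref{lm:reduction} via the notion of an informative index, and to the i.i.d.\ arrival of subcritical Erd\H{o}s--R\'{e}nyi fragments in each phase) with something compatible with a random permutation of an edge set in which every $G$-edge appears with multiplicity exactly $t$. Nothing in the paper's machinery is formulated in those terms. The paper instead proves Theorem~\ref{thm:main-iid} by going the opposite direction: it keeps the same hard distribution $\D$, tunes $k$ and $\alpha$ so that the canonical random ordering of $\D$ is within $o(1)$ total-variation distance of $\ell n$ i.i.d.\ samples from $K_{P,Q}$ (\YES case) or $K_n$ (\NO case) — this is the content of Claims~\ref{cl:iid-yes} and~\ref{cl:iid-no} together with the choice $\alpha=1/(\ell^3\log n)$ and $k=C\ell/(\alpha\e^2)$ — and then invokes Lemma~\ref{lm:reduction} and Lemma~\ref{lm:lb-final} directly, with no detour through Theorem~\ref{thm:main} or through thickened multigraphs. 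So your proposal as written has a genuine gap at the point where it invokes Theorem~\ref{thm:main}, and closing that gap would in effect mean redoing the proof of Theorem~\ref{thm:main-iid} rather than reducing to Theorem~\ref{thm:main}.
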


Theorem~\ref{thm:main-iid} shows that it is hard to distinguish between random bipartite graphs and random non-bipartite graphs (with average degree about $1/\e^2$) presented as a stream of i.i.d.  samples of the edge set using substantially less than $\sqrt{n}$ space. We note that this result is tight up to polylogarithmic factors for our input distribution. A nearly matching algorithm is provided by the result of ~\cite{KKR04} for testing bipartiteness in graphs whose minimum and maximum degrees are within a constant factor of the average (algorithm \textsc{Test-Bipartite-Reg} in~\cite{KKR04}). Their algorithm performs $\tilde O(\sqrt{n})$ random walks of length $L=\poly(\log n)$ starting from a uniformly random node in $V$, and tests if the sets of vertices reached after an even number of steps intersects the set of vertices reached after an odd number of steps.  It is easy to see that this algorithm can be implemented in $\tilde O(\sqrt{n})$ space using a single pass over a stream of $\ell\cdot n$ i.i.d. samples of the input graph as long as $\ell\geq C' L \log n$ for a sufficiently large constant $C'>0$. Indeed, in order to run a randon walk it suffices to maintain the current vertex that the walk is at  and advance the walk one step as soon as an edge incident on the current node arrives in the stream. It takes $m/d=O(n)$ samples for the next edge incident on the current node to arrive, and hence $\ell\geq C'L\log n$ samples suffice to simulate a random walk of length $L$.

Finally, we also show that when the stream is adversarially ordered, any algorithm that can achieve an arbitrarily good approximation to the maxcut value, essentially requires linear space.

\begin{theorem}\label{thm:1pluseps}
For any $t\geq 2$ obtaining a $(1+1/(2t))$-approximation to the value of maxcut in the single pass adversarial streaming setting requires $\Omega(n^{1-1/t})$ space.
\end{theorem}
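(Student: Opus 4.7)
The plan is to reduce from the Boolean $t$-Hidden Hypermatching problem $\mathrm{BHH}_t$ in the one-way two-party communication model. In $\mathrm{BHH}_t$ Alice holds $x\in\{0,1\}^N$ and Bob holds a perfect $t$-hypermatching $M$ on $[N]$ together with $w\in\{0,1\}^{N/t}$, and the task is to distinguish $Mx\oplus w=0^{N/t}$ from $Mx\oplus w=1^{N/t}$. The one-way communication complexity of this problem is $\Omega(N^{1-1/t})$ by the result of Verbin and Yu, and this is the bound we ultimately invoke.

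The reduction constructs a multigraph $G$ on $n=\Theta(N)$ vertices as follows. For each $i\in[N]$ introduce a pair of twin vertices $v_i,v_i'$, and let Alice append the edge $(v_i,v_i')$ to the stream, so that in any bipartition of $G$ the two twins must lie on opposite sides; think of the side of $v_i$ as a guess for $x_i$. For each hyperedge $e=\{i_1,\dots,i_t\}\in M$, Bob appends a cycle of length $2t$ alternating through the twin pairs $\{v_{i_j},v_{i_j}'\}$, choosing for each $j$ the entry/exit vertex of the pair as a function of $w_e$ so that the cycle is 2-colorable consistently with Alice's twin edges exactly when $\bigoplus_j x_{i_j}=w_e$. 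Since $M$ is a matching, Bob's cycles are edge-disjoint and share vertices only at the twin pairs.

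In the YES instance every one of Bob's $N/t$ cycles is consistent with a global 2-coloring respecting the twin edges, so $\mathrm{maxcut}(G)$ equals the total number of edges $m$. In the NO instance every cycle forces at least one uncut edge in any twin-respecting bipartition, and flipping the colors of a twin pair to save a cycle edge turns that twin edge itself into an uncut one, so no local reshuffling helps; this yields $\mathrm{maxcut}(G)\le m-\Omega(N/t)$. With gadget sizes tuned appropriately, one verifies that the ratio of YES to NO maxcut is at least $1+\tfrac{1}{2t}-o(1)$, so any $(1+\tfrac{1}{2t})$-approximation of the maxcut value distinguishes the two cases.

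Ordering Alice's edges before Bob's yields a valid adversarial single-pass stream, and any streaming algorithm using $s$ bits of memory becomes a one-way protocol in which Alice sends the algorithm's memory state to Bob, after which Bob continues the simulation on his edges; the $\Omega(N^{1-1/t})=\Omega(n^{1-1/t})$ lower bound on $\mathrm{BHH}_t$ then gives Theorem~\ref{thm:1pluseps}. The hard part is the design and analysis of the cycle gadget: one must argue that in the YES case the natural coloring simultaneously cuts every cycle edge and every twin edge, and in the NO case that twin rigidity combined with hyperedge disjointness forces a loss of at least one cycle edge per hyperedge that cannot be globally amortized away. Once this combinatorial accounting is in place, the space lower bound follows directly from the $\mathrm{BHH}_t$ communication lower bound.
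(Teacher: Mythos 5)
Your high-level plan --- reduce from $\mathrm{BHH}^t_n$ by constructing a graph whose edge-disjoint cycles encode the parities $Mx \oplus w$, then use the standard streaming-to-one-way-communication translation --- is the approach the paper takes, and that translation step is fine. However, the gadget you propose has a fundamental gap: Alice's entire contribution to the stream is the fixed set of twin edges $(v_i,v_i')$, which does not depend on $x$ at all, while Bob's cycles depend only on $(M,w)$. Thus the final graph $G$ is a deterministic function of $(M,w)$ alone, and its max-cut value is identical in the YES case ($Mx\oplus w=0^{N/t}$) and the NO case ($Mx\oplus w=1^{N/t}$) for the same $(M,w)$. No streaming algorithm, whatever its approximation quality, can distinguish the two cases from such a stream, so the reduction collapses. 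The desired property --- that Bob's cycle is ``2-colorable consistently with Alice's twin edges exactly when $\bigoplus_j x_{i_j}=w_e$'' --- cannot hold, because 2-colorability and max-cut are properties of the $x$-independent graph $G$ and simply do not see $x$.

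The missing ingredient is that Alice must encode $x$ in her part of the stream. The paper does this with a four-vertex gadget $\{a_i,b_i,c_i,d_i\}$ per coordinate: Alice always adds $(a_i,b_i)$ and $(c_i,d_i)$, and additionally $(a_i,d_i)$ if $x_i=0$ or $(a_i,c_i)$ if $x_i=1$, so that the internal $a_i\leadsto d_i$ path has length $1+x_i$. Bob's connector edges (plus one closing edge whose endpoint depends on $w_e$) chain each hyperedge's gadgets into a single cycle of length $2t+w_e+\sum_j x_{i_j}$, so the cycle is even precisely when $\bigoplus_j x_{i_j}=w_e$. In the YES case the graph is bipartite and $\mathrm{maxcut}=m$; in the NO case the $n/t$ vertex-disjoint odd cycles force $\mathrm{maxcut}\le m-n/t$. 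You also leave the quantitative gap unjustified: ``with gadget sizes tuned appropriately'' elides the calculation that actually determines the achievable constant (with $m=4n$ one gets a ratio of $4t/(4t-1)$, not $1+1/(2t)$ outright), and that accounting must be done explicitly to match the theorem statement.
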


\paragraph{Recent related work.} Independently and concurrently, the 
task of finding lower bounds on the space required by streaming algorithms for finding approximate max-cuts in graphs was also 
explored by Kogan and Krauthgamer~\cite{kogan-krauthgamer14}. In particular, they also prove a theorem that is qualitatively similar to Theorem~\ref{thm:1pluseps}. Our proofs 
are also similar, though the exact gadget used in the reduction is somewhat different, leading to slightly different constants. Theorems~\ref{thm:main} and~\ref{thm:main-iid} in our 
work however seem new even given their results.

\subsection{Our techniques}

Our starting point is the lower bound for the so-called {\bf Boolean Hidden Matching (BHM)} problem due to Gavinsky et al.~\cite{GKKRW07} and its extension by Verbin and Yu~\cite{VY11}. {\bf BHM} is a two party one-way communication problem. Alice's input in {\bf BHM} is a boolean vector $x\in \bool^n$ and Bob's input is a matching $M$  of size $r=\Theta(n)$ on the set of coordinates $[n]$, as well as a vector $w\in \bool^r$. In the \YES case the vector $w$ satisfies $w=Mx$, where we identify the matching $M$ with its $r\times n$ edge incidence matrix, and in the \NO case $w=Mx\oplus 1^r$. In other words, in the \YES case endpoints of every edge $e=(u, v)\in M$ satisfy $x_u+x_v=w_{uv}$ and in the \NO case $x_u+x_v=w_{uv}+1$. Here arithmetic is modulo $2$. It was shown in ~\cite{GKKRW07}  that the randomized one-way communication complexity of {\bf BHM} is $\Omega(\sqrt{n})$. The first use of {\bf BHM} for streaming lower bounds was due to~\cite{VY11},  who also defined and proved lower bounds for a more general problem called {\bf Boolean Hidden Hypermatching} and used to to prove lower bounds for the streaming complexity of cycle counting, sorting by reversals and other problems.

Our Theorem~\ref{thm:1pluseps} is based on a simple reduction from the {\bf Boolean Hidden Hypermatching} problem of Verbin and Yu~\cite{VY11}. It shows that $(1+\e)$-approximation to maxcut value requires at least $n^{1-O(\e)}$ space when the stream is presented in adversarial order (section~\ref{sec:1pe}). This reduction is similar in spirit to the reduction from cycle counting presented in Verbin and Yu. The graph instances produced by the reduction contain about $\e n$ cycles of length about $1/\e$, and the length of these cycles is even in the \YES case and odd in the \NO case.  While this rules out a $(1+\e)$-approximations in small space, the approach appears to not be sufficiently robust for the proof of our main result in that (1) it heavily relies on the adversarial arrival order, and (2) it does not seem to extend to the factor $(2-\e)$-approximation, where we would need to rule out algorithms that distinguish between graphs that are essentially bipartite from graphs that are essentially as far from bipartite as possible.

We get around both complications by using the following approach. Our input graph instances are essentially random Erd\H{o}s-R\'{e}nyi graphs that are bipartite in the \YES case and non-bipartite in the \NO case. In order to achieve a factor $(2-\e)$-factor gap in maxcut value we choose the expected degree of a node to be $\Theta(1/\e^2)$ (section~\ref{sec:dist}).  The graphs are revealed to the algorithm in $\Omega(1/\e^2)$ {\em phases}, essentially corresponding to an $\Omega(1/\e^2)$-party one-way communication game. This allows us to ensure that graphs that arrive in each phase are subcriticial Erd\H{o}s-R\'{e}nyi graph, meaning that they are mostly unions of $O(\log n/\log\log n)$ size subtrees, and unlikely to contain cycles, and can thus convey only `local' information. While this distribution is natural, it is not immediately clear how to analyze it using techniques developed for the {\bf Boolean Hidden (Hyper)matching} problem. There are two issues here that we describe below. 

First, the {\bf BHM} problem is a two-party communication problem, while we are interested  in a $\Omega(1/\e^2)$-party communication game. However, we give a reduction from the {\bf BHM} problem (rather, a variation which we call the {\bf (Distributional) Boolean Hidden Partition} problem, or \DBHP; see below) to the  MAX-CUT problem on our instances. Roughly speaking, we show that any  algorithm  that solves MAX-CUT on our input instances must solve our two-party communication problem in at least one of the phases (see section~\ref{sec:reduction}). 
The second issue is that we would like to prove lower bounds that hold even for the setting where the input stream contains the  edges of the graph in a uniformly random order, but it is very unlikely that contiguous segments of a random stream of an edge set of a graph with average degree $\Omega(1/\e^2)$ form matchings. To remedy this, we introduce what we call the {\bf Boolean Hidden Partition}, or {\bf (D)-BHP} problem (see section~\ref{sec:comm-prob}).  In this problem Alice still gets a binary string $x\in \bool^r$ but Bob gets a general graph $G=(V, E), V=[n]$ together with parity information $w$ on the edges (thus, the special case when $G$ is a matching gives the {\bf BHM} problem of ~\cite{GKKRW07}). We show that this problem has a $\Omega(\sqrt{n})$ lower bound when $G$ is a subcritical Erd\H{o}s-R\'{e}nyi graph in section~\ref{sec:comm-prob}. These two ingredients already give a $\tilde\Omega(\sqrt{n})$ lower bound for streaming algorithms that achieve a factor $(2-\e)$-approximation to MAX-CUT in the adversarial order setting.
We then show that the arrival order of edges in our distribution is in fact close to uniformly random in total variation distance (with proper setting of parameters), yielding Theorem~\ref{thm:main}. Finally, we note that our reduction from MAX-CUT on instances that contain $k=\Theta(1/\e^2)$ `phases' turns out to be robust with respect to the number of phases $k$ -- the loss in terms of parameter $k$ is only polynomial. This allows us to also prove a lower bound for the setting where the input stream contains a sequence of $\ell\cdot n$ i.i.d. samples of the edge set of input graph for $\ell=\poly(\log n)$, yielding Theorem~\ref{thm:main-iid}.

\subsection{Organization} Section~\ref{sec:prelim} introduces some relevant concepts and notation. 
Section~\ref{sec:1pe} establishes Theorem~\ref{thm:1pluseps}. The rest of the paper is devoted to proving Theorem~\ref{thm:main} and Theorem~\ref{thm:main-iid}.
We define a hard input distribution for max cut in Section~\ref{sec:dist}. Then in Section~\ref{sec:comm-prob} we define the communication problem ({\bf Boolean Hidden Partition, BHP}), its distributional version \DBHP, and establish a $\tilde{\Omega}(\sqrt{n})$ lower bound for \DBHP. Section~\ref{sec:reduction} gives the reduction from  \DBHP to MAX-CUT. Theorem~\ref{thm:main} and Theorem~\ref{thm:main-iid} are then proved in Section~\ref{sec:main}.

\section{Preliminaries}\label{sec:prelim}

We will throughout follow the convention that $n$ denotes the number of vertices in the input graph $G$, and $m$ denotes the number of edges. We will use the notation $[n]=\{1,2,\ldots, n\}$. Also, for $x, y\in \bool$ we write $x+y$ or $x\oplus y$ denotes the sum of $x$ and $y$ modulo $2$.

\begin{definition}[Maxcut problem]
In the maxcut problem, we are given an unweighted graph $G=(V, E)$, and the goal is to output the value
$OPT:=\max_{P\cup Q=V, P\cap Q=\emptyset} |E\cap (P\times Q)|$,
that is, the maximum, over all bipartitions of $V$, of the number of edges of $G$ that cross the bipartition.
\end{definition}

Note that for any bipartite graph $G$, the maxcut value is $m$, and in general, the maxcut value of a graph is related to how far it is being from bipartite -- a notion formalized below.

\begin{definition}[$\beta$-far from bipartite]
For any $\beta \in [0,1/2]$, a graph $G=(V,E)$ is said to be $\beta$-far from being bipartite if any bipartite subgraph $G'$ of $G$ contains at most a $(1 - \beta)$-fraction of edges in $G$.
\end{definition}

If a graph $G$ is $\beta$-far from being bipartite, then maxcut value of $G$ is at most $(1 - \beta)m$.

\begin{definition}[$\gamma$-approximation to maxcut]
Let $G=(V, E)$ be a graph, and let OPT denote the maxcut value of $G$.
A randomized algorithm ALG is said to give a $\gamma$-approximation to maxcut with failure probability at most $\delta\in [0, 1/2)$ if on any input graph $G$, ALG outputs a value in the interval $[OPT/\gamma, OPT]$ with probability at least $1-\delta$.
\end{definition}

We will simply use the phrase $\gamma$-approximation algorithm for maxcut to refer to a $\gamma$-approximation algorithm with failure probability at most  $\delta = 1/4$.

Our focus will be on approximation algorithms for maxcut in the streaming model of computation where the edges of the graph are revealed to the algorithm in some order and the algorithm is constrained to use at most $c = c(n)$ space for some given space bound $c$. We will consider both the {\em adversarial} arrival model where the edges of the graph arrive in an order chosen by an oblivious adversary (i.e. adversary does not know any internal coin tosses of the algorithm) and the {\em random} arrival model where the edges of the graph arrive in a randomly permuted order (where the permutation is chosen uniformly at random). All our results concern single-pass streaming when the algorithm gets to see the edges of the graph exactly once.

Since the maxcut value is always bounded by $m$, and is always at least $m/2$ (take a uniformly random bipartition, for instance), there is a simple deterministic $2$-approximation streaming algorithm that uses $O(\log n)$ space: just count the number of edges $m$ and output $m/2$. On the other hand, for any $\eps > 0$, there is an $\tilde{O}(n/\eps^2)$ space streaming algorithm that computes a cut-sparsifier for a graph even in the adversarial arrival order. We can thus compute a $(1 + \e)$-approximation to max cut value in $\tilde{O}(n/\eps^2)$ space by first computing the sparsifier, and then outputting the maximum cut value in the sparsifier.

It is easy to see that any algorithm that computes a $\gamma$-approximation to maxcut value distinguishes between bipartite graphs and graphs that are $(1 - 1/\gamma)$-far from being bipartite. Thus in order to show that no streaming algorithm using space $c$ can achieve a $\gamma$-approximation with failure probability at most $\delta$, it suffices to show that no streaming algorithm using space $c$ can distinguish between bipartite graphs and graphs that are $(1 - 1/\gamma)$-far from being bipartite with probability at least $1 - \delta$. 

We conclude this section by defining the notion of total variation distance between probability distributions.
For a random variable $X$ taking values on a finite sample space $\Omega$ we let $p_X(\omega), \omega\in \Omega$ denote the pdf of $X$. For a subset $A\subseteq \Omega$ we use the notation $p_X(A):=\sum_{\omega\in A} p_X(\omega)$.
We will use the total variation distance $||\cdot ||_{tvd}$ between two distributions:
\begin{definition}[Total variation distance]
Let $X, Y$ be two random variables taking values on a finite domain $\Omega$.  We denote the pdfs of $X$ and $Y$ by $p_X$ and $p_Y$ respectively. The total variation distance between $X$ and $Y$ is given by
$V(X, Y)=\max_{\Omega'\subseteq \Omega} (p_X(\Omega')-p_Y(\Omega'))=\frac1{2}\sum_{\omega\in \Omega} |p_X(\omega)-p_Y(\omega)|$.
We will write $||X-Y||_{tvd}$ to denote the total variation distance between $X$ and $Y$.
\end{definition}

\section{An $n^{1-O(\e)}$ Lower Bound for $(1+\e)$-Approximation}\label{sec:1pe}

As a warm-up to our main result, we show here that for any $\e > 0$, a $(1+\e)$-approximation randomized streaming algorithm for max cut in the adversarial streaming model requires at least $n^{1-O(\e)}$ space. We will establish this result by a reduction from the {\bf Boolean Hidden Hypermatching} problem ({\bf BHH}) defined and studied by~\cite{VY11}.

\begin{definition}[$\text{BHH}_n^t$, Boolean Hidden Hypermatching]
The Boolean Hidden Hypermatching problem is a communication complexity problem where Alice gets a boolean vector
$x\in \bool^n$ where $n=2k t$ for some integer $k$, and Bob gets a perfect hypermatching $M$ on $n$ vertices where each edge contains $t$ vertices and a boolean vector $w$ of length $n/t$. Let $Mx$ denote the length $n/t$ boolean vector $(\bigoplus_{1\leq i\leq t} x_{M_{1, i}}, \ldots, \bigoplus_{1\leq i\leq t} x_{M_{n/t, i}})$ where $\{M_{1, 1,}, \ldots, M_{1, t}\}, \ldots, \{M_{n/t, 1}, \ldots, M_{n/t, t}\}$ are the edges of $M$. It is promised that either $Mx\oplus w=1^{n/t}$ or $Mx\oplus w=0^{n/t}$. 
The goal of the problem is for Bob to output
\YES when $Mx\oplus w=0^{n/t}$ and \NO when $Mx\oplus w=1^{n/t}$ ( $\oplus$ stands for addition modulo $2$).
\end{definition}

The following lower bound on the one-way communication complexity of $\text{BHH}_n^t$ was established in~\cite{VY11}.

\begin{theorem}{\rm \cite{VY11}}
\label{thm:bhh}
Any randomized one-way communication protocol for solving $\text{BHH}^t_n$ when $n=2kt$ for some integer $k\geq 1$  that succeeds with probability at least $3/4$ requires $\Omega(n^{1-1/t})$ communication.
\end{theorem}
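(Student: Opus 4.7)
My plan is to prove this lower bound by Fourier-analytic methods, extending the approach of Gavinsky et al.~\cite{GKKRW07} for the special case $t=2$ (Boolean Hidden Matching) to hypermatchings of arity $t$. First, I invoke Yao's minimax principle so that it suffices to exhibit a distribution $\mu$ on inputs under which every deterministic one-way protocol using $c=o(n^{1-1/t})$ bits of communication succeeds with probability only $1/2+o(1)$. The natural hard distribution draws $x \in \bool^n$ uniformly, $M$ a uniformly random perfect $t$-uniform hypermatching on $[n]$, and then with equal probability sets $w=Mx$ (the \YES case) or $w = Mx \oplus 1^{n/t}$ (the \NO case).

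A deterministic protocol partitions $\bool^n$ into at most $2^c$ parts according to Alice's message. By a standard averaging argument, it suffices to show that for every ``large'' part $A \subseteq \bool^n$---say of density $\rho := |A|/2^n \geq 2^{-c-O(\log n)}$---the total variation distance between the distributions of Bob's input $(M,w)$ in the \YES and \NO cases, conditioned on $x \in A$, tends to zero. These two conditional distributions on $(M,w)$ differ only by the shift $w \mapsto w \oplus 1^{n/t}$, so the task reduces to bounding $\mathbb{E}_M \|p_M - \sigma p_M\|_1$, where $p_M$ is the distribution of $Mx$ on $\bool^{n/t}$ induced by $x$ uniform in $A$ and $\sigma$ denotes translation by $1^{n/t}$.

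The heart of the proof is Fourier-analytic on the cube. A direct calculation shows that for $T \subseteq [n/t]$ the Fourier coefficient $\widehat{p_M}(T)$ is proportional to $\hat{1}_A(1_{S_T})$, where $S_T = \bigcup_{j \in T} e_j$ is the union of the hyperedges of $M$ indexed by $T$. Because shifting by $1^{n/t}$ only flips the sign of Fourier coefficients at odd levels $|T|$, Cauchy--Schwarz combined with Parseval yields
\[
  \mathbb{E}_M \|p_M - \sigma p_M\|_1^2 \;\leq\; \frac{C \cdot 2^{n/t}}{\rho^2} \sum_{k \text{ odd}} \mathbb{E}_M \!\! \sum_{\substack{T \subseteq M \\ |T|=k}} \hat{1}_A(1_{S_T})^2.
\]
For each fixed odd $k$, symmetry of the uniformly random hypermatching reduces the inner expectation to $\binom{n/t}{k} \cdot \binom{n}{kt}^{-1} \sum_{|S|=kt} \hat{1}_A(1_S)^2$. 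Applying the hypercontractive level-$\ell$ inequality $\sum_{|S|=\ell} \hat{1}_A(S)^2 \leq \rho^2 (C \ln(1/\rho)/\ell)^{\ell}$ with $\ell = kt$, together with the estimate $\binom{n/t}{k}/\binom{n}{kt} = n^{-k(t-1)}$ up to lower-order factors in $k$ and $t$, makes the $k$-th term of the sum decay like $(C' c / n^{1-1/t})^{k}$. Summing this geometric series, the total is $o(1)$ whenever $c = o(n^{1-1/t})$, contradicting the assumed success probability of $3/4$.

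The main technical obstacle is the combined Fourier--combinatorial estimate in the third paragraph: the contribution of every odd $k$ must be controlled simultaneously (not just $k=1$ as suffices for $t=2$), and one must balance the growth of $\binom{n/t}{k}$ against the sharp hypercontractive decay to extract precisely the exponent $1-1/t$ and not merely $1-O(1/t)$. A secondary issue is justifying the restriction to large parts $A$; this is handled by a routine union bound showing that a uniformly random $x$ lies in a part of density at least $2^{-c-O(\log n)}$ with probability $1-o(1)$, and that contributing an error of at most $1/100$ to the overall success probability is harmless.
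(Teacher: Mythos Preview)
The paper does not prove this theorem. Theorem~\ref{thm:bhh} is stated with the attribution \cite{VY11} and invoked as a black box in the proof of Theorem~\ref{thm:1pluseps}; no argument for it appears anywhere in the paper. So there is no ``paper's own proof'' to compare your proposal against.

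That said, your sketch is essentially the Verbin--Yu argument (which in turn extends \cite{GKKRW07} from matchings to $t$-hypermatchings), and its outline is sound. Yao's principle, the Fourier identity $\widehat{p_M}(T)\propto \widehat{1_A}(S_T)$, the symmetry computation
\[
\expect_M\sum_{|T|=k}\widehat{1_A}(S_T)^2 \;=\; \frac{\binom{n/t}{k}}{\binom{n}{kt}}\sum_{|S|=kt}\widehat{1_A}(S)^2,
\]
and the level-$kt$ inequality together give a per-level contribution of order $\bigl((Cc)^t/n^{t-1}\bigr)^k$, which sums to $o(1)$ once $c=o(n^{1-1/t})$. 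Two small corrections to the sketch: (i) with the Fourier normalization used in the paper the prefactor after Cauchy--Schwarz is $O(1)/\rho^2$, not $2^{n/t}/\rho^2$ --- the $2^r$ from Cauchy--Schwarz cancels against the $2^{-2r}$ in $\widehat{p_M}(s)^2$; and (ii) the level-$\ell$ inequality $\sum_{|S|=\ell}\widehat{1_A}(S)^2\le \rho^2(C\ln(1/\rho)/\ell)^\ell$ is only valid for $\ell\le O(\ln(1/\rho))$, so one needs the standard case split for large $k$ (use Parseval $\sum_S\widehat{1_A}(S)^2\le\rho$ together with the rapid decay of $\binom{n/t}{k}/\binom{n}{kt}$). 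Neither point is a genuine obstacle.
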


We now give a proof of Theorem~\ref{thm:1pluseps}, which we restate here for convenience of the reader. The proof is via a reduction from $\text{BHH}^t_n$.

\noindent{\em {\bf Theorem~\ref{thm:1pluseps}}
For any $t\geq 2$ obtaining a $(1+1/(2t))$-approximation to the value of maxcut in the single pass adversarial streaming setting requires $\Omega(n^{1-1/t})$ space.
}

\noindent\begin{proof}
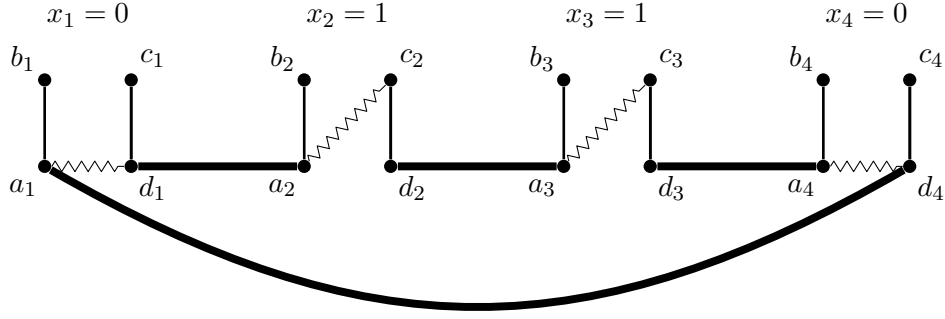
\begin{figure}[h!]
\begin{center}
\tikzstyle{vertex}=[circle, fill=black!100, minimum size=5,inner sep=1pt]
\tikzstyle{svertex}=[circle, fill=blue!30, minimum size=20,inner sep=1pt]
\tikzstyle{evertex}=[circle,draw=none, minimum size=25pt,inner sep=1pt]
\tikzstyle{edge} = [draw,-, color=red!100, very  thick]
\tikzstyle{bedge} = [draw,-, color=green!100, very  thick]
\begin{tikzpicture}[scale=1.15, auto,swap]

    \node at (0.5, 1.75) {$x_1=0$};

    \node at (-0.25, -0.25) {$a_1$};        
    \node at (-0.25, 1.25) {$b_1$};            
    \node at (1.25, 1.25) {$c_1$};                
    \node at (1.25, -0.25) {$d_1$};                
        
    \node[vertex](a1) at (0, 0) {};
    \node[vertex](b1) at (0, 1) {};    
    \node[vertex](c1) at (1, 1) {};
    \node[vertex](d1) at (1, 0) {};
    
    \path[draw, line width=1pt, -] (a1) -- (b1);
    \path[draw, line width=1pt, -] (c1) -- (d1);    

   \draw [-,
line join=round,
decorate, decoration={
    zigzag,
    segment length=5,
    amplitude=2.0,post=lineto,
    post length=1pt
}]  (a1) -- (d1);

    \path[draw, line width=1pt, -] (c1) -- (d1);        

    \node at (3.5, 1.75) {$x_2=1$};

    \node at (2.75, -0.25) {$a_2$};        
    \node at (2.75, 1.25) {$b_2$};            
    \node at (4.25, 1.25) {$c_2$};                
    \node at (4.25, -0.25) {$d_2$};

    \node[vertex](a2) at (3, 0) {};
    \node[vertex](b2) at (3, 1) {};    
    \node[vertex](c2) at (4, 1) {};
    \node[vertex](d2) at (4, 0) {};
    
    \path[draw, line width=1pt, -] (a2) -- (b2);
    \path[draw, line width=1pt, -] (c2) -- (d2);    

   \draw [-,
line join=round,
decorate, decoration={
    zigzag,
    segment length=5,
    amplitude=2.0,post=lineto,
    post length=1pt
}]  (a2) -- (c2);    

    \path[draw, line width=1pt, -] (c2) -- (d2);  
    
    \node at (6.5, 1.75) {$x_3=1$};

    \node at (5.75, -0.25) {$a_3$};        
    \node at (5.75, 1.25) {$b_3$};            
    \node at (7.25, 1.25) {$c_3$};                
    \node at (7.25, -0.25) {$d_3$};

    \node[vertex](a3) at (6, 0) {};
    \node[vertex](b3) at (6, 1) {};    
    \node[vertex](c3) at (7, 1) {};
    \node[vertex](d3) at (7, 0) {};
    
    \path[draw, line width=1pt, -] (a3) -- (b3);
    \path[draw, line width=1pt, -] (c3) -- (d3);    

   \draw [-,
line join=round,
decorate, decoration={
    zigzag,
    segment length=5,
    amplitude=2.0,post=lineto,
    post length=1pt
}]  (a3) -- (c3);    

    \path[draw, line width=1pt, -] (c3) -- (d3);      

    \node at (9.5, 1.75) {$x_4=0$};

    \node at (8.75, -0.25) {$a_4$};
    \node at (8.75, 1.25) {$b_4$};
    \node at (10.25, 1.25) {$c_4$};
    \node at (10.25, -0.25) {$d_4$};
    
    \node[vertex](a4) at (9, 0) {};
    \node[vertex](b4) at (9, 1) {};    
    \node[vertex](c4) at (10, 1) {};
    \node[vertex](d4) at (10, 0) {};

    \path[draw, line width=1pt, -] (a4) -- (b4);
    \path[draw, line width=1pt, -] (c4) -- (d4);    

   \draw [-,
line join=round,
decorate, decoration={
    zigzag,
    segment length=5,
    amplitude=2.0,post=lineto,
    post length=1pt
}]  (a4) -- (d4);

    \path[draw, line width=1pt, -] (c4) -- (d4);  
    
    \path[draw, line width=3pt, -] (d1) -- (a2);
    \path[draw, line width=3pt, -] (d2) -- (a3);    
    \path[draw, line width=3pt, -] (d3) -- (a4);        
    \draw[-, line width=3pt] (d4) to [out = -150, in = -30, looseness = 1.1    ] (a1);
            



    
    
\end{tikzpicture}
\end{center}
\caption{Reduction from Boolean Hidden Hypermatching to approximating max cut value}\label{fig:bhh2maxcut}
\end{figure}

Let \ALG be a streaming algorithm that achieves a $(1+\e)$-approximation to the value of maxcut in the adversarial streaming model using space $c$. We will show that \ALG can be used to obtain a protocol for $\text{BHH}^t_n$ with one-way communication complexity of $c$. The space lower bound then follows from Theorem~\ref{thm:bhh}.

Let $x\in \bool^{n}, n=2kt$ denote the vector that Alice receives. Alice creates her part of the graph that will be given as input to \ALG as follows.
For each $i\in [n]$ create four vertices $a_i, b_i, c_i, d_i$ and add the following edges to the set $E_1$ (see Fig.~\ref{fig:bhh2maxcut}). If 
$x_i=0$, add edges $(a_i, b_i), (c_i, d_i)$ and the edge $(a_i, d_i)$. Otherwise add edges $(a_i, b_i), (c_i, d_i)$ and the edge $(a_i, c_i)$.
Alice then treats $E_1$ as the first half of the stream, runs \ALG on $E_1$ and sends the state of \ALG to Bob.

Bob constructs a set of edges $E_2$ as follows. For each pair $(M_i, w_i)$ that Bob receives he creates $t$ edges as follows. Bob adds the following sets of edges for each hypermatching $M_i$, $i\in [2k]$, depending on $w_i$ (denote the vertices in $M_i$ by $\{j_{1},j_{2},\ldots, j_{t}\}, j_{s-1}\leq j_{s}$ for all $s=2,\ldots, k$). If 
$w_i=0$, add edges $(d_{j_{s-1}}, a_{j_{s}})$ for $j=2,\ldots, t$ and the edge $(a_{j_{t}}, d_{j_{1}})$. Otherwise 
add edges $(d_{j_{s-1}}, a_{j_{s}})$ for $j=2,\ldots, t$ and the edge $(b_{j_{t}}, d_{j_{1}})$.

Bob treats $E_2$ as the second half of the stream, and completes the execution of \ALG on the stream, starting from the state of \ALG that was communicated by Alice. Let $m=|E_1\cup E_2|=(n/t)\cdot (4t)=4n$. If \ALG reports that max-cut is strictly larger than $(1-1/(4t))m$, Bob outputs \YES, otherwise \NO.

We now prove correctness. First note that the graph $E_1\cup E_2$ contains exactly $n/t$ cycles. These cycles can be indexed by hyperedges $M_i$ that Bob received, and they are edge disjoint. Note that the number of edges on the cycle corresponding to hyperedge $M_i$ is equal to $2t+w_i+\sum_{s=1}^t x_{j_s}$, where $M_i=\{j_1,j_2,\ldots, j_t\}$. Thus, the length of the cycle is even iff $\sum_{s=1}^t x_{j_s}=w_i$. Thus, if the $\text{BHH}_n^t$ instance is a \YES instance, the graph $E_1\cup E_2$ is bipartite, and the graph $E_1\cup E_2$ contains $n/t$ edge disjoint cycles otherwise. In the former case the maxcut value is $m$. In the latter case any bipartition will be avoided by at least one edge out of the $n/t$ odd cycles. Thus, the maxcut value is at most $m-n/t\leq (1-1/(4t))m$. Since $\frac{m}{(1-1/(4t))m}\leq 1+1/(2t)$ for all $t\geq 2$, this completes the proof of correctness of the reduction. 
\end{proof}

\section{Hard Input Distribution}\label{sec:dist}
The essence of the hard instances of the previous section were (hidden) odd cycles. The \YES  instances were roughly unions of cycles of length $2t$ while \NO instances were unions of cycles of length $2t+1$. The gap between the maxcut value in the two cases is a factor of roughly $1+ 1/(2t+1)$ and $n^{1 - \Theta(1/t)}$ space is necessary and sufficient for distinguishing these cases. To go further and establish a factor of $2-\epsilon$ hardness, we need a new class of \YES and \NO instances (and distributions supported on these) with the following features. The maxcut value between the two classes should be separated by a factor of $2- \epsilon$. At the same time the \NO instances should not contain small odd cycles (since these can potentially be detected with sublinear space. The class of distributions we consider are (minor variants of) random graphs with linear edge density for the \NO instances, and random bipartite graphs with the same edge density for the \YES instances. It is clear that these two distributions satisfy the properties we seek. Proving streaming lower bounds however is not immediate and in this section we give variants of the above distributions for which we are (in later sections) able to prove $\tilde{\Omega}(\sqrt{n})$ lower bounds on the space complexity of streaming algorithms that distinguish the two.

The basic hard distribution that we will work with is defined in section~\ref{sec:input-dist} and denoted by $\D$. This distribution is a uniform mixture of two distributions: $\D^Y$ (the \YES case distribution) is supported on bipartite random graphs with $\Theta(n/\e^2)$ edges, and $\D^N$ (the \NO case distribution) is supported on random non-bipartite graphs of the same density. (For technical reasons we allow our graphs to be multigraphs, i.e., with multiple edges between two vertices.) The density of our input graphs is crucial for obtaining a $2-\e$ gap. Our input instances are natually viewed as consisting of $k$ phases with $k=\Omega(1/\e^2)$, where during each phase a sparse (or, more precisely, subcritical) random graph is presented to the algorithm. Since the graph is sparse, the algorithm only obtains local information about its structure in each phase. In particular, the graph presented in each round is very likely to be a union of trees of size $O(\frac{\log n}{\log\log n})$. In order to ensure that graphs that appear in individual phases do not contain cycles (i.e. global information), we introduce a parameter $\alpha$ that controls the expected number of edges arriving in each phase. Thus, $\Theta(\alpha n)$ edges of $G$ arrive in each phase in expectation, and we have $k=\Theta(1/(\e^2 \alpha))$ phases. The number of phases $k$ is chosen as $\Theta(1/(\alpha \e^2))$, where $\alpha>0$ satisifies $\alpha>\lba$.

In what follows we first define the Erd\H{o}s-R\'{e}nyi family of random graphs and then define the distribution $\D$.

\subsection{Erd\H{o}s-R\'{e}nyi graphs}\label{sec:rgraphs}

Our input distribution will use Erd\H{o}s-R\'{e}nyi graphs, which we will denote by $\G_{n, p}$. Sampling a graph $G=(V, E)$ from the distribution $\G_{n, p}$ amounts to including every potential edge $\{i, j\}\in {V \choose 2}$ into $E$ independently with probability $p$. Our input distribution will be naturally viewed as consisting of $\Theta(1/(\alpha \e^2))$ {\em phases}. During each phase the graph arriving in the stream will (essentially) be drawn from $\G_{n, \alpha/n}$. Here $\alpha<1$ is a parameter that we will set later. Since $\alpha<1$, our graphs are {\em subcritical}. In particular, they are composed of small connected components (of size $O(\log n/\log\log n)$) with high probability. We will need several structural properties of graphs sampled from $\G_{n, \alpha/n}$, which we now describe.

\begin{definition}[Complex and unicyclic connected component]
Let $G=(V, E)$ be a graph, and let $C\subseteq V$ be a connected component of $G$. The component $C$ is called {\em complex} if the number of edges induced by $C$ is strictly larger than $|C|$, i.e. $|E\cap (C\times C)|>|C|$. The component $C$ is called unicyclic if it induces exactly $|C|$ edges, i.e. when the induced subgraph is connected and has exactly one cycle.
\end{definition}

We will use the fact that complex components are rare in graphs drawn from $\G_{n, \alpha/n}$:
\begin{lemma}[Lemma 2.6.1 in~\cite{durrett}]\label{lm:complex-comp}
The probability that $G=(V, E)$ sampled from $\G_{n, \alpha/n}$ for $\alpha<1$ contains a complex connected component is bounded by $O(\frac1{n}\alpha^2\log^4 n)$.
\end{lemma}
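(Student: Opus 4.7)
My plan is a first-moment (Markov) calculation. Let $X$ be the number of complex connected components in $G$, and decompose $X=\sum_{k\geq 4}X_k$, where $X_k$ counts complex components on exactly $k$ vertices (for $k\leq 3$ we have $X_k=0$ since a simple graph on $k$ vertices has at most $\binom{k}{2}\leq k$ edges). By Markov's inequality, the probability in question is at most $\sum_{k\geq 4}\mathbb{E}[X_k]$, which I will bound term by term.

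For a fixed $k$-subset $S\subseteq V$, I would upper-bound $\Pr[S\text{ is a complex component}]$ by separately requiring (i) $S$ induces a connected subgraph, (ii) $S$ induces at least $k+1$ edges, and (iii) no edges cross between $S$ and $V\setminus S$. Condition (i) yields, via Cayley's formula and a union bound over the at most $k^{k-2}$ spanning trees, a factor of $k^{k-2}\,p^{k-1}$. Condition (ii) then demands at least two additional edges among the $\binom{k}{2}-(k-1)\leq k^2/2$ remaining pairs inside $S$, contributing at most $\binom{k^2/2}{2}p^2 \leq k^4 p^2/2$. Condition (iii) gives $(1-p)^{k(n-k)}\leq e^{-\alpha k(1-k/n)}$. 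Combining,
\[
\Pr[S\text{ is a complex component}] \;\leq\; \tfrac{1}{2}\,k^{k+2}\,p^{k+1}\,e^{-\alpha k(1-k/n)}.
\]
Multiplying by $\binom{n}{k}\leq n^k/k!$, substituting $p=\alpha/n$, and applying Stirling's inequality $k^k/k!\leq e^k/\sqrt{2\pi k}$ yields, for $k\leq n/2$,
\[
\mathbb{E}[X_k]\;\leq\;O\!\left(\frac{\alpha}{n}\, k^{3/2}\, \bigl(\alpha\, e^{1-\alpha}\bigr)^{k}\right).
\]

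The function $g(\alpha):=\alpha e^{1-\alpha}$ satisfies $g(\alpha)\leq 1$ on $(0,1]$ with equality only at $\alpha=1$, so the series $\sum_k k^{3/2} g(\alpha)^k$ converges for any fixed $\alpha<1$. For small $\alpha$ the sum is dominated by the $k=4$ term, giving $O(\alpha^5/n) \subseteq O(\alpha^2/n)$. For $\alpha$ near $1$, I would truncate the sum at $k=C\log n$, justifying the cutoff by the standard exponential tail bound that subcritical Erd\H{o}s-R\'{e}nyi graphs have maximum component size $O(\log n)$ with probability $1-o(1/n)$; the truncated sum is bounded by $O(\alpha\log^{5/2} n/n)\leq O(\alpha^2\log^4 n/n)$ (absorbing the $(1-g(\alpha))^{-1}$ blow-up into the $\log^4 n$ slack). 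Taking the worse bound across the two regimes gives the claimed $O(\alpha^2\log^4 n/n)$.

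The main obstacle I expect is precisely the regime $\alpha\to 1^-$, where $g(\alpha)$ tends to $1$ and a naive geometric summation degenerates; this is why the truncation via the component-size tail bound is essential, and the $\log^4 n$ in the statement is exactly the slack needed to absorb the polynomial-in-$\log n$ prefactor introduced by this cutoff. A minor, harmless inefficiency is that enumerating connected subgraphs via spanning-tree-plus-extra-edges overcounts (each connected graph has many spanning trees), but this only loosens the upper bound and does not affect the final asymptotic.
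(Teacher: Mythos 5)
The paper does not actually prove this lemma; it simply cites Lemma~2.6.1 of Durrett's \emph{Random Graph Dynamics}, so there is no in-paper proof to compare against. Your first-moment calculation (enumerate candidate vertex sets, union-bound over Cayley spanning trees times two extra edges, factor in the isolation probability, sum the expectations) is the natural textbook strategy and broadly goes in the right direction, but as written it has one real gap and a couple of smaller inaccuracies.

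The real gap is the truncation. Your bound
\[
\mathbb{E}[X_k]\;=\;O\!\left(\frac{\alpha}{n}\,k^{3/2}\,\bigl(\alpha e^{1-\alpha}\bigr)^{k}\,e^{\alpha k^{2}/n}\right)
\]
only matches what you wrote (i.e.\ drops the $e^{\alpha k^{2}/n}$ factor) when $k=O(\sqrt{n})$, so the parenthetical ``for $k\le n/2$'' is simply false: at $k\approx n/2$ the dropped factor is $e^{\Theta(\alpha n)}$. This is harmless only because you later cut the sum at $k=C\log n$, but then the entire weight of the argument rests on that cutoff, and the justification you offer --- ``subcritical Erd\H{o}s--R\'{e}nyi graphs have maximum component size $O(\log n)$ with probability $1-o(1/n)$'' --- is \emph{not} uniform in $\alpha<1$. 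The correct tail for $\G_{n,\alpha/n}$ is that the largest component has size $O\!\left(\frac{\log n}{\alpha-1-\log\alpha}\right)$, and $\alpha-1-\log\alpha\to 0$ as $\alpha\to 1^{-}$ (like $(1-\alpha)^2/2$). So for $\alpha$ approaching $1$ the needed cutoff is $\omega(\log n)$, at which point both the $e^{\alpha k^{2}/n}$ correction and the partial sum $\sum_{k\le K}k^{3/2}g(\alpha)^{k}$ grow beyond what $\log^{4}n$ can absorb. In other words, you correctly identified the dangerous regime $\alpha\to 1^{-}$, but the proposed fix (truncation at $C\log n$) does not actually handle it: the constant $C$ would have to blow up, which re-introduces the divergence you were trying to avoid. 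Your argument is sound for $\alpha$ bounded away from $1$ (and in particular in the regime $\alpha=O(1/\log n)$ that the paper actually uses), but it does not establish the lemma uniformly over $\alpha<1$ as stated.

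Two smaller points. First, the remark that the divergence of $\sum_k k^{3/2}g(\alpha)^k$ is ``$(1-g(\alpha))^{-1}$'' is off: the series diverges like $(1-g(\alpha))^{-5/2}$. This does not matter once you truncate, but it is worth getting right since it explains why you cannot simply sum the geometric series to infinity. Second, your case split (``small $\alpha$: $k=4$ dominates'' versus ``$\alpha$ near $1$: use the truncated crude bound $\sum k^{3/2}\le(\log n)^{5/2}$'') does partition $(0,1)$ correctly once you make the threshold explicit --- e.g.\ split at the value of $\alpha$ where $g(\alpha)=1/2$ --- but as phrased it reads as if the two regimes were exhaustive without saying why, so you should spell out the threshold and check that the small-$\alpha$ bound $O(\alpha^{5}/n)\le O(\alpha^{2}/n)$ and the large-$\alpha$ bound $O(\alpha\log^{5/2}n/n)\le O(\alpha^{2}\log^{4}n/n)$ together cover all $\alpha\in(0,1)$ with appropriately chosen constants.
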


Unicyclic components are more frequent than complex components, but still quite rare, as the following lemma shows. We will need to choose the parameter $\alpha$ appropriately to avoid unicyclic components, i.e. that the graphs presented to the algorithm in each phase do not contain cycles.
\begin{lemma}\label{lm:cycles}
Let $G=(V, E)$ be sampled from $\G_{n, \alpha/n}$ for some $\alpha\in (\lba, 1)$. Then the probability that $G$ contains a cycle is bounded by $O(\alpha^3)$. 
\end{lemma}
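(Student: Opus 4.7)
The plan is to use the first moment method on the number of cycles in $G$. For each integer $k \geq 3$, let $X_k$ denote the number of cycles of length exactly $k$ in $G$, and let $X = \sum_{k=3}^{n} X_k$ be the total number of cycles. Since the event that $G$ contains a cycle is precisely $\{X \geq 1\}$, Markov's inequality gives $\Pr[G \text{ contains a cycle}] \leq \expect[X]$, so it suffices to bound $\expect[X]$ by $O(\alpha^3)$.

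To compute $\expect[X_k]$, I would count ordered tuples: there are $\binom{n}{k}$ choices of vertex set and $(k-1)!/2$ distinct cyclic arrangements on each (accounting for the $k$ rotations and $2$ reflections of a cyclic sequence). Each such potential cycle uses $k$ specified edges, each present independently with probability $p = \alpha/n$, so
\[
\expect[X_k] \;=\; \binom{n}{k}\,\frac{(k-1)!}{2}\,p^k \;\leq\; \frac{n^k}{2k}\cdot\frac{\alpha^k}{n^k} \;=\; \frac{\alpha^k}{2k}.
\]

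I would then sum over $k \geq 3$. Pulling out a factor of $\alpha^3$ and using that $\alpha < 1$ is bounded away from $1$ (which I will take as implicit in the hypothesis, since otherwise the bound $O(\alpha^3)$ is vacuous up to the constant), I get
\[
\expect[X] \;\leq\; \sum_{k=3}^{n} \frac{\alpha^k}{2k} \;\leq\; \frac{\alpha^3}{6}\sum_{j=0}^{\infty}\alpha^j \;=\; \frac{\alpha^3}{6(1-\alpha)} \;=\; O(\alpha^3).
\]
Combining with Markov's inequality yields the lemma. The main obstacle, such as it is, is just being careful with the counting of cycles (the factor $(k-1)!/2$) and verifying that the tail of the geometric sum does not spoil the $\alpha^3$ factor; the hypothesis $\alpha > \lba$ plays no role in this first-moment argument and appears to be retained only for consistency with the surrounding parameter regime of the paper.
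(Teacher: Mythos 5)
Your proof is correct, and it takes a cleaner route than the paper's. You apply the first moment method directly to the count of cycles (as $k$-element edge sets), obtaining $\expect[X_k] \leq \binom{n}{k}\frac{(k-1)!}{2}(\alpha/n)^k \leq \frac{\alpha^k}{2k}$ and summing the geometric tail to get $O(\alpha^3)$ for $\alpha$ bounded away from $1$. The paper instead bounds the expected number of \emph{unicyclic connected components} via the formula $\nu_k = \frac{(k-1)!}{2}\sum_{j=0}^{k-3}\frac{k^j}{j!} \leq \frac{e^k(k-1)!}{2}$ for the number of unicyclic graphs on $k$ labeled vertices (quoting Durrett), then sums over $k$. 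That approach is slightly more involved for two reasons: it drags in the enumeration of unicyclic graphs (cycle plus attached trees) rather than just cycles, and strictly speaking it only covers cycle-containing components that are unicyclic --- ruling out complex components requires implicitly invoking Lemma~\ref{lm:complex-comp}, whose bound $O(\alpha^2 \log^4 n / n)$ is $o(\alpha^3)$ only because of the lower bound $\alpha > \lba$. Your direct count of cycles sidesteps this case split entirely and makes transparent that the lower bound on $\alpha$ is not needed for this lemma, as you correctly observe. Both arguments deliver the same conclusion; yours is the more elementary and self-contained of the two.
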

\begin{proof}
The number of unicyclic graphs on $k$ vertices is bounded by~\cite{durrett} (page~54, above eq. (2.6.6))
\begin{equation}\label{eq:nu-k}
\nu_k=\frac{(k-1)!}{2}\sum_{j=0}^{k-3} \frac{k^j}{j!}\leq \frac{e^k (k-1)!}{2}
\end{equation}
Thus, the expected number of unicyclic components of size $k$ is bounded by~\cite{durrett} (page~54, eq. (2.6.7))
\begin{equation*}
\begin{split}
{n \choose k}\nu_k \left(\frac{\alpha}{n}\right)^k \left(1-\frac{\alpha}{n}\right)^{k(n-k)+{k \choose 2}-k}.
\end{split}
\end{equation*}
Summing this expression over all $k$ and using \eqref{eq:nu-k}, we get 
\begin{equation*}
\begin{split}
\sum_{k=3}^{+\infty} {n \choose k}\nu_k \left(\frac{\alpha}{n}\right)^k \left(1-\frac{\alpha}{n}\right)^{k(n-k)+{k \choose 2}-k}&\leq \sum_{k=3}^{+\infty} {n \choose k}\frac{e^k (k-1)!}{2} \left(\frac{\alpha}{n}\right)^k \left(1-\frac{\alpha}{n}\right)^{k(n-k)+{k \choose 2}-k}\\
&\leq \sum_{k=3}^{+\infty} {n \choose k}\frac{e^k (k-1)!}{2} \left(\frac{\alpha}{n}\right)^k \exp\left(-\alpha k(1-k/n)-\alpha {k \choose 2}/n+\alpha k/n\right)\\
&\leq \sum_{k=3}^{+\infty} {n \choose k}\frac{e^k (k-1)!}{2} \left(\frac{\alpha}{n}\right)^k,\\
\end{split}
\end{equation*}
where we used the fact that $k\geq 3$ to go from second to last line to the last line.

We now bound ${n\choose k}\leq n^k/k!$ to get 
\begin{equation*}
\begin{split}
&\sum_{k=3}^{+\infty} {n \choose k}\frac{e^k (k-1)!}{2} \left(\frac{\alpha}{n}\right)^k\leq \sum_{k=3}^{+\infty} \frac{e^k}{2} \alpha^k \leq \sum_{k=3}^{+\infty} \frac{e^{k+1}}{2} \alpha^k =O(\alpha^3)
\end{split}
\end{equation*}
whenever $\alpha$ is smaller than an appropriate constant.
\end{proof}

\subsection{Input distribution}\label{sec:input-dist}
We now define a distribution over input instances. The distribution, which we denote by $\D$, is a uniform mixture of two distributions: the \YES case distribution $\D^Y$ and the \NO case distribution $\D^N$. Thus, $\D=\frac1{2}\D^Y+\frac1{2}\D^N$. Graphs drawn $\D^Y$ will be bipartite, while graphs drawn from $\D^N$ will be almost $\frac1{2}$-far from bipartite. In other words, graphs  drawn from $\D^Y$ have maxcut value $m$, while graphs drawn from $\D^N$ have maxcut value at most $(1/2+\e)m$. Thus, showing that no $o(\sqrt{n})$ space algorithm can distinguish between $\D^Y$ and $\D^N$ will be sufficient to rule out $(2-O(\e))$-approximation to maxcut value in $o(\sqrt{n})$ space. 

In order to ensure a factor $2-\e$ gap between maxcut values in $\D^Y$ and $\D^N$, we make our input graph $G'=(V, E')$ a union of  $k=\Theta(\frac1{\alpha \e^2})$ sparse Erd\H{o}s-R\'{e}nyi graphs that we now define. Let $R=(P, Q)$ be a bipartition of $V$ generated by choosing a string $x\in \{ 0,1 \}^n$ uniformly at random and assigning every vertex $u\in V$ with $x_u=0$ to $P$ and  every vertex $u$ with $x_u=1$ to $Q$. The distribution of \YES-instances and \NO-instances is created as follows. First for each $i=1,\ldots, k$ sample $G_i=(V, E_i)\sim \G_{n, \alpha/n}$. Then

\begin{description}
\item[YES]  Generate $R=(P, Q)$ uniformly at random.  Let $G'_i=(V, E'_i)$ be the graph obtained by including those edges in $E_i$ that cross the bipartition $R$ (i.e. $E'_i\subseteq P\times Q$). Let $E':=E'_1\cup E'_2\cup \ldots \cup E'_{k}$.
\item[NO]   Let $G'_i=(V, E'_i)$ be the graph obtained by including each edge in $E_i$ independently with probability $1/2$. Let $E':=E'_1\cup E'_2\cup \ldots \cup E'_{k}$.
\end{description}

We denote the input distribution defined above by $\D^{Y}$ (\YES case) and $\D^{N}$ (\NO case) respectively. Let $\D=\frac1{2}\D^{Y}+\frac1{2}\D^{N}$. We note that the graphs generated by our distribution $\D$ are in general multigraphs. The expected number of repeated edges is only $O(1/\e^2)$, however.

We show that in the \YES case the value of maxcut is equal to all edges of the graph, and in the \NO case the maxcut is close to $m/2$: 
\begin{lemma}\label{cl:gap}
Let $G=(V, E), |V|=n, |E|=m$ be generated according to the process above, where $k=C/(\alpha\e^2)$ for a sufficiently large constant $C>0$. Then in the \YES case the maxcut is $m$, and in the \NO case the maxcut is at most $(1+\e)m/2$ whp.
\end{lemma}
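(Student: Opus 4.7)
The YES part is immediate: by construction every edge of $E'$ has exactly one endpoint on each side of $R=(P,Q)$, so $R$ itself realizes a cut of size $m$, matching the trivial upper bound $\mathrm{maxcut}\le m$. The work is entirely in the NO case, where I would apply a Chernoff tail bound for each fixed bipartition and then union bound over the $2^n$ bipartitions of $V$.

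\noindent\textbf{Key observation for the NO case.} Each unordered pair $\{u,v\}\in\binom{V}{2}$ appears in $E'$ with a multiplicity $c_{uv}\sim\mathrm{Bin}(k,\alpha/(2n))$, \emph{independently} across pairs (since the $E_i$ are independent Erd\H{o}s--R\'enyi graphs and the $1/2$-subsampling is independent per edge). Hence $m$ is the sum of $k\binom{n}{2}$ i.i.d.\ Bernoulli$(\alpha/(2n))$ variables with $\expect[m]=(n-1)k\alpha/4=\Theta(Cn/\e^2)$, and for a fixed cut $(S,\bar S)$ with $|S|=s$, the number of crossing edges $X_S$ is $\mathrm{Bin}(ks(n-s),\alpha/(2n))$ with mean $\mu_S=ks(n-s)\alpha/(2n)\le (1+O(1/n))\,\expect[m]/2$, using $s(n-s)\le n^2/4$.

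\noindent\textbf{Execution.} Standard Chernoff yields $\prob[m<(1-\e/10)\expect[m]]\le \exp(-\Omega(\e^2\expect[m]))=\exp(-\Omega(Cn))$, so I condition on $m\ge(1-\e/10)\expect[m]$ and aim to show $X_S\le a:=(1+\e/2)\expect[m]/2$ for every $S$; the two together imply $X_S\le(1+\e)m/2$, since $(1+\e/2)/(2(1-\e/10))\le (1+\e)/2$ for small $\e$. To bound $\prob[X_S\ge a]$ for a fixed $S$ I split into two regimes. If $\mu_S\ge a/2$, so $\delta_S:=a/\mu_S-1\in[\e/3,1]$, the standard multiplicative form $\prob[X_S\ge(1+\delta)\mu_S]\le \exp(-\delta^2\mu_S/3)$ gives $\exp(-\Omega(\e^2\mu_S))\le \exp(-\Omega(Cn))$. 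If $\mu_S< a/2$, the large-deviation form $\prob[X_S\ge a]\le (e\mu_S/a)^a\le (e/2)^a=\exp(-\Omega(Cn/\e^2))$ is even stronger. Either way the per-cut probability is $\exp(-\Omega(Cn))$ with an absolute constant in $\Omega(\cdot)$, so for $C$ large enough a union bound over the $2^n$ cuts gives $\prob[\exists S:\,X_S>a]=o(1)$.

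\noindent\textbf{Main obstacle.} The delicate point is ensuring that the Chernoff exponent scales as $\Theta(n)$ \emph{uniformly} in $s$ — especially for small $s$, where $\mu_S$ can be as small as $\Theta(1/\e^2)$ and the additive slack $a-\mu_S$ dwarfs $\mu_S$, so the standard multiplicative Chernoff is wasteful. This is precisely why both regimes above are needed, and why choosing $k=C/(\alpha\e^2)$ with $C$ a sufficiently large absolute constant is essential: in each regime the tail decay is $\exp(-\Omega(Cn))$, which beats the $2^n$ factor from the union bound once $C$ is large.
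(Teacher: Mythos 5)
Your proof is correct, and the overall strategy --- a Chernoff tail bound for each fixed cut, a union bound over all $2^n$ cuts, and a separate concentration bound on $m$ itself --- is the same as the paper's. The one substantive difference is in how the two arguments ensure the per-cut tail exponent is $\Omega(n)$ uniformly over cut sizes, which you correctly identify as the delicate point. You split into two regimes, applying the multiplicative Chernoff form $\exp(-\delta^2\mu/3)$ (valid only for $\delta\le 1$) when $\mu_S \ge a/2$, and a low-mean Poisson-type bound when $\mu_S < a/2$, precisely because the multiplicative form is both inapplicable and wasteful once the additive slack $a-\mu_S$ dominates $\mu_S$. The paper avoids the case split altogether by invoking the additive Chernoff form $\prob[X\ge\mu+\Delta]\le\exp\left(-\Delta^2/(2\mu+2\Delta)\right)$ from~\cite{book-random-graphs}, fixing the threshold $t=(1+\e/2)\mu_{\max}$ once for all cuts, and observing that for fixed $t$ the resulting bound $\exp(-(t-\mu)^2/(2t))$ is monotone increasing in $\mu$; the worst case over all cut sizes is therefore the balanced cut $\mu=\mu_{\max}$, and for unbalanced cuts (small $\mu$) the bound is automatically the even stronger $\exp(-\Theta(t))=\exp(-\Theta(n/\e^2))$ --- exactly what your second regime produces by hand. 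So both routes deliver $\exp(-\Omega(Cn))$ with an absolute constant and beat the $2^n$ union bound for $C$ large; the paper's choice of Chernoff form just collapses your two regimes into one.
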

The proof uses the following version of Chernoff bounds.

\begin{theorem}[\cite{book-random-graphs}, Theorems 2.1 and 2.8]\label{thm:chernoff}
Let $X=\sum_{i=1}^n X_i$, where $X_i$ are independent Bernoulli 0/1 random variables with expectation $p_i$. Let $\mu=\sum_{i=1}^n p_i$. Then for all $\Delta>0$
$$
\prob[X\geq \mu+\Delta]\leq \exp\left(-\frac{\Delta^2}{2\mu+2\Delta}\right).
$$
and
$$
\prob[X\leq \mu-\Delta]\leq \exp\left(-\frac{\Delta^2}{2\mu}\right).
$$

\end{theorem}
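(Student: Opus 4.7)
The plan is to apply the standard Chernoff--Cram\'er exponential moment method and then reduce the resulting Legendre-transform exponents to the clean quadratic forms stated in the theorem. First I would bound the moment generating function of $X$: by independence and the inequality $1+z \leq e^{z}$,
$$
\expect[e^{tX}] = \prod_{i=1}^n \bigl(1+p_i(e^t-1)\bigr) \leq \exp\bigl(\mu(e^t-1)\bigr)
$$
for every real $t$.

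For the upper tail I would apply Markov's inequality to $e^{tX}$ with $t>0$, giving $\prob[X\geq \mu+\Delta]\leq \exp\!\bigl(\mu(e^t-1)-t(\mu+\Delta)\bigr)$. Optimizing at $t=\ln(1+\Delta/\mu)$ produces the Bennett exponent $-\mu h(\Delta/\mu)$ with $h(u)=(1+u)\ln(1+u)-u$. Setting $u=\Delta/\mu$, the target bound in the theorem is $\exp\!\bigl(-\mu u^2/(2+2u)\bigr)$, so the task reduces to the scalar inequality
$$
h(u) \;\geq\; \frac{u^2}{2+2u}, \qquad u\geq 0.
$$
I would verify it by clearing denominators: let $g(u)=2(1+u)h(u)-u^2$. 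Direct differentiation gives $g(0)=0$, $g'(u)=4(1+u)\ln(1+u)-4u$ so $g'(0)=0$, and $g''(u)=4\ln(1+u)\geq 0$ on $[0,\infty)$. Hence $g'$ is nondecreasing and nonnegative, $g$ is nondecreasing and nonnegative, and the upper-tail bound follows.

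The lower tail is handled symmetrically. Applying Markov to $e^{-tX}$ with $t>0$ and optimizing at $t=-\ln(1-\Delta/\mu)$ (valid when $\Delta<\mu$; for $\Delta\geq \mu$ the bound is trivial since $X\geq 0$) yields $\prob[X\leq \mu-\Delta]\leq \exp\!\bigl(-\mu\, h_-(\Delta/\mu)\bigr)$ with $h_-(y)=y+(1-y)\ln(1-y)$. The stated target is $\exp(-\mu y^2/2)$, so it suffices to prove $h_-(y)\geq y^2/2$ on $[0,1)$. Setting $f(y)=h_-(y)-y^2/2$, I would compute $f(0)=0$, $f'(y)=-\ln(1-y)-y$ so $f'(0)=0$, and $f''(y)=y/(1-y)\geq 0$, whence $f\geq 0$.

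The main obstacle is not the Chernoff machinery, which is entirely routine, but rather ensuring that the two Bennett-style exponents admit the particularly clean quadratic lower bounds $h(u)\geq u^2/(2+2u)$ and $h_-(y)\geq y^2/2$; the first is the sharper of the two and requires the two-stage convexity argument on $g$ sketched above. Both reductions are elementary once organized as monotonicity checks on $g$ and $f$, and together they give exactly the two inequalities in the statement, recovering Theorems~2.1 and 2.8 of~\cite{book-random-graphs}.
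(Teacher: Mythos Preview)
Your argument is correct and is the standard Bennett/Cram\'er derivation: the MGF bound, the optimal choice of $t$, and the two scalar inequalities $h(u)\geq u^2/(2+2u)$ and $h_-(y)\geq y^2/2$ are all handled cleanly, and the convexity checks on $g$ and $f$ go through exactly as you wrote.

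There is nothing to compare against, however: the paper does not prove this statement. Theorem~\ref{thm:chernoff} is quoted verbatim from \cite{book-random-graphs} (Theorems~2.1 and~2.8) and used as a black box in the proof of Lemma~\ref{cl:gap}. So your write-up supplies a self-contained proof where the paper simply appeals to the reference; if the goal is to match the paper, a one-line citation would suffice.
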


\noindent\begin{proofof}{Lemma~\ref{cl:gap}}
In the \YES case, all edges in $E$ go across the bipartition $R$, so the maxcut has size $m$. A straightforward application of Chernoff bounds shows that the with high probability, the value of $m$ is at least $(1 - O(\sqrt{\log n/n}))\frac{\alpha k n}{4}$ with high probability.

We now consider the \NO case. Fix a cut $(S, \bar{S})$ where $S \subseteq V$ and $|S|\leq n/2$. Let $r:=|S|$. The expected number of edges (counting multiplicities) that cross the cut $S$ is given by $r\cdot (r-|S|) (k\alpha)/(2n)$, which is maximized when $r=n/2$. The maximum is equal to $(\alpha k)n/8$. 
The probability that the actual value of the cut exceeds $(1+\e)m/2\geq (1+\e/2)(\alpha k)n/8$  is bounded as
$$
\prob[|E\cap S\times \bar S|>t]\leq \exp\left(-\frac{(t-\mu)^2}{2t}\right),
$$
where we let $t=\mu+\Delta=(1+\e/2)(\alpha k)n/8$ and $\mu=r\cdot (r-|S|) k(\alpha)/(2n)$. The right hand side is minimized when $\mu$ is maximized, which corresponds to $r=n/2$. Thus, the maximum expected cut size over all cuts equals $(\alpha k)n/8$, and hence we let $\mu:=(\alpha k)n/8$, $t=(1+\e/2)\mu$ and conclude that for any cut $S$
$$
\prob[|E\cap S\times \bar S|>t]\leq \exp\left(-\frac{\e^2 \mu}{8}\right)=\exp\left(-\frac{\e^2 (\alpha k)n/8}{8}\right).
$$
Now using the assumption that $k=C/(\alpha \e^2)$, we get
\begin{equation*}
\begin{split}
\prob[|E\cap S\times \bar S|>(1+\e)m/2]&\leq\prob[|E\cap S\times \bar S|>(1+\e/2) (\alpha k)n/8]\\
&\leq \exp\left(-\frac{\e^2 (\alpha k)n/8}{8}\right)\leq \exp\left(-Cn/64\right)<2^{-2n}
\end{split}
\end{equation*}
as long as $C>0$ is larger than an absolute constant. A union bound over at most $2^n$ cuts completes the proof.
\end{proofof}

Note that each graph $G_i$ considered in the distributions $\D^Y$ and $\D^N$ is a simple graph. Thus we also have that each $G'_i$ is a simple graph. However, the graph $G'$ being the union of simple graphs need not be simple. Indeed $G'$ will contain multiple edges with rather high probability. In later sections we will argue that a streaming algorithm with limited space will not be able to distinguish $\D^Y$ from $\D^N$, given access to edges of $E'_1$ in random order, and then edges of $E'_2$ in random order and so on.  We will then claim that this also applies to streaming algorithms that are given edges of $G'$ in a random order. We note here that these two input orderings are not the same: In particular, a random ordering of edges of $G'$ might include two copies of a multi-edge within the first $\alpha n$ edges, while $E'_1$ does not contain two such edges. In what follows (in Lemma~\ref{lm:perm}) we show that despite this difference between the random ordering and the ``canonical random ordering'' (alluded to above, and to be defined next), the two orderings are close in total variation distance and allowing us to reason about the latter to make conclusions about the former.

The edges appear in the stream in the order $E_1',E_2',\ldots, E_k'$, and order of arrival in each group $E_i', i=1,\ldots, k$ is uniformly random. We refer to this ordering as the {\em canonical random ordering}:
\begin{definition}\label{def:canonical}
Let $G'=(V, E'), E'=E_1'\cup\ldots\cup E'_k$ denote the set of edges generated by the process above. We refer to the ordering of the edges $E'$ given by $E_1',E_2',\ldots, E_k'$, where edges inside each $E'_i$ are ordered uniformly at random as the {\em canonical random ordering} associated with $\D$.
\end{definition} 

\begin{lemma}\label{lm:perm}
Let $\e>0$ be a constant, and $\alpha\in (\lba, 1)$ a parameter. Let $k=\Theta(1/(\alpha \e^2))$ and let $E_1',\ldots, E_k'$ denote the edges sets of graphs $G_1',\ldots, G_k'$ drawn from distribution $\D$, and let $G'$ be the union of these graphs.  Then the canonical random ordering of edges of $G'$ is $O(\alpha \log (1/\alpha))$-close to uniformly random in total variation with probability at least $1-o(\alpha)$ over the choice of randomness used to sample $G'$.
\end{lemma}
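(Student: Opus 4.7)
My plan is to compare the canonical random ordering of $G'$ with a uniformly random ordering of $G'$, conditional on the (random) edge multiset $E'$. The only source of non-uniformity comes from multi-edges of $G'$, so I will first give a clean description of the canonical distribution conditional on $E'$, then bound its TV distance from uniform in terms of the number of multi-edges, and finally control the number of multi-edges.

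The key observation is that, conditional on $E'$ (with multiplicity $c_e$ for each distinct edge $e$), the canonical ordering is generated by first drawing, for each distinct edge $e$ independently, a uniform random subset $S_e\subseteq[k]$ of size $c_e$, and then ordering each phase uniformly. Equivalently, passing to the $m = |E'|$ distinguishable copies of edges, it is the same process as assigning each copy an i.i.d.\ uniform phase in $[k]$ and taking uniform within-phase orderings, \emph{conditioned on} the event $A$ that copies of a common underlying edge land in distinct phases. A short breakpoint-counting argument using the multinomial identity $\sum_{s_1+\cdots+s_k=m}\binom{m}{s_1,\ldots,s_k}=k^m$ shows that the \emph{unconditioned} version produces every distinguishable permutation with probability $1/m!$, so after projecting to the multiset it matches the uniform random ordering of $G'$. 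Combining this with the standard bound that the TV distance between any distribution $D$ and its conditional $D(\cdot\,|\,A)$ is at most $1-\Pr_D[A]$, and with the data-processing inequality for projections, the TV distance between the canonical and uniform orderings given $G'$ is at most $1-\Pr[A]$. When every multi-edge of $G'$ has multiplicity exactly $2$ and there are $M$ such double edges, $\Pr[A] = (1-1/k)^M \ge 1 - M/k$, so the bound becomes $M/k$.

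To finish, I would bound $M$ and rule out higher multiplicities with the required probability. An edge $e$ has $c_e \ge 2$ with probability at most $\binom{k}{2}(\alpha/n)^2$, so $M$ is a sum of independent Bernoullis with mean $O(k^2\alpha^2) = O(1/\e^4)$ (a constant in $\alpha$, since $k\alpha = \Theta(1/\e^2)$). A Chernoff bound (Theorem~\ref{thm:chernoff} with $\Delta = C\log(1/\alpha)$ for large enough $C$) then yields $M = O(\log(1/\alpha))$ except with probability $\alpha^{C/2} = o(\alpha)$. Similarly, the expected number of edges with $c_e \ge 3$ is at most $\binom{n}{2}\binom{k}{3}(\alpha/n)^3 = O(k^3\alpha^3/n) = O(1/(\e^6 n))$, which is $o(\alpha)$ since $\alpha > n^{-1/10}$, so Markov rules them out with probability $1 - o(\alpha)$. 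On the intersection of these two good events (which has probability $1 - o(\alpha)$), the TV distance is at most $M/k = O(\alpha\e^2 \log(1/\alpha)) = O(\alpha\log(1/\alpha))$, as claimed.

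The main obstacle I expect is justifying the equivalence in the second paragraph: one must check carefully that conditioning the i.i.d.-phase-per-copy process on the event $A$ really does recover the posterior law of the canonical process given the multiset $G'$, rather than a subtly different distribution. In particular, care is required with the labeling/unlabeling of distinguishable copies, and in the \YES case one should verify that the posterior law of the canonical process given $G'$ does not actually depend on the hidden bipartition $R$ (it does not, since the $S_e$'s are independent of $R$ once we condition on the $c_e$'s). Once this equivalence is in hand, the remainder is a routine combination of concentration for $M$ and multinomial counting.
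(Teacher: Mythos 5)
Your proof is correct, and it reaches the lemma by a route that is related to, but cleaner than, the one in the paper. The paper works directly at the level of permutations of $E'$: it calls $\pi$ \emph{collision-inducing} if two copies of a repeated edge fall within distance $4\alpha n$ of one another, shows (conditionally on a regularity event $\E$ controlling the phase sizes and total size) that the canonical process assigns all non-collision-inducing permutations equal probability and any collision-inducing permutation at most that probability, and then bounds the fraction of collision-inducing permutations by $O(\alpha)$ per double edge, union-bounded over the $O(\log(1/\alpha))$ double edges. You instead introduce an auxiliary ``i.i.d.\ phase per copy'' process, observe via the multinomial identity (or just symmetry over copies) that unconditionally it generates the uniform ordering of the labeled copies, and identify the canonical process given the multiset $E'$ with its conditioning on the no-collision event $A$; the identity $\|D(\cdot\mid A)-D\|_{tvd}=1-\prob[A]$ together with data processing for the unlabeling map then yields $1-\prob[A]\le M/k$ in one step. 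This packages the paper's equiprobability/monotonicity argument for individual permutations into a transparent coupling, and the per-double-edge costs $O(\alpha)$ (paper) and $1/k$ (you) agree up to constants since $k=\Theta(1/(\alpha\e^2))$. The tail of both proofs is essentially identical: $\expect[M]=O(k^2\alpha^2)=O(1/\e^4)$, a Chernoff bound gives $M=O(\log(1/\alpha))$ except with probability $o(\alpha)$, and Markov rules out edges of multiplicity at least $3$ since $k^3\alpha^3/n=o(\alpha)$ when $\alpha>\lba$. The one subtlety you rightly flag is that, in the \YES case, the independence of the indicators $\mathbb{1}[c_e\ge 2]$ used for Chernoff, and the uniformity of the sets $S_e$, should both be asserted conditionally on the hidden bipartition $R$ and then noted to be $R$-independent, after which the bounds transfer unconditionally; once that is spelled out, the argument is complete.
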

\begin{proof}
Let $E'$ be the set of edges of $G'$ sampled from the distribution $\D$, and let $\Pi$ denote the canonical random ordering (see Definition~\ref{def:canonical}; note that $\Pi$ is a random variable). 

 Consider an ordering $\pi$ of the edge set $E'$ (recall that in general $E'$ is a multiset). 
  Suppose that there exists an edge $e\in E'$ that is included in $E'$ at least twice (let two copies of $e$ be denoted by $e^1$ and $e^2$) such that  $|\pi(e^1)-\pi(e^2)|\leq 4\alpha n$, i.e. $e^1$ and $e^2$ arrive at distance at most $4\alpha n$ in the permutation $\pi$. Then we refer to $\pi$ as a {\em collision inducing} permutation.  In what follows we show that {\bf (1a)} the canonical random ordering $\Pi$ produces every  {\em non-collision-inducing} permutation with equal probability, and {\bf (1a)} produces any other permutation with only smaller probability.  We then show that {\bf (2)} collision inducing permutations are quite unlikely in the uniformly random ordering, which gives the result.
  
Consider the process of sampling from the distribution $\D$ that generated the set $E'$, and let $E'=E'_1\cup E'_2\cup \ldots\cup E'_k$ denote the sets that each of the $k$ phases of our generation process produced. Define the event $\E=\{|E'_i|\leq \alpha n\text{~for all~}i=1,\ldots,k\wedge |E'|\geq n\}$, i.e. the event that none of the individual sets $E'_i$ are too much larger than their expected size (which is about $\alpha n/4$), and that the set $E'$ itself is not too much smaller than its expected size.
Since $\expect[|E'_i|]\leq \alpha n/4$ in both \YES and \NO cases, and that $\expect[|E'|]>2n$ for sufficiently small constant $\e>0$, we have $\prob[\E]>1-e^{-\Omega(\alpha n)}$.  We condition on $\E$ in what follows.

We now give a proof of {\bf (1)}. Consider two permutations $\pi, \pi'$ that are not collision inducing, so that no two copies of an edge are at distance at most $4\alpha n$ under $\pi, \pi'$. Note that by conditioning on $\E$ this means that both $\pi$ and $\pi'$ are generated by $\Pi$ conditional on $\E$ with nonzero probability, since none of  $E'\cap E'_i, i=1,\ldots, k$ would need to contain duplicate edges. We now show that in fact $\pi$ and $\pi'$ are generated with equal probability by $\Pi$. Note that both in the \YES and \NO cases the distributions that the graphs $G'_i$ are drawn from are symmetric in the sense that the probability of a graph $G'_i$ generated only depends on the number of edges in the graph as long as the graph does not have repeated edges, and is zero otherwise. The latter case is excluded by the assumption that $\pi$ and $\pi'$ are not collision inducing, and hence
\begin{equation*}
\begin{split}
\prob[\Pi=\pi |E', \E]&=\expect_{E'=E'_1\cup\ldots\cup E'_k}\left[\prob[\Pi=\pi| E'_1,\ldots, E'_k, \E]\right]\\
&=\expect_{E'=E'_1\cup\ldots\cup E'_k}\left[\prob[\Pi=\pi' | E'_1,\ldots, E'_k, \E]\right]\\
&=\prob[\Pi=\pi' |E', \E].
\end{split}
\end{equation*}
This establishes {\bf (1a)}.

The same reasoning shows that if $\pi$ is collision inducing and $\pi'$ is not, then $\prob[\Pi=\pi |E', \E]\leq \prob[\Pi=\pi' |E', \E]$.
Indeed, for any $E'_1\cup \ldots E'_k=E'$ we have $\prob[\Pi=\pi | E'_1,\ldots, E'_k, \E]\leq \prob[\Pi=\pi' | E'_1,\ldots, E'_k, \E]$. This is because if neither $\pi$ not $\pi'$ map two copies of some edge to a single set $E'_i$ then the two terms are equal. If $\pi$ maps two copies of an edge to the same set $E'_i$, then the left term is zero. Thus, we have

\begin{equation*}
\begin{split}
\prob[\Pi=\pi |E', \E]&=\expect_{E'=E'_1\cup\ldots\cup E'_k}\left[\prob[\Pi=\pi| E'_1,\ldots, E'_k, \E]\right]\\
&\leq \expect_{E'=E'_1\cup\ldots\cup E'_k}\left[\prob[\Pi=\pi' | E'_1,\ldots, E'_k, \E]\right]\\
&=\prob[\Pi=\pi' |E', \E],
\end{split}
\end{equation*}
establishing {\bf (1b)}.

We now bound the number of collision inducing permutations for a typical set $E'$.  First let $\E^*$ denote the event that $E'$ contains no edges of multiplicity more than $2$ and $O(\log(1/\alpha))$ edges of multiplicity $2$. We now prove that $\prob[\E^*]=1-o(\alpha)$.

We start by bounding the expected number of edges of multiplicity $3$ or above.  The set $E'$ is a union of $k=\Theta(1/(\alpha \e^2))$ Erd\H{o}s-R\'{e}nyi graphs, so 
a union bound over all ${n \choose 2}$ potential edges and ${k \choose 3}$ potential phases that $3$ copies of the edge should appear in shows that the expected number of edges with multiplicity $3$ and above is bounded by 
\begin{equation}\label{eq:123fwef}
{n \choose 2} (\alpha/n)^3 {k \choose 3}\leq O(1/(\e^3n))=O(1/n)
\end{equation}
since $\e$ is a constant. Thus, there are no such edges with probability at least $1-o(\alpha)$ (using the assumption that $\alpha\leq \lba$).

We now bound the number of edges of multiplicity $2$.  The number of such edges is a sum of ${n \choose 2}$ Bernoulli random variables with expectations bounded by ${k\choose 2} (\alpha/n)^2=O(\frac1{(\alpha \e^2)^2})\cdot (\alpha/n)^2=O(1/(\e^2n)^2)$. Thus, the expected number of duplicate edges is $O(1/\e^4)=O(1)$ by the assumption that $\e$ is an absolute constant, and this number is bounded by $O(\log (1/\alpha))$ with probability at least $1-o(\alpha)$ by standard concentration inequalities.  Putting this together with \eqref{eq:123fwef}, we get
that  
$$
\prob[\E^*]\geq 1-O(1/(\alpha n))-o(\alpha)=1-o(\alpha),
$$
where we used the fact that $\alpha>\lba$ by assumption. 

We now bound the number of collision inducing permutations $\pi$ conditional on $\E^*$. The probabilty that a uniformly random $\pi$ maps two copies of an edge within distance $4\alpha n$ is bounded by $4\alpha n/|E'|=O(\alpha)$. By a union bound over $O(\log(1/\alpha))$ edges of multiplicity $2$ the fraction of collision inducing permutations is $O(\alpha \log(1/\alpha))$ as required.

We have shown that permutations that are not collision inducing are equiprobable, and at least as probable collision inducing permutations, which amount to an $O(\alpha\log (1/\alpha))$ fraction of all permutations conditional on an event $\E^*\wedge \E$ that occurs with probability at least $1-o(\alpha)$. 
Thus, the total variation distance between the uniformly random ordering 
and the canonical random ordering is $O(\alpha \log(1/\alpha))$ with probability at least $1-o(\alpha)$ over the randomness used to sample $G'$.
\end{proof}

\section{The Boolean Hidden Partition Problem}\label{sec:comm-prob}
We analyze the following $2$-player one-way communication problem. 

\paragraph{Boolean Hidden Partition Problem (BHP).} Alice gets a vector $x\in \bool^n$. Bob gets a the edges of a graph $G=(V, E), V=[n], E\subseteq {[n] \choose 2}$, and a vector $w\in \bool^r$, where  $r$ denotes the number of edges in $G$. Note that we associate edges of $G$ with $[r]$. Let $M\in \bool^{r\times n}$ denote the edge incidence matrix of $G$, i.e. for each $e\in E$ and $v\in V$ $M_{ev}=1$  iff $v$ is an endpoint of $e$. 
Then (1) in the \YES case the vector $w$ satisfies $Mx=w$ (arithmetic is over $\mathbb{GF}(2)$); and (2) in the \NO case $w$ is uniformly random in $\bool^r$ independent of $x$.\footnote{Note that this is somewhat different from the setting of~\cite{GKKRW07, VY11}. In their setting the promise was that $w=Mx$ in the \YES case and $w=Mx\oplus 1^r$ in the \NO case.}
Alice sends a message $m$ to Bob, who must distinguish between the two cases above.  

\paragraph{Distributional Boolean Hidden Partition Problem (D-BHP).}
We will evaluate protocols for this problem on the distribution where  
(1) Alice's input $x$ is uniformly random in $\bool^n$; (2) Bob's graph is sampled from the distribution $\G_{n, \alpha/n}$ and (3) the answer is \YES with probability $1/2$ and \NO with probability $1/2$, independent of Alice's input. We will say that an algorithm achieves advantage $\delta$ over random guessing for the \DBHP problem if it succeeds with probability at least $1/2+\delta$ over the randomness of the input distribution. We will be interested in the
one-way communication complexity of protocols that achieve advantage 
$\delta$ for the \DBHP problem as a function of the parameters $n, \alpha$ and $\delta$.  For technical reasons instead of using the parameters $n, \alpha, \delta$ it will be more convenient to introduce an auxliary parameter $\gamma$. We will prove that any protocol that achieves advantage $\gamma+\alpha^{3/2}>0$ over random guessing for \DBHP requires at least $\Omega(\gamma \sqrt{n})$ communication (the parameter $\alpha$ appears in the expression for the advantage that the protocol is assumed to get due to the possible presence of cycles in Bob's graph $G$).

The rest of the section is devoted to proving 
\begin{lemma}\label{lm:lb-final}
Let $\gamma\in (\lba, 1)$ and $\alpha\in (\lba, 1/16)$ be  parameters. 
Consider an instance of the \DBHP problem where Alice receives a uniformly random string $x\in \bool^n$, and Bob receives a graph $G\in \G_{n, \alpha/n}$, together with the corresponding vector $w$.  No protocol for the \DBHP problem that uses at most $\gamma\sqrt{n}$ communication can get more than $O(\gamma+\alpha^{3/2})$ advantage over random guessing
when inputs are drawn from this distribution.
\end{lemma}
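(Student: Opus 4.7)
The plan is a Fourier-analytic argument in the spirit of the Boolean Hidden Matching lower bound of Gavinsky et al.~\cite{GKKRW07}, adapted to the random Erd\H{o}s-R\'enyi setting. By Yao's principle together with convexity, it suffices to consider deterministic one-way protocols; Alice's $c := \gamma\sqrt{n}$-bit message partitions $\bool^n$ into at most $2^c$ cells $\{A_m\}_m$, and conditional on the message $m$, her input $x$ is uniform on $A_m$ (a cell observed with probability $p_m := |A_m|/2^n$). Bob's distinguishing advantage is bounded above by $\expect_m\|\mu_Y^m - \mu_N^m\|_{tvd}$, where $\mu_Y^m, \mu_N^m$ denote the conditional distributions of $(G,w)$ under the YES/NO marginals. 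Coupling $G$ across the two cases reduces the task to bounding $\expect_{m,G}\|\mathrm{Law}(Mx\mid x\in A_m, G) - U\|_{tvd}$, where $U$ is uniform on $\bool^{|E(G)|}$.

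For fixed $G$ and $A = A_m$, define the centered characters $f_A(y):=\expect_{x\sim U(A)}[(-1)^{y\cdot x}]$. Cauchy-Schwarz combined with Parseval over $\mathbb{F}_2^{|E(G)|}$ yields the standard bound
$$\|\mathrm{Law}(Mx) - U\|_{tvd}^2 \;\leq\; \tfrac{1}{4}\sum_{s \neq 0} f_A(M^T s)^2.$$
Split this sum according to whether $M^T s = 0$ (i.e., $s$ lies in the cycle space of $G$) or not. Each nonzero $s$ with $M^T s=0$ contributes exactly $f_A(0)^2 = 1$, so the cycle part totals $2^{\dim\mathrm{cyc}(G)}-1$; extending Lemma~\ref{lm:cycles} via the observation that a cycle space of dimension $d$ requires $d$ edge-disjoint cycles (an event of probability $O(\alpha^{3d})$), its expectation is $O(\alpha^3)$.

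For the non-cycle part, group terms by $y := M^T s \neq 0$ and take expectation over $G\sim\G_{n,\alpha/n}$:
$$\expect_G\!\left[\sum_{s:\,M^T s\neq 0}\! f_A(M^T s)^2\right] \;=\; \sum_{y\neq 0} f_A(y)^2\,\sum_{\substack{T\subseteq\binom{[n]}{2}\\y_T=y}}(\alpha/n)^{|T|},$$
where $y_T$ denotes the vertex-parity vector of edge set $T$. For $y$ of Hamming weight $2k$, the dominant contribution to the inner sum comes from $T$ being a perfect matching on $\mathrm{supp}(y)$ ($|T|=k$, with $(2k-1)!!$ choices), while larger $T$ add only a convergent geometric-in-$\alpha$ correction. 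Combining this with the hypercontractive level-$k$ inequality $\sum_{|y|=2k} f_A(y)^2 \leq (e\ln(1/p_m)/k)^{2k}$ and Stirling's estimate $(2k-1)!!\approx \sqrt{2}(2k/e)^k$, each $k$ contributes $O((\alpha\ln^2(1/p_m)/n)^k)$; summing the geometric-in-$k$ series yields a per-cell contribution of $O(\alpha\ln^2(1/p_m)/n)$. A short averaging argument shows that over partitions into at most $2^c$ cells, $\expect_m[\ln^2(1/p_m)]$ is maximized at the uniform partition with value $O(c^2)$, so the averaged non-cycle contribution is $O(\alpha c^2/n)=O(\alpha\gamma^2)\leq O(\gamma^2)$. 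Combining with the cycle contribution, $\expect_m\|\mu_Y^m-\mu_N^m\|_{tvd}^2=O(\gamma^2+\alpha^3)$, and Jensen's inequality yields the advantage bound $O(\gamma+\alpha^{3/2})$.

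The main obstacle will be the combinatorial/Fourier estimation in the non-cycle case: simultaneously controlling the multiplicity $\sum_{y_T=y}(\alpha/n)^{|T|}$ (via careful counting of forests and near-forests with a prescribed odd-degree set) and the level-$k$ Fourier mass via Bonami-Beckner, and summing the resulting doubly-indexed series so that both geometric factors converge with compatible constants. The extension of Lemma~\ref{lm:cycles} needed for the cycle term is more routine, amounting to a union bound over edge-disjoint multi-cycle configurations, but must still be performed carefully to recover the $O(\alpha^3)$ expectation.
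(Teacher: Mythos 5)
Your proposal follows the same Fourier-analytic skeleton as the paper's proof (partition of $\bool^n$ by Alice's messages, Cauchy--Schwarz plus Parseval to pass to Fourier coefficients of $M^Ts$, weight-class decomposition, level-$k$ Fourier bound, counting subsets of edges with a prescribed parity vector dominated by perfect matchings). However, you depart from the paper at a key step: the paper conditions on the event $\E$ that $G$ is cycle-free and has $\le 2\alpha n$ edges (probability $1-O(\alpha^3)$ via Lemma~\ref{lm:cycles}), and pays $O(\alpha^3)$ for $\bar\E$ using the fact that $\|p_M - U_r\|_{tvd}\le 1$ unconditionally. This makes the rest of the analysis proceed entirely on acyclic $G$, where Lemma~\ref{lm:paths} guarantees that every $s$ with $M^Ts=v$ decomposes as paths only. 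You instead split the Fourier sum according to whether $M^Ts = 0$ (cycle space) or not, and attempt to bound each part directly in expectation over $G$. This is a genuinely different decomposition, and in principle it could work, but as written it has a real gap.

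The gap is in the cycle-space term. Your claim that ``a cycle space of dimension $d$ requires $d$ edge-disjoint cycles'' is false: a theta graph (two vertices joined by three internally disjoint paths) has cycle-space dimension $2$ but contains no pair of edge-disjoint cycles. Consequently the probability estimate $\prob[\dim\mathrm{cyc}(G)\ge d]=O(\alpha^{3d})$ does not follow from the argument you give. More seriously, you need $\expect_G[2^{\dim\mathrm{cyc}(G)}-1]=O(\alpha^3)$, which is an \emph{exponential moment} bound, not a tail bound; the dominant risk is complex components, which Lemma~\ref{lm:complex-comp} only shows have probability $O(\alpha^2\log^4 n/n)$, but whose cycle dimension can be polylogarithmic, so the naive bound on the conditional expectation of $2^{\dim\mathrm{cyc}(G)}$ can exceed $n$ and swamp the prefactor. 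This can presumably be repaired (e.g.\ via a Poisson approximation for the number of unicyclic components plus a sharper tail estimate on the excess of complex components), but it is considerably more delicate than the paper's conditioning trick and you have not supplied the argument. A secondary imprecision: the claim that $\expect_m[\ln^2(1/p_m)]$ is ``maximized at the uniform partition'' is not literally true (the map $q\mapsto q\ln^2(1/q)$ is not concave, and already for two cells the maximum is not at $q=1/2$), though the asserted $O(c^2)$ bound does hold by a direct split into $p_m\ge N^{-2}$ and $p_m < N^{-2}$; the paper avoids this entirely by restricting to ``typical'' cells of mass at least $(\gamma/2)2^{n-c}$, which covers a $1-\gamma/2$ fraction of $x$ and contributes $\gamma/2$ to the final $O(\gamma)$ term.
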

The proof of Lemma~\ref{lm:lb-final} is the main result of this section. The proof follows the outline of ~\cite{GKKRW07, VY11}. 
One crucial difference is that we are working with Erd\H{o}s-R\'{e}nyi graphs as opposed to matchings. This requires replacing some components in the proof. For example, we need to prove a new bound on the expected contribution of Fourier coefficients of a typical message to the distribution of $Mx$ as a function of the weight of these coefficients (Lemmas~\ref{lm:paths} and~\ref{lm:weight}). We also need to take into account the fact that cycles, which are unlikely in sufficiently sparse random graphs, can still arise. This leads to an extra term of $\alpha^{3/2}$ in the statement of Lemma~\ref{lm:lb-final}, and requires a careful choice of the parameter $\alpha$ in the proof of Theorems~\ref{thm:main} and~\ref{thm:main-iid}.
We first give definitions and an outline of the argument, and then proceed to the technical details.

Alice's messages induce a partition $A_1, A_2,\ldots, A_{2^c}$ of $\bool^n$, where $c$ is the bit length of Alice's message. First, a simple argument shows that most strings $x\in \bool^r$ get mapped to `large' sets in the partition induced by Alice's messages. We then show  (see Lemma~\ref{lm:l2-dist}) that if $x$ is uniformly random in such a typical set $A_i\subseteq \bool^n$, the distribution of $Mx$ is close to uniform over $\bool^r$, again for a `typical' graph $G$ received by Bob (and hence a typical edge incidence matrix $M$). We then note that the {\bf BHP} problem can be viewed as Bob receiving a sample from one of two distributions: either $Mx$ (\YES case) or $UNIF(\bool^r)$ (\NO case). Since we showed that the distribution of $Mx$ as $x$is uniform in $A_i$ is close to uniform, implying that it is impossible to distinguish between the two cases from one sample with sufficient certainty. Our main contribution here is the extension of the techniques of ~\cite{GKKRW07, VY11} to handle the case when Bob's input is a subcritical Erd\H{o}s-R\'{e}nyi graph as opposed to a matching. This requires replacing some components in the proof. For example, we need to prove a new bound on the expected contribution of Fourier coefficients of a typical message to the distribution of $Mx$ as a function of the weight of these coefficients (Lemmas~\ref{lm:paths} and~\ref{lm:weight}). We also need to take into account the fact that cycles, which are unlikely in sufficiently sparse random graphs, can still arise. This leads to an extra term of $\alpha^{3/2}$ in the statement of Lemma~\ref{lm:lb-final}, and requires a careful choice of the parameter $\alpha$ in the proof of Theorems~\ref{thm:main} and~\ref{thm:main-iid}.
 We now proceed to give the technical details.

As mentioned above, Alice's messages induce a partition $A_1, A_2,\ldots, A_{2^c}$ of $\bool^n$, where $c$ is the bit length of Alice's message. Since there are $2^c$ such sets, at least a $1-\gamma/2$ fraction of 
$\bool^n$ is contained in sets $A_i$ whose size is at least $(\gamma/2) 2^{n-c}$.  Since our protocol achieves advantage at least $\gamma$ over random guessing on the input distribution, it must achieve advantage at least $1/2+\gamma/3$ conditional on Alice's vector $x$ belonging to one of such large sets. Fix such a set $A\subset \{0, 1\}^{n}$, and let $c'=c+\log(2/\gamma)$, so that $|A|\geq 2^{n-c'}$. Let $f:\{0, 1\}^{n}\to \{0, 1\}$ be the indicator of $A$.

Our analysis relies on the properties of the Fourier transform of the function $f$, similarly to~\cite{GKKRW07}. We use the following normalization of the Fourier transform:
$$
\hat f(v)=\frac1{2^n} \sum_{x\in \{0, 1\}^n} f(x) (-1)^{x\cdot v}.
$$

We will use the following bounds on the Fourier mass of $f$ contributed by coefficients of various weight:
\begin{lemma}[Lemma~6 in \cite{GKKRW07}]\label{lm:weight-bounds}
Let $A\subseteq \bool^n$ of size at least $2^{n-c'}$. Then for every $\ell\in \{1, 2,\ldots, 4c'\}$
$$
\frac{2^{2n}}{|A|^2}\sum_{v: |v|=\ell} \hat f(v)^2\leq \left(\frac{4\sqrt{2}c'}{\ell}\right)^\ell.
$$
\end{lemma}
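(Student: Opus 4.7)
The plan is to prove this as a standard level-$\ell$ Fourier inequality by applying the Bonami-Beckner hypercontractive inequality to the indicator $f$ of $A$. Write $p:=|A|/2^n\geq 2^{-c'}$, so that $2^{2n}/|A|^2=1/p^2$ and the target bound is equivalent to $\sum_{|v|=\ell}\hat f(v)^2\leq p^2\cdot(4\sqrt{2}\,c'/\ell)^\ell$. Recall that the noise operator $T_\rho$ with $\rho\in[0,1]$ acts diagonally on characters by $\widehat{T_\rho f}(v)=\rho^{|v|}\hat f(v)$, and that Bonami-Beckner gives $\|T_\rho f\|_2\leq \|f\|_{1+\rho^2}$. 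Since $f$ is $\{0,1\}$-valued, $\|f\|_q=p^{1/q}$ for every $q\geq 1$.

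Expanding $\|T_\rho f\|_2^2$ in the Fourier basis and dropping all levels other than $\ell$,
$$\rho^{2\ell}\sum_{|v|=\ell}\hat f(v)^2\ \leq\ \|T_\rho f\|_2^2\ \leq\ \|f\|_{1+\rho^2}^2\ =\ p^{2/(1+\rho^2)},$$
so $\sum_{|v|=\ell}\hat f(v)^2\leq \rho^{-2\ell}\,p^{2/(1+\rho^2)}$. Writing $p^{2/(1+\rho^2)}=p^2\cdot p^{-2\rho^2/(1+\rho^2)}$ and using $p^{-1}\leq 2^{c'}$ together with $\rho^2/(1+\rho^2)\leq \rho^2$, the second factor is bounded by $2^{2c'\rho^2}$. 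Dividing through by $p^2$ yields
$$\frac{2^{2n}}{|A|^2}\sum_{|v|=\ell}\hat f(v)^2\ \leq\ \rho^{-2\ell}\cdot 2^{2c'\rho^2}.$$

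To finish, I would optimize the free parameter $\rho$ by setting $\rho^2:=\ell/(4c')$, which is a legal choice ($\rho\in(0,1]$) precisely because of the hypothesis $\ell\leq 4c'$; this is the only place the range restriction on $\ell$ is used. With this choice $\rho^{-2\ell}=(4c'/\ell)^\ell$ and $2^{2c'\rho^2}=2^{\ell/2}=(\sqrt{2})^\ell$, so multiplying gives exactly $(4\sqrt{2}\,c'/\ell)^\ell$, as claimed.

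The result is a textbook KKL/Talagrand-style level-$\ell$ inequality, so there is no real obstacle beyond bookkeeping constants. The only point requiring slight care is that the two crude steps $\rho^2/(1+\rho^2)\leq \rho^2$ and $p^{-1}\leq 2^{c'}$ are what produce the constant $4\sqrt{2}$ rather than the (slightly sharper) $2e\ln 2$ one obtains by optimizing without these simplifications; the looser form is chosen because the clean choice $\rho^2=\ell/(4c')$ matches the hypothesis $\ell\leq 4c'$ exactly and keeps the final expression tidy.
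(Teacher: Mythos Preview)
Your proof is correct. Note that the paper does not actually prove this lemma itself: it is quoted verbatim as Lemma~6 of \cite{GKKRW07} and used as a black box. Your argument via the Bonami--Beckner hypercontractive inequality, with the choice $\rho^2=\ell/(4c')$, is exactly the standard proof of this level-$\ell$ inequality and is essentially the same argument that appears in \cite{GKKRW07}; the constants and the role of the hypothesis $\ell\le 4c'$ are handled correctly.
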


As before, we denote the graph that Bob receives as input by $G$, and the number of edges in $G$ by $r$. Let the edge incidence matrix of $G$ be denoted by $M$, i.e. $M_{eu}=1$ iff $u\in [n]$ is an endpoint of $e\in {[n] \choose 2}$. We are interested in the distribution of $Mx$, where $x$ is uniformly random in $A$. 
For $z\in \{0, 1\}^{r}$ let 
$$
p_M(z)=\frac{|\{x\in A: Mx=z\}|}{|A|}.
$$
Note that $p_M(z)$ is a function of the message $A$. We will supress this dependence in what follows to simplify notation.  This will not cause any ambiguity since $A$ is fixed as a typical large set arising from Alice's partition.  We would like to prove that $p_M(z)$ is close to uniform. We will do that by bounding the Fourier mass in positive weight coefficients of $p_M(z)$. By the same calculation as in ~\cite{GKKRW07} (Lemma~10), we have 
\begin{equation*}
\begin{split}
\wh{p_M}(s)&=\frac1{2^r} \sum_{z\in \{0, 1\}^r} p_M(z) (-1)^{z\cdot s}\\
&=\frac1{|A|2^r} \left(|\{x\in A: (Mx)\cdot s=0\}|-|\{x\in A: (Mx)\cdot s=1\}|\right)\\
&=\frac1{|A|2^r} \left(|\{x\in A: x\cdot (M^Ts)=0\}|-|\{x\in A: x\cdot(M^Ts)=1\}|\right)\\
&=\frac1{|A|2^r} \sum_{x\in \{0, 1\}^n} f(x)\cdot (-1)^{x\cdot (M^Ts)}\\
&=\frac{2^n}{|A|2^r} \wh{f}(M^Ts),\\
\end{split}
\end{equation*}
and
\begin{equation}\label{eq:tv2energy}
\begin{split}
||p_M-U_r||^2_{tvd}&\leq 2^r||p_M-U_r||^2_{2}\\
&= 2^{2r}\sum_{s\in \bool^r, s\neq 0} 	\wh{p}_M(s)^2\\
&=\frac{2^{2n}}{|A|^2}\sum_{s\in \bool^r, s\neq 0} 	\wh{f}(M^Ts)^2.
\end{split}
\end{equation}
Here the first transition is by Cauchy-Schwartz, the second is Parseval's equality, and $U_r$ is the uniform distribution over $\bool^r$, which we also denote by $UNIF(\bool^r)$.

It is convenient to write 
\begin{equation}\label{eq:weight-classes}
\begin{split}
||p_M-U_r||^2_{tvd}&\leq \sum_{s\in \{0, 1\}^r, s\neq 0^r} \wh{f}(M^Ts)^2\\
&=\sum_{v\in \{0, 1\}^n} \wh{f}(v)^2\cdot |\{s\in \{0, 1\}^r, s\neq 0, v=M^Ts\}|\\
&=\sum_{\ell\geq 0} \sum_{v\in \{0, 1\}^n, wt(v)=\ell} \wh{f}(v)^2\cdot |\{s\in \{0, 1\}^r, s\neq 0, v=M^Ts\}|
\end{split}
\end{equation}

We note that the vector $s\in \bool^r$  assigns numbers in $\bool$ to edges of Bob's graph $G$ (the interpretation of $s$ as a vector in $\bool^r$ requires an implicit numbering of edges; this numbering is implicitly defined by the incidence matrix $M\in \bool^{r\times n}$). The analysis to follow will bound the expectation of the summation on the lhs of \eqref{eq:weight-classes} with respect to the edge incidence matrix $M$ of an Erd\H{o}s-R\'{e}nyi graph. In order to achieve this, we will bound the expecation of the rhs of \eqref{eq:weight-classes}.  crucially using the interplay between two bounds. First,  we will prove that the expected (over the random graph $G$, and hence its edge incidence matrix $M$) number of representations of a vector $v\in \bool^n$ as $M^T s$ for $s\in \bool^r$ decays  with the weight of $v$. On the other hand, the amount of $\ell_2$ mass on the Fourier coefficients $\wh{f}(v)$ of given weight $\ell$ does not grow too fast as a function of the weight class by Lemma~\ref{lm:weight-bounds}. Before proceeding to the proof, we summarize relevant notation.

\paragraph{Notation} The set of Alice's inputs that correspond to a typical message is denoted by $A\subseteq \bool^n$, its indicator function is denoted by $f:\bool^n\to \bool$.  Bob's graph, which is sampled from the distribution $\G_{n, \alpha/n}$, is denoted by $G$, its edge incidence matrix is denoted by $M\in \bool^{r\times n}$. By \eqref{eq:weight-classes}, in order to bound the distance from $p_M$ to uniformity, it is sufficient to bound the $\ell_2$ norm of the nonzero weight part of the Fourier spectrum of $f$.  For each Fourier coefficient $\wh{f}(v)$ we need to bound the number of ways of representing  $v\in \bool^n$ as $M^Ts, s\in \bool^r$. We will bound this quantity in terms of the weight of $v$. In order to prove such a bound, we start by showing a structural property of vectors $s\in \bool^r$ that satisfy $v=M^Ts$ for a given $v\in \bool^n$:

\begin{lemma}\label{lm:paths}
Fix $v\in \{0, 1\}^{n}$. Let $s\in \{0, 1\}^r$, and let $F=(V, E_F)$ contain those edges of $G$ that belong to the support of $s$. Then $s\in \{0, 1\}^r, s\neq 0^r$ satisfies $v=M^Ts$ if and only if $F$ is an edge-disjoint union of paths connecting pairs of nonzero elements of $v$ and cycles.  In particular, $v$ must have even weight. If $G$ contains no cycles, the weight of $v$ must be positive.
\end{lemma}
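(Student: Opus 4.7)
The plan is as follows. First, I would unpack the equation $v = M^Ts$ in graph-theoretic terms: since each row of $M$ has exactly two ones (at the endpoints of the corresponding edge), one computes $(M^Ts)_u = \sum_{e \ni u} s_e \pmod 2 = \deg_F(u) \pmod 2$, where $\deg_F(u)$ denotes the degree of $u$ in $F$. Hence $v = M^Ts$ is equivalent to saying that the set of odd-degree vertices of $F$ is exactly the support $\{u : v_u = 1\}$ of $v$.

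The reverse direction is then immediate: if $F$ decomposes into edge-disjoint cycles together with simple paths whose endpoints are the support of $v$ paired up, then each cycle contributes even degree at every one of its vertices and each path contributes odd degree to its two endpoints and even degree to interior vertices, so $\deg_F(u) \equiv v_u \pmod 2$ everywhere. For the forward direction I would invoke the classical edge decomposition theorem: the edges of any finite multigraph can be written as an edge-disjoint union of simple cycles together with simple paths whose endpoints are precisely the odd-degree vertices, paired up arbitrarily within each connected component. The standard proof is a greedy trail argument — inside each component, start at an odd-degree vertex and trace a trail until it gets stuck, which can only happen at another odd-degree vertex since at every intermediate vertex the number of unused incident edges is even; remove the resulting path and iterate until no odd-degree vertices remain, at which point what is left is Eulerian on each component and decomposes into edge-disjoint cycles. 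Applying this to $F$ yields exactly the advertised structure.

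The two ``in particular'' consequences then fall out. Since every finite graph has an even number of odd-degree vertices, $|\mathrm{supp}(v)|$ is even, so $v$ has even weight. If $G$ is acyclic, then so is $F \subseteq G$, so the decomposition contains no cycle component; since $s \neq 0^r$ there is at least one edge in $F$, giving at least one path and hence at least two odd-degree vertices, so $v$ has positive (in fact, at least $2$) weight.

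The main obstacle is essentially non-existent: the whole lemma is a direct translation between the mod-$2$ linear equation $v = M^Ts$ and a textbook edge decomposition of graphs. The only points to state carefully are that the paths and cycles in the decomposition are edge-disjoint but need not be vertex-disjoint, and that the pairing of odd-degree vertices within each connected component is not canonical — any such pairing produces a valid decomposition, and this flexibility will matter later when the representations of a given $v$ as $M^T s$ are counted.
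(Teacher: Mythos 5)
Your proof is correct and follows essentially the same route as the paper: read $v = M^T s$ as the statement that the odd-degree vertices of $F$ are exactly $\mathrm{supp}(v)$, then invoke the path/cycle edge decomposition. The paper's version is terser (it asserts the decomposition without the greedy-trail justification and does not spell out the reverse direction), but the underlying argument is the same.
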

\begin{proof}
Note that $M^Ts$ is the sum of incidence vectors of edges whose values in $s$ are nonzero. Since $M^Ts=v$, it must be that all vertices $i\in V$ such that $v_i=0$ have even degrees in the subgraph $F$, and all vertices $i$ with $v_i=1$ have odd degrees. Thus implies that the edge set if $F$ can be decomposed into a union of edge-disjoint paths that connect nonzeros in $v$ and a disjoint union of cycles, as required. In particular, $v$ must have even weight, strictly positive if $G$ contains no cycles.
\end{proof}

We now fix $v\in \{0, 1\}^n$ of even weight $\ell$ and bound the quantity $\expect_M\left[|\{s\in \{0, 1\}^r, v=M^Ts\}|\right]$. More precisely, in Lemma~\ref{lm:weight} below we only bound a related quantity, in which we exclude $s$ that contains cycles from consideration. The case of cycles is handled directly in the proof of our main lemma that bounds the distance of $p_M$ to uniformity (Lemma~\ref{lm:l2-dist}).
\begin{lemma}\label{lm:weight}
Let $v\in \{0, 1\}^n$ have even weight $\ell$. Let $G$ be a random graph sampled according to $\G_{n, \alpha/n}$, and let $M\in \bool^{r\times n}$ be its edge incidence matrix. 
Then 
$$
\expect_M\left[|\{s\in \{0, 1\}^r, s\neq 0^r, v=M^Ts, \text{~s~is a union of edge-disjoint paths}\}|\right]\leq 2^\ell (\ell/2)! (C\alpha /n)^{\ell/2}
$$
\end{lemma}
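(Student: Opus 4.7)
The plan is to enumerate the eligible $s$'s via Lemma~\ref{lm:paths} and then upper bound the count by a quantity that factors over the pairs of nonzero coordinates of $v$. Every $s$ counted in the statement has a support that decomposes into $\ell/2$ edge-disjoint simple paths in $G$ whose endpoints are exactly the $\ell$ nonzero positions of $v$. Hence the number of valid $s$'s is at most the number of tuples $(\sigma, P_1, \ldots, P_{\ell/2})$, where $\sigma$ is a perfect pairing $\{(u_i, w_i)\}_{i=1}^{\ell/2}$ of the nonzero positions of $v$ and each $P_i$ is a simple path in $G$ from $u_i$ to $w_i$, with the $P_i$'s edge-disjoint. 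A given $s$ may be overcounted (by the number of ways its support decomposes into paths), but that is fine since we only want an upper bound.

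Taking expectation over $M$, each edge-disjoint tuple contributes probability $(\alpha/n)^{\sum_i |E(P_i)|}$. Since all summands are non-negative, dropping the edge-disjointness constraint only increases the sum; and then the bound factors:
$$
\expect_M\bigl[|\{s : \ldots\}|\bigr] \;\le\; \sum_{\sigma} \;\prod_{i=1}^{\ell/2}\; \Bigl(\sum_{P_i \text{ simple path from } u_i \text{ to } w_i} (\alpha/n)^{|E(P_i)|}\Bigr).
$$
This factorization is the key step; it works precisely because edge-disjointness makes $\Pr[\bigcup_i E(P_i) \subseteq E(G)] = \prod_i (\alpha/n)^{|E(P_i)|}$, and summing over the larger unconstrained set of tuples can only inflate the expectation.

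For a fixed pair of endpoints, the number of simple paths of length $k$ is bounded by $n^{k-1}$ (choices of $k-1$ intermediate vertices), so each factor satisfies
$$
\sum_{k\geq 1} n^{k-1}(\alpha/n)^k \;=\; \frac{\alpha/n}{1-\alpha} \;\le\; \frac{2\alpha}{n},
$$
using $\alpha \leq 1/2$ (the full lemma statement assumes $\alpha < 1/16$, which is more than enough). Thus each pairing contributes at most $(2\alpha/n)^{\ell/2}$.

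It remains to count pairings: there are $(\ell-1)!! = \ell!/(2^{\ell/2}(\ell/2)!)$ of them, and the identity $\binom{\ell}{\ell/2} \le 2^\ell$ gives $\ell! \le 2^\ell ((\ell/2)!)^2$, so $(\ell-1)!! \le 2^{\ell/2}(\ell/2)!$. Combining,
$$
\expect_M\bigl[|\{s : \ldots\}|\bigr] \;\le\; 2^{\ell/2}(\ell/2)! \cdot (2\alpha/n)^{\ell/2} \;=\; 2^{\ell}(\ell/2)!\,(\alpha/n)^{\ell/2},
$$
which is the claimed bound (with an absolute constant $C$ absorbing the factor $2$ on $\alpha$). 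The only non-routine point is the factorization across pairs after dropping edge-disjointness; every other step is elementary counting and a crude geometric series.
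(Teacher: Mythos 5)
Your proof is correct and follows essentially the same route as the paper's: enumerate valid $s$'s by pairings of the nonzeros of $v$ together with a choice of (simple) path for each pair, bound the expected contribution of one pair by the geometric series $\sum_{q\ge1} n^{q-1}(\alpha/n)^q = O(\alpha/n)$, and multiply by the number of pairings. Your write-up is somewhat more careful than the paper's on two points that the paper glosses over — you state explicitly that dropping the edge-disjointness constraint only enlarges the sum (which is what makes the product factorization legitimate), and you observe that a single $s$ may be counted several times but only in the favorable direction — and you use the sharper pairing count $(\ell-1)!!\le 2^{\ell/2}(\ell/2)!$ in place of the paper's cruder $2^\ell(\ell/2)!$; these are cosmetic improvements, not a different argument.
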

\begin{proof}

By Lemma~\ref{lm:paths} if $v=M^Ts$, it must be that $s$ is a union of edge-disjoint paths and cycles, where the endpoints of the paths are nonzeros of $v$.  Thus, we are interested in unions of paths $P_i, i=1,\ldots, \ell/2$, connecting nonzeros of $v$.  

We now fix a pairing of nonzeros of $v$. For notational simplicty, assume that path $P_i$ connects the $(2i-1)$-st nonzero of $v$ to the $2i$-th for $i=1,\ldots, \ell/2$. 
For one such path $P_i$  one has $\prob[P_i\subseteq G]=(\alpha/n)^q$, where $q$ is the length of $P_i$.  By a union bound over all path lengths $q\geq 1$ and all paths of length $q$ connecting the $(2i-1)$-st nonzero to the $2i$-th nonzero we have
\begin{equation*}
\begin{split}
&\prob[P_i\subseteq G]\leq \sum_{q\geq 1}  n^{q-1}\cdot (\alpha/n)^{q}\leq C\alpha /n\\
\end{split}
\end{equation*}
for a constant $C>0$.  Since the paths are edge disjoint, we have 
\begin{equation}\label{eq:sub-mult}
\begin{split}
&\prob[P_i\subseteq G\text{~for all~}i=1,\ldots, \ell/2]\leq \prod_{i=1}^{\ell/2} \prob[P_i\subseteq G]\leq (C\alpha /n)^{\ell/2}.\\
\end{split}
\end{equation}

It remains to note that there are no more than $2^\ell (\ell/2)!$ ways of pairing up the nonzeros of $v$. Putting this together with ~\eqref{eq:sub-mult}  yields 
\begin{equation*}
\begin{split}
\expect_M\left[|\{s\in \{0, 1\}^r, s\neq 0^r, \right.&\left.v=M^Ts,\text{~$s$~is a union of edge-disjoint paths}\}|\right]\\
&\leq 2^\ell (\ell/2)!\cdot\sum_{P_1,P_2,\ldots, P_{\ell/2}}  \prob[P_i\subseteq G\text{~for all~}i=1,\ldots, \ell/2]\leq 2^\ell (\ell/2)!\cdot (C\alpha/n)^{\ell/2}\\
\end{split}
\end{equation*}
as required.

\end{proof}

Equipped with Lemma~\ref{lm:weight}, we can now prove that $p_M$ is close to uniform:
\begin{lemma}\label{lm:l2-dist}
Let $A\subseteq \bool^n$ of size at least $2^{n-c'}$, and let $f:\bool^n\to \{0, 1\}$ be the indicator of $A$. Let $G$ be a random graph sampled according to $\G_{n, \alpha/n}$, where $\alpha\in (\lba, 1/16)$ is smaller than an absolute constant.  Suppose that $c'\leq \gamma \sqrt{n}+\log(2/\gamma)$ for some $\gamma\in (\lba, 1)$ smaller than an absolute constant. Then 
\begin{equation*}
\expect_M[||p_M-U_r||^2_{tvd}]=O(\gamma^2+\alpha^3).
\end{equation*}
\end{lemma}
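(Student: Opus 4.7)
The plan is to follow the Fourier-analytic framework set up in equations \eqref{eq:tv2energy}--\eqref{eq:weight-classes}, taking the expectation over the random edge incidence matrix $M$ and then bounding each weight class separately. The key new complication relative to the matching setting of \cite{GKKRW07, VY11} is the possible presence of cycles in the Erd\H{o}s-R\'{e}nyi graph $G$, which breaks the clean correspondence between $v\in\bool^n$ and the set of valid $s\in\bool^r$ (the kernel of $M^T$ is nontrivial). I would handle this by splitting the expectation based on whether $G$ is acyclic.

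First, on the event that $G$ contains a cycle, I would use the trivial bound $||p_M-U_r||_{tvd}^2\leq 1$ together with Lemma~\ref{lm:cycles}, which gives $\expect_M[||p_M-U_r||_{tvd}^2\cdot \mathbf{1}_{G\text{ has a cycle}}]\leq \prob[G\text{ has a cycle}]=O(\alpha^3)$. This is precisely the source of the $\alpha^3$ term in the claimed bound.

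For the complementary event that $G$ is acyclic, Lemma~\ref{lm:paths} guarantees that any $s\neq 0$ with $M^Ts=v$ requires $v$ to have strictly positive even weight and corresponds to a disjoint union of paths connecting pairs of nonzeros of $v$. Applying \eqref{eq:weight-classes} (restoring the $2^{2n}/|A|^2$ factor from \eqref{eq:tv2energy}) and exchanging expectation and sum yields
\begin{equation*}
\expect_M\left[||p_M-U_r||_{tvd}^2\cdot \mathbf{1}_{\text{acyclic}}\right]\leq \sum_{j\geq 1}\frac{2^{2n}}{|A|^2}\sum_{v:\,wt(v)=2j}\wh{f}(v)^2\cdot \expect_M\bigl|\{s: v=M^Ts,\ s\text{ a union of paths}\}\bigr|.
\end{equation*}
I would bound the inner expectation by Lemma~\ref{lm:weight} and the Fourier mass on weight-$2j$ coefficients by Lemma~\ref{lm:weight-bounds} when $2j\leq 4c'$, and by the trivial Parseval bound $\sum_v \wh{f}(v)^2=|A|/2^n$ in the tail $2j>4c'$.

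Using $c'\leq \gamma\sqrt{n}+\log(2/\gamma)=O(\gamma\sqrt{n})$ (which is valid because $\gamma>\lba$ makes $\gamma\sqrt{n}$ dominate $\log(2/\gamma)$), the $j$-th term in the main regime simplifies, after applying $j!\leq j^j$ and cancelling powers of $n$, to at most $(C'\gamma^2\alpha/j)^j$ for an absolute constant $C'$. This is a convergent series dominated by $j=1$, contributing $O(\gamma^2\alpha)\leq O(\gamma^2)$. Combined with the cycle term this yields $O(\gamma^2+\alpha^3)$. The main technical step is controlling the tail $2j>4c'$: here one loses the sharp Fourier bound and must rely on the rapid decay of $(\alpha/n)^j$ (via Stirling $j!\sim (j/e)^j$) to show that already for $j$ of order $c'$ the per-term bound is exponentially small in $n$, making the tail negligible compared to $O(\gamma^2)$.
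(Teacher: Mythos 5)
Your proposal tracks the paper's own proof very closely: the same split on the event that $G$ is acyclic (with Lemma~\ref{lm:cycles} contributing the $O(\alpha^3)$ term), the same passage through \eqref{eq:weight-classes} and Lemma~\ref{lm:paths} to reduce to path-representations, the same use of Lemma~\ref{lm:weight} against Lemma~\ref{lm:weight-bounds} in the regime $\ell\le 4c'$, and the same cutoff at $\ell=4c'$ with a crude bound in the tail. The only substantive deviation is that you do not also condition on $r\le 2\alpha n$ (the paper folds this into its event $\E$ to cap the weight at $4\alpha n$); this turns out to be harmless since the bound $2^\ell(\ell/2)!(C\alpha/n)^{\ell/2}$ from Lemma~\ref{lm:weight} is already monotonically decreasing on $[0,n]$ for $C\alpha<1/2$, so your version is if anything a minor simplification.

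One point worth tightening: your justification of the tail $\ell>4c'$ says the per-term bound is ``exponentially small in $n$'' for $j$ of order $c'$. That is only true when $c'=\Omega(\sqrt n)$; in general $c'$ can be as small as $\log(2/\gamma)$ (when the communication $c$ is tiny), and then the tail is not negligible relative to $\gamma^2$ but merely $O(\gamma^2)$. Concretely, bounding the high-weight Fourier mass by Parseval costs a factor $2^n/|A|\le 2^{c'}$, and the monotone decrease of Lemma~\ref{lm:weight}'s bound gives a factor at most $2^{4c'}(2c')!(C\alpha/n)^{2c'}\le (c_0\alpha\gamma^2)^{2c'}$ using $c'\le 2\gamma\sqrt n$; for $\alpha,\gamma$ small the product is $\le 2^{-\Omega(c')}$, and the observation that actually closes the argument is $c'\ge\log_2(2/\gamma)$, which forces $2^{-\Omega(c')}=\gamma^{\Omega(1)}=O(\gamma^2)$ after adjusting constants. (The paper is also terse here, invoking ``$c'\ge1$''.) So the tail estimate holds for the right reason but not quite the reason you state; otherwise the argument is complete and correct.
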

\begin{proof}
Let $\E$ denote the event that the graph $G$ contains no cycles and the number of edges $r$ in $G$ is at most $2\alpha n$. By Lemma~\ref{lm:cycles}, Lemma~\ref{lm:complex-comp} and Chernoff bounds we have $\prob[\E]\geq 1-O(\alpha^3)-e^{-\Omega(\alpha n)}\geq 1-O(\alpha^3)$ as long as $\alpha\geq \lba$.
We have
\begin{equation*}
\begin{split}
\expect_M[||p_M-U_r||^2_{tvd}]\leq \expect_M[||p_M-U_r||^2_{tvd}|\E]\prob[\E]+\prob[\bar \E]\leq \expect_M[||p_M-U_r||^2_{tvd}|\E]\prob[\E]+O(\alpha^3)
\end{split}
\end{equation*}
since $||p_M-U_r||_{tvd}\leq 1$ always. We now concentrate on bounding the first term, i.e. we are bounding the expectation of $||p_M-U_r||_{tvd}^2$ conditional on $G$ having no cycles. 

By~\eqref{eq:weight-classes} we have
\begin{equation}\label{eq:fouvb}
\begin{split}
\expect_M[||p_M-U_r||^2_{tvd}|\E]&\leq \frac{2^{2n}}{|A|^2}\expect_M\left[\sum_{s\in \bool^r} 	\wh{f}(M^Ts)^2|\E\right]\\
&= \frac{2^{2n}}{|A|^2}\sum_{x\in \{0, 1\}^n} \wh{f}(v)^2\cdot \expect_M\left[|\{s\in \{0, 1\}^r, v=M^Ts\}||\E\right]\\
&= \frac{2^{2n}}{|A|^2}\sum_{\text{even~}\ell\geq 0}\sum_{v\in \{0, 1\}^n\text{~of weight~}\ell} \wh{f}(v)^2\cdot \expect_M\left[|\{s\in \{0, 1\}^r, v=M^Ts\}||\E\right]\\
&= \frac{2^{2n}}{|A|^2}\sum_{\text{even~}\ell\geq 0}^{4\alpha n}\sum_{v\in \{0, 1\}^n\text{~of weight~}\ell} \wh{f}(v)^2\cdot \expect_M\left[|\{s\in \{0, 1\}^r, v=M^Ts\}||\E\right]\\
\end{split}
\end{equation}
Here the first three lines are by taking conditional expectations in \eqref{eq:weight-classes}, and the restriction in the summation in last line follows by conditioning in $\E$: since the number of edges in $G$ conditional on $\E$ is at most $2\alpha n$,  the weight of $M^Ts$ is bounded by $4\alpha n$ for all $s\in \bool^r$. 

We further break this summation into two parts $\ell\in [0:4c')$ and $\ell\in [4c':4\alpha n]$. 
We first consider $\ell\in [0:4c)$. First, by Lemma~\ref{lm:paths} and conditioning on $\E$ (i.e. that $G$ does not contain cycles), it is sufficient to consider $\ell>0$. Further, we have
\begin{equation*}
\begin{split}
&\frac{2^{2n}}{|A|^2}\sum_{\text{even~}\ell=2}^{4c'-2}\sum_{v\in \{0, 1\}^n\text{~of weight~}\ell} \wh{f}(v)^2\cdot \expect_M\left[|\{s\in \{0, 1\}^r, v=M^Ts\}||\E\right]\\
&=\frac{2^{2n}}{|A|^2}\sum_{\text{even~}\ell=2}^{4c'-2}\sum_{v\in \{0, 1\}^n\text{~of weight~}\ell} \wh{f}(v)^2\cdot \expect_M\left[|\{s\in \{0, 1\}^r, v=M^Ts, s\text{~a union of edge-disjoint paths}\}||\E\right]\\
&\leq\frac1{\prob[\E]}\frac{2^{2n}}{|A|^2}\sum_{\text{even~}\ell=2}^{4c'-2}\sum_{v\in \{0, 1\}^n\text{~of weight~}\ell} \wh{f}(v)^2\cdot \expect_M\left[|\{s\in \{0, 1\}^r, v=M^Ts, s\text{~a union of edge-disjoint paths}\}|\right],\\
\end{split}
\end{equation*}
where the first transition follows by Lemma~\ref{lm:paths} using the fact that  $G$ contains no cycles conditional on $\E$. We are now in the setting of Lemma~\ref{lm:weight}. Using Lemma~\ref{lm:weight} and Lemma~\ref{lm:weight-bounds}, we get
\begin{equation}\label{eq:dfioshv}
\begin{split}
\frac1{\prob[\E]}\frac{2^{2n}}{|A|^2}\sum_{\text{even~}\ell=2}^{4c'-2}\sum_{v\in \{0, 1\}^n\text{~of weight~}\ell} \wh{f}(v)^2&\cdot \expect_M\left[|\{s\in \{0, 1\}^r, v=M^Ts, s\text{~a union of edge-disjoint paths}\}|\right]\\
&\leq\frac1{\prob[\E]}\sum_{\text{even~}\ell=2}^{4c'-2} 2^\ell (\ell/2)!(C\alpha /n)^{\ell/2} \frac{2^{2n}}{|A|^2}\sum_{v\in \{0, 1\}^n\text{~of weight~}\ell} \wh{f}(v)^2\\
&\leq \frac1{\prob[\E]}\sum_{\text{even~}\ell=2}^{4c'-2}\frac{(\ell/2)! (C \alpha)^{\ell/2}}{n^{\ell/2}}\left(\frac{4\sqrt{2} c'}{\ell}\right)^{\ell}\\
&\leq \frac1{\prob[\E]}\sum_{\text{even~}\ell=2}^{4c'-2}\frac{(4C \alpha)^{\ell/2}}{(n/\ell)^{\ell/2}}\left(\frac{4\sqrt{2} c'}{\ell}\right)^{\ell}\\
&\leq \frac1{\prob[\E]}\sum_{\text{even~}\ell=2}^{4c'-2}((c')^2/n)^{\ell/2}\left(C'/\ell\right)^{\ell/2}\\
&=O(\gamma^2)
\end{split}
\end{equation}
whenever $c'\leq \gamma \sqrt{n}+\log(2/\gamma)$ and $\gamma\geq \lba$ (which is satisfied by the assumptions of the lemma). Here the second line is by Lemma~\ref{lm:weight}, the third line is by Lemma~\ref{lm:weight-bounds} and the fourth and fifth lines are by algebraic manipulation.  Note that we used the fact that $\frac1{\prob[\E]}=1/(1-O(\alpha^3))=O(1)$.

We now consider the range $\ell\in [4c':4\alpha n]$.  Note that the function $2^\ell (\ell/2)!\left(\frac{C\alpha}{n}\right)^{\ell/2}$ is decreasing in $\ell$ for $\ell<n/4$
$$
\frac{2^\ell (\ell/2)!(C\alpha)^{\ell/2}/n^{\ell/2}}{2^{\ell+2} ((\ell+2)/2)!(C\alpha)^{(\ell+2)/2}/n^{(\ell+2)/2}}=\frac{n}{4(\ell+1)}\geq 1
$$
since $C\alpha<1$.
Since $\ell\leq 4\alpha n<n/4$, we have
\begin{equation}\label{eq:234sdc}
\begin{split}
&\frac{2^{2n}}{|A|^2}\sum_{\text{even~}\ell=4c'}^{4\alpha n}\sum_{v\in \{0, 1\}^n\text{~of weight~}\ell} \wh{f}(v)^2\cdot \expect_M\left[|\{s\in \{0, 1\}^r, v=M^Ts\}||\E\right]\\
&\leq \frac1{\prob[\E]}2^{4c'} ((4c')/2)!(C\alpha /n)^{4c'/2}\leq (4c')^{4c'}(C\alpha /n)^{4c'/2}=\frac1{\prob[\E]}(C\alpha 16(c')^2 /n)^{4c'/2}=O(\gamma^2)\\
\end{split}
\end{equation}
since $c'\geq 1$.
Putting \eqref{eq:fouvb} together with \eqref{eq:dfioshv} and \eqref{eq:234sdc} completes the proof.
\end{proof}
We will also need the following simple lemma, whose proof is given in Appendix~\ref{app:reduction}:
\begin{lemma}\label{lm:tvd-cond}
Let $(X, Y^1), (X, Y^2)$ be random variables taking values on finite sample space $\Omega=\Omega_1\times \Omega_2$. For any $x\in \Omega_1$ let $Y^i_x, i=1,2$ denote the conditional distribution of $Y^i$ given the event $\{X=x\}$. Then 
$$
||(X, Y^1)-(X, Y^2)||_{tvd}=\expect_X[||Y^1_X-Y^2_X||_{tvd}].
$$
\end{lemma}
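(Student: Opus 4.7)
The plan is to proceed by directly unfolding the definition of total variation distance and factoring out the common marginal of $X$. The key observation that makes the identity work is that both random variables $(X, Y^1)$ and $(X, Y^2)$ share the same first coordinate $X$, so their marginal distributions on $\Omega_1$ coincide; call this common marginal $p_X$. Then the joint pdfs factor as $p_{(X, Y^i)}(x,y) = p_X(x)\, p_{Y^i_x}(y)$ for $i=1,2$, using exactly the definition of the conditional distributions $Y^i_x$ from the statement.

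Starting from the $\ell_1$ form of total variation distance, I would write
\[
\|(X, Y^1) - (X, Y^2)\|_{tvd} = \tfrac{1}{2} \sum_{x \in \Omega_1} \sum_{y \in \Omega_2} \bigl| p_{(X,Y^1)}(x,y) - p_{(X,Y^2)}(x,y) \bigr|,
\]
substitute the factorization above, pull the nonnegative factor $p_X(x)$ outside the inner absolute value, and recognize the inner sum $\tfrac{1}{2}\sum_{y} |p_{Y^1_x}(y) - p_{Y^2_x}(y)|$ as $\|Y^1_x - Y^2_x\|_{tvd}$. The outer sum $\sum_{x} p_X(x) \|Y^1_x - Y^2_x\|_{tvd}$ is then by definition $\expect_X[\|Y^1_X - Y^2_X\|_{tvd}]$, which gives the claimed equality.

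There is no real obstacle here; the lemma is essentially a bookkeeping identity. The only subtlety worth flagging is the implicit assumption that the marginal of $X$ is the same under both joint distributions, which is forced by the notation (both variables use the same symbol $X$ in their first coordinate) and which justifies interpreting $\expect_X[\cdot]$ unambiguously with respect to this common marginal. Since the sample space is finite, no measure-theoretic care is required and the exchange of sums is trivially valid.
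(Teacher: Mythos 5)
Your proposal is correct and follows essentially the same route as the paper's own proof: expand the total variation distance as the $\ell_1$ sum over $\Omega_1\times\Omega_2$, factor each joint pdf as $p_X(x)\,p_{Y^i_x}(y)$, pull $p_X(x)$ out of the absolute value, and recognize the inner and outer sums as a conditional TVD and an expectation respectively. Your explicit remark that the common symbol $X$ forces the two marginals to coincide is a worthwhile clarification, but the substance of the argument is the same.
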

We can now prove the main result of this section, namely that no algorithm that uses $o(\sqrt{n})$ can get substantial advantage over random guessing for \DBHP:

\newcommand{\A}{{{P}}}
\newcommand{\B}{{{Q}}}

\noindent\begin{proofof}{Lemma~\ref{lm:lb-final}}

Let $\A(x)$ denote the function Alice applies to her input $x$ to compute the 
message to send to Bob. Let $\B(M,i,w)$ be the Boolean function computed by 
Bob on his inputs $M,w$ and message $i$ from Alice. (Without loss of 
generality, since we are proving hardness against a fixed distribution, 
$\A$ and $\B$ are deterministic; also, we write use the edge incidence matrix $M$ instead of graph $G$ to denote Bob's input). Let $D^1$ be the distribution of $(M,\A(x),w)$ on 
\YES instances and $D^2$ be the distribution of $(M,\A(x),w)$ on \NO instances. 
We now show that $||D^1 - D^2||_{tvd}=O(\gamma+\alpha^{3/2})$, which
by definition of total variation distance implies that the protocol has advantage at most 
$O(\gamma+\alpha^{3/2})$ on \DBHP.

Alice's function $\A(x)$ induces a partition $A_1, A_2,\ldots, A_{2^c}$ of $\bool^n$, where $c$ is the bit length of Alice's message. Since there are $2^c$ such sets,  at least a $1-\gamma/2$ fraction of $\bool^n$ is contained in sets $A_i$ whose size is at least $(\gamma/2) 2^{n-c}$. We call a message $i$ such that $|A_i|<(\gamma/2) 2^{n-c}$ {\em typical}.  Say that $x$ is bad if $\A(x)$ is not typical and say that $i = \A(x)$ is 
typical if $x$ is typical. We have that $i=\A(x)$ is not typical with probability at most $\gamma/2$.  Note that
distribution $D^1= (M,i, p_{M, i})$, where $p_{M, i}$ is the distribution of $Mx$ conditional on the message $i$, and  $D^2 = (M,i, U_r)$. 
For any $M, i$ let $D^1_{(M, i)}=p_{M, i}$ and $D^2_{(M, i)}=U_r$ denote the distribution of $w$ given  message $i$ and the matrix $M$.

We have, using Lemma~\ref{lm:tvd-cond} twice,
\begin{equation}\label{eq:3oirnfwer}
\begin{split}
||D^1 - D^2||_{tvd}&=\expect_{i}\left[\expect_{M}\left[||D^1_{(M,i)} - D^2_{(M,i)}||_{tvd}\right]\right]\\
&\leq \prob[i\text{~is typical}]+\expect_{M}\left[||p_{M, i'} - U_r||_{tvd}\right] \text{~~($i'$ any typical message)}\\
&\leq \gamma/2+\expect_{M}\left[||p_{M, i'} - U_r||_{tvd}\right] \text{~~($i'$ any typical message)}.\\
\end{split}
\end{equation}

Suppose that the  protocol uses $c\leq \gamma \sqrt{n}$ bits of communication. Let $i'$ be a typical message. Then we have by definition of a typical set above $|A_{i'}|\geq 2^{n-c'}$ for $c'\leq \gamma \sqrt{n}+\log(2/\gamma)$. Thus, by Lemma~\ref{lm:l2-dist} applied to $A_{i'}$ we have
\begin{equation}
\begin{split}
\expect_{M}\left[||p_{M, i'} - U_r||^2_{tvd}\right]=O(\gamma^2+\alpha^{3}),
\end{split}
\end{equation}
and hence by Jensen's inequality 
\begin{equation}
\begin{split}
\expect_{M}\left[||p_{M, i'} - U_r||_{tvd}\right]\leq \sqrt{\expect_{M}\left[||p_{M, i'} - U_r||^2_{tvd}\right]}=O(\gamma+\alpha^{3/2}).
\end{split}
\end{equation}
Putting this together with \eqref{eq:3oirnfwer}, we get
\begin{equation}
\begin{split}
||D^1 - D^2||_{tvd}&=O(\gamma+\alpha^{3/2})
\end{split}
\end{equation}
as required.
\end{proofof}

\section{Reduction from \DBHP to MAX-CUT}
\label{sec:reduction}


This section is devoted to proving the following reduction from \DBHP problem to max-cut problem.
\begin{lemma}\label{lm:reduction}
Suppose there is a (single-pass) streaming algorithm \ALG that can distinguish between the \YES and the \NO instances from the distribution $\D$
using space $c$ when edges appear in the canonical random order, with failure probability bounded by $1/10$. Then there exists a protocol for the \DBHP  problem that uses at most $c$ bits of communication and succeeds with probability at least $1/2+\Omega(1/k)$, where $k$ is the number of phases in the distribution $\D$. 
\end{lemma}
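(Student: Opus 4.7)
The plan is to prove the lemma by a standard hybrid argument that embeds Bob's \DBHP input as one designated phase of the stream while having Alice simulate the earlier phases and Bob simulate the later phases. The key observation is that the distributional rule defining a single phase of $\D$ matches the two cases of \DBHP exactly when Alice's \DBHP string $x$ is interpreted as the bipartition $R=(P,Q)$ used in $\D^Y$.

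For $i \in \{0,1,\ldots,k\}$, define the hybrid distribution $\D_i$ over streams in which the first $i$ phases are generated as in $\D^Y$ (with respect to a common bipartition) and the last $k-i$ phases are generated as in $\D^N$, with edges within each phase still arriving in a uniformly random order. Note that $\D_0=\D^N$ and $\D_k=\D^Y$. Let $p_i := \prob[\ALG \text{ outputs YES} \mid \D_i]$. Because \ALG\ has failure probability at most $1/10$, WLOG (flip its output otherwise) $p_k \geq 9/10$ and $p_0 \leq 1/10$, so the telescoping identity $p_k-p_0 = \sum_{i=1}^{k}(p_i-p_{i-1}) \geq 4/5$ gives $\expect_{i^* \sim [k]}[p_{i^*}-p_{i^*-1}] \geq 4/(5k)$ for a uniformly random index $i^* \in [k]$.

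Given a \DBHP instance $(x,G,w)$ with $G$ on edge set of size $r$ and $M$ its edge-incidence matrix, the protocol runs as follows. Using public coins, Alice and Bob jointly sample $i^* \in [k]$ uniformly and Erd\H{o}s-R\'{e}nyi graphs $G_j \sim \G_{n,\alpha/n}$ for $j \in [k]\setminus\{i^*\}$. Alice interprets $x$ as the bipartition with $P=\{v:x_v=0\}$, $Q=\{v:x_v=1\}$; for every $j < i^*$ she computes $E'_j := \{e \in E(G_j) : e \text{ crosses } (P,Q)\}$, and feeds these edges to \ALG\ in a uniformly random order (using private coins). She then sends the resulting state of \ALG\ to Bob, using $c$ bits. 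Bob continues the execution by feeding in, as phase $i^*$, the edges $\{e \in E(G) : w_e = 1\}$ in a uniformly random order; then, for each $j>i^*$, he forms $E'_j$ by keeping each edge of $G_j$ independently with probability $1/2$ and feeds those to \ALG\ in a uniformly random order. Finally, Bob returns the output of \ALG\ verbatim.

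To verify correctness, observe the matching of distributions for phase $i^*$: in the \YES case of \DBHP one has $w_e = (Mx)_e = x_u + x_v$, so $\{e : w_e=1\}$ is exactly the set of edges of $G$ crossing $(P,Q)$, reproducing the rule for $E'_{i^*}$ in $\D^Y$; in the \NO case $w$ is uniform in $\bool^r$ independent of everything else, so each edge of $G$ is kept independently with probability $1/2$, reproducing the rule for $E'_{i^*}$ in $\D^N$. Combined with the YES-like sampling of phases $j<i^*$ and NO-like sampling of phases $j>i^*$, the stream seen by \ALG\ is distributed exactly as $\D_{i^*}$ in the YES case and exactly as $\D_{i^*-1}$ in the NO case, and the canonical random ordering is preserved because each party orders its own contribution uniformly at random. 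Hence the protocol's success probability equals
\begin{equation*}
\tfrac{1}{2} + \tfrac{1}{2}\expect_{i^*}[p_{i^*} - p_{i^*-1}] = \tfrac{1}{2} + \tfrac{p_k - p_0}{2k} \geq \tfrac{1}{2} + \tfrac{2}{5k} = \tfrac{1}{2} + \Omega(1/k),
\end{equation*}
while only the single $c$-bit message from Alice counts towards communication (public coins are free). The only conceptual subtlety is aligning the sampling rules of \DBHP and $\D$ so that phase $i^*$ is generated exactly by Bob's input, and I expect no real obstacle beyond this bookkeeping.
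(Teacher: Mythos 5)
Your proof is correct, and it takes a genuinely different route from the paper's. The paper does not run the hybrid at the level of acceptance probabilities. Instead it defines the random variables $S^Y_i, S^N_i$ (the memory state of \ALG{} after $i$ phases in the \YES{} / \NO{} cases), introduces the notion of an \emph{informative index} $j^*$ where $\|S^Y_{j^*+1}-S^N_{j^*+1}\|_{tvd} - \|S^Y_{j^*}-S^N_{j^*}\|_{tvd} = \Omega(1/k)$ (whose existence is shown by a TVD-telescoping argument), and then has Alice simulate phases $1,\dots,j^*$ in \YES{} mode and Bob run \emph{only one} more phase on his \DBHP{} input, after which Bob applies a maximum-likelihood test $p_{\tilde S^Y}(\tilde s) > p_{\tilde S^N}(\tilde s)$; the advantage is then extracted via a triangle inequality (data processing, Claim~6.5) and a single-sample distinguishing lemma (Lemma~6.6). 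Your approach instead (a) chooses the hybrid index $i^*$ at random with public coins rather than fixing a deterministic informative index, (b) has Bob continue simulating all remaining phases in \NO{} mode so that the stream is distributed exactly as $\D_{i^*}$ or $\D_{i^*-1}$, and (c) has Bob simply output \ALG{}'s verdict, telescoping directly on $p_i := \Pr[\ALG\ \text{outputs YES}\mid\D_i]$. This bypasses the TVD-of-states machinery and the single-sample ML test entirely, and the embedding of the \DBHP{} instance into phase $i^*$ is verified correctly. One small imprecision: the hypothesis gives only $\tfrac12 p_k + \tfrac12(1-p_0)\geq 9/10$, i.e.\ $p_k - p_0 \geq 4/5$, not the stronger ``$p_k\geq 9/10$ and $p_0\leq 1/10$'' you assert; but since you only use $p_k-p_0\geq 4/5$, the bound $\tfrac12 + \tfrac{2}{5k}$ still follows. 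You also leave implicit that a randomized \ALG{} is handled either via Yao's minimax (as the paper does before this reduction) or by giving Alice and Bob shared randomness to simulate \ALG{}'s coin flips; both are routine and the space of the communicated state remains $c$ bits.
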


We start by outlining the connection between \DBHP and our hard distribution for MAX-CUT, and then proceed to give the formal reduction. The connection between \DBHP and the hard distribution $\D$ for approximating maxcut that we defined in section~\ref{sec:dist} is as follows. Suppose that  Alice gets a random string $x\in \bool^n$, and Bob gets a graph $G$ sampled from the distribution $\G_{n, \alpha/n}$, as well as the corresponding vector $w$. Then Bob can use his input to generate a graph $G'$ by including edges $e$ of $G$ that satisfy $w_e=1$. In the \YES case of the communication problem we have $w=Mx$, so if we interpret $x$ as encoding a bipartition of $V$, an edge $e$ satisfies $w_e=1$ iff it crosses the bipartition $R$. Thus, $G'$ has exactly the distribution of a graph $G'_i$ defined in $\D^Y$, i.e. the \YES case distribution on maxcut instances. Similarly, in the \NO case the graph $G'$ generated by Bob using $G$ and $w$ has exactly the distribution of $G_i'$ defined in $\D^N$. The only difference between the communication complexity setting and the distribution $\D$ is that $\D$ naturally consists of several phases, while the communication problem asks for a single-round one-way communication protocol. Nevertheless, in this section we show how any algorithm for max cut that succeeds on $\D$ can be converted into a protocol for \DBHP.

In what follows we assume existence of an algorithm \ALG that yields a $(2-\e)$-approximation to maxcut value in a graph on $n$ nodes using space $c$ and a single pass over a stream of the edges of the graph given in a random order, and failure probability bounded by $1/10$.  We show how \ALG can be used to obtain a protocol for \DBHP  that uses at most $c$ bits of communication and gives advantage at least $\Omega(\e^2/\log n)$. In what follows we will only evaluate the performance of \ALG on the distribution $\D$. By Yao's minimax principle~\cite{Yao87}, there exists a {\em deterministic} algorithm {\bf ALG}' that errs with probability at most $1/10$ on inputs drawn from $\D$. We will work with deterministic algorithms in what follows. To simplify notation, we will simply assume that \ALG is a deterministic algorithm from now on that distinguishes between the \YES and \NO instances generated by $\D$ with probability at least $9/10$.

In order to describe the reduction from \DBHP to MAX-CUT, we first recall how the distribution $\D$ over MAX-CUT instances was defined. To sample a graph from $\D$, one first samples a uniformly random partition $R=(P, Q)$ of the vertex set $V=[n]$ (see section~\ref{sec:input-dist}). Then graphs $G_1=(V, E_1), G_2=(V, E_2),\ldots, G_k=(V, E_k)$ are sampled from the distribution $\G_{n, \alpha/n}$.  These graphs are auxiliary, and only their carefully defined subgraphs $G_i'$ appear in the stream. The subgraphs $G_i'$ are defined differently in the \YES and \NO cases. In the \YES case $G'_j=(V, E'_j)$ contains all edges of $G$ that cross the bipartition, i.e. those edges that satisfy $(Mx)_e=1$. In the \NO case, $G'_j$ contains each edge of $G_j$ independently with probability $1/2$. 

We now define random variables that describe the execution of \ALG on $\D$. These random variables will be crucially used in our reduction. Let the state of the memory of the algorithm after receiving the subset of edges $E'_i, i=1,\ldots, k$ be denoted by $S^{Y}_i$ and $S^N_i$ in the \YES and \NO case respectively. Thus, $S^Y_i, S^N_i\in \bool^c$ for all $i$. We assume wlog that $S^Y_0=S^N_0=0$, since the algorithm \ALG starts in some  fixed initial configuration. 

Note that the main challenge that we need to overcome in order to reduce \DBHP to MAX-CUT on our distribution $\D$ is that while \DBHP is a two-party one-way communication problem, the distribution $\D$ inherently consists of $k$ `phases'. Intuitively, we overcome this difficulty by showing that a successful algorithm for solving maxcut on $\D$ must solve \DBHP {\em in at least one of the $k$ phases}. Our main tool in formalizing this intuition is the notion of an {\em informative index}:
\begin{definition}[Informative index]
We say that an index $j\in \{1, \ldots, k\}$ in the execution of the algorithm \ALG is $\Delta$-{\em informative} for $\Delta>0$ if 
$||S^Y_{j+1}-S^{N}_{j+1}||_{tvd}\geq ||S^Y_{j}-S^{N}_{j}||_{tvd}+\Delta$.
\end{definition}

The next lemma shows that for any \ALG that distinguishes between the \YES and \NO cases with probability at least $9/10$ over inputs from $\D$, there exists an $\Omega(1/k)$-informative index. The proof is based on a standard hybrid argument and is given below.

\begin{lemma}\label{lm:inf}
For any algorithm \ALG that succeeds with probability at least $9/10$ on inputs drawn from $\D$, there exists a $\Omega(1/k)$-informative index.
\end{lemma}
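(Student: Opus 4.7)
The plan is a standard hybrid argument. First I would translate the success-probability assumption into a lower bound on $\|S^Y_k - S^N_k\|_{tvd}$: since the algorithm is deterministic (by Yao's minimax), its output is a function $\phi$ of the final memory state, and
\begin{equation*}
\tfrac{9}{10} \leq \tfrac{1}{2}\Pr[\phi(S^Y_k)=Y] + \tfrac{1}{2}\Pr[\phi(S^N_k)=N] = \tfrac{1}{2} + \tfrac{1}{2}\bigl(\Pr[\phi(S^Y_k)=Y]-\Pr[\phi(S^N_k)=Y]\bigr),
\end{equation*}
which (together with the definition of total variation distance as the maximum over events) yields $\|S^Y_k - S^N_k\|_{tvd} \geq 4/5$. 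On the other end, $\|S^Y_0 - S^N_0\|_{tvd} = 0$ because both executions start in the same fixed initial configuration.

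Next I would telescope: writing $\Delta_j := \|S^Y_{j+1}-S^N_{j+1}\|_{tvd} - \|S^Y_{j}-S^N_{j}\|_{tvd}$ for $j=0,1,\ldots,k-1$, we have
\begin{equation*}
\sum_{j=0}^{k-1} \Delta_j \;=\; \|S^Y_k - S^N_k\|_{tvd} - \|S^Y_0 - S^N_0\|_{tvd} \;\geq\; 4/5.
\end{equation*}
By the pigeonhole principle, at least one $\Delta_j$ is at least $\tfrac{4}{5k} = \Omega(1/k)$. That index $j$ is by definition $\Omega(1/k)$-informative, which proves the lemma.

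There is no real obstacle here; the only minor subtlety is that individual $\Delta_j$ could be negative (total variation distance between the memory distributions is not guaranteed to be monotone in $j$, because the distributions of edges in each phase differ between the YES and NO cases, so one cannot simply invoke the data processing inequality on each transition). Nevertheless, the telescoping identity holds exactly regardless of sign, and pigeonhole on a sum bounded below by $4/5$ still forces at least one term to be at least $4/(5k)$. Finally, the reduction to a deterministic algorithm via Yao's principle is standard and loses nothing in the failure probability, so working throughout with a deterministic \ALG (as done in the paper's preceding paragraph) is justified.
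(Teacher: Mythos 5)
Your proof is correct and follows essentially the same hybrid argument as the paper, differing only in presentation: you telescope the sum of increments and apply pigeonhole, whereas the paper picks the first index where the total variation distance crosses the linear ramp $Ct/k$ and uses minimality. Both are instances of the standard hybrid argument, and your explicit derivation of the constant $4/5$ from the $9/10$ success probability and your remark about possibly-negative increments are both sound.
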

\begin{proof}
The proof is essentially the standard hybrid argument.
First note that since the algorithm starts in some fixed state, we have $||S^Y_0-S^N_0||_{tvd}=0.$
On the other hand, since \ALG distinguishes between the \YES and \NO cases with probability at least $9/10$ on inputs drawn from $\D$, we must have $||S^Y_k-S^N_k||_{tvd}\geq C$
for a constant $C>0$.   Let $j$ be the smallest integer such that $||S^Y_{j+1}-S^{N}_{j+1}||_{tvd}\geq C (j+1)/k$.
By this choice of $j$ we have $||S^Y_{j}-S^{N}_{j}||_{tvd}<C j/k$.
 Thus, $||S^Y_{j+1}-S^{N}_{j+1}||_{tvd}-||S^Y_{j}-S^{N}_{j}||_{tvd}\geq C (j+1)/k-C j/k\geq C/k$ as required. 
\end{proof}

We now fix a $\Delta$-informative index $j^*\in [1:k-1]$. We will show below that Alice and Bob can get $\Delta$ advantage over random guessing for the \DBHP problem using \ALG, thus completing the reduction from \DBHP to MAX-CUT.

Recall that in \DBHP, Alice gets a uniformly random $x\in \bool^n$ as input, Bob gets a graph $G=(V, E), V=[n], E\subseteq {V \choose 2}, |E|=r$ sampled from $\G_{n, \alpha/n}$ and a vector $w\in \bool^r$. in the \YES case of \DBHP the vector $w$ satisfies $w=Mx$ and in the \NO case $w$ is uniformly random in $\bool^r$. Here $M\in \bool^{r\times n}$ is the edge incidence matrix of $G$, i.e. $M_{ev}=1$ if $v\in V=[n]$ is an endpoint of $e\in E$ and $M_{ev}=0$ otherwise. The \YES case occurs independently with probability $1/2$, and the \NO case occurs with remaining prabability.

In order to reduce to MAX-CUT, we view $x$ as encoding a partition $R=(P, Q)$ of the vertex set $V=[n]$. With this interpretation, 
the vector $w$ that Bob gets assigns numbers $w_e\in \{0, 1\}$ to edges of $G$. In the \YES case these numbers encode whether or not the edge $e$ crosses the bipartition $R$ encoded by $x$, and in the \NO case these numbers are uniformly random and independent. 
This connection lets Alice and Bob draw an input instance for MAX-CUT from distribution $\D^Y$ or $\D^N$, depending on the answer to their \DBHP problem from their inputs $x$ and $(G, w)$ as follows:

\begin{description}
\item[Step 1.] Alice samples a state $s$ of \ALG from the distribution $S^Y_{j^*}$. She can do that since she knows $x$. Indeed, first Alice   generates random graphs $G_{i}=(V, E_i), i=1,\ldots, j^*$ from the distribution $\G_{n, \alpha/n}$ as specified in the definition of $\D^Y$ and computes $w_i=M_i x\in \bool^{r_i}$ for the edge incidence matrix $M_i\in \bool^{r_i\times n}$ of $G_i$ ($r_i$ denotes the number of edges in $G_i$). She then lets $G_i'=(V, E_i')$ contain the set of edges $e\in E_i$ of $G_i$ that satisfy $(w_i)_e=1$ (i.e. cross the bipartition encoded by $x$). Alice then runs \ALG on the stream of edges $E'_1,\ldots, E'_{j^*}$, where edges inside each $E'_i$ are presented to the algorithm in a uniformly random order.  She then sends the state $s$ of \ALG to Bob. 

\item[Step 2.] Bob first creates a graph $G'$ by including those edges $e$ of his input graph $G$ that satisfy $w_e=1$ (recall that Bob's input is the pair $(G, w)$). He then creates the random variable $\tilde s$ by running \ALG for one more step starting from $s$ on $G'$. We let $G'_{j^*+1}:=G'$ for convenience. Denote the distribution of $\tilde s$ in the \YES case by $\tilde S^Y$ and the distribution of $\tilde s$ in the \NO case by $\tilde S^N$.
 \item[Step 3.] Bob outputs \YES if $p_{\tilde S^{Y}}(\tilde s)>p_{\tilde S^N}(\tilde s)$ and \NO otherwise. 
 \end{description}

Note that the distribution $\tilde S^Y$ is identical to $S^Y_{j^*+1}$. The distribution $\tilde S^N$ is in general different from both $S^Y_{j^*+1}$ and $S^N_{j^*+1}$, however.  
We first show that the protocol above is feasible:
\begin{claim} 
Steps 1-3 above give a valid protocol for the \DBHP communication problem.
\end{claim}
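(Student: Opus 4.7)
My plan is to verify, step by step, that the described procedure fits the template of a one-way communication protocol with $c$ bits of communication: Alice uses only $x$ and private randomness, Bob uses only $(G,w)$, Alice's message, and private randomness, and the message length matches the space bound of \ALG. No substantive inequality is needed -- this is purely a structural check, so I do not expect a real obstacle.

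For Step 1 I would argue that Alice can sample the auxiliary graphs $G_1, \ldots, G_{j^*} \sim \G_{n, \alpha/n}$ privately, compute $M_i x$ from her knowledge of $x$ and the incidence matrices, carve out the subgraphs $E'_i = \{e \in E_i : (M_i x)_e = 1\}$, and feed them in a privately sampled uniformly random order into the deterministic algorithm \ALG. By the definition of $\D^Y$ this is exactly the generative process for its first $j^*$ phases, so the resulting state $s$ has distribution $S^Y_{j^*}$, and since \ALG uses space $c$, the string $s$ occupies at most $c$ bits.

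For Steps 2 and 3 I would observe that the set $G' = \{e \in G : w_e = 1\}$ is a deterministic function of Bob's input $(G, w)$, so he can form it unaided; running one more phase of \ALG on the edges of $G'$ (in a privately sampled uniformly random order) starting from the received state $s$ yields $\tilde s$. Finally, the two pmfs $p_{\tilde S^Y}$ and $p_{\tilde S^N}$ are fixed functions of the public objects \ALG, $\D$, and $j^*$, so in the communication model -- which places no restriction on the players' local computation -- Bob may evaluate them pointwise at $\tilde s$ and output the larger-likelihood label.

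The one item I would explicitly flag is that both players need private randomness -- Alice for the auxiliary graphs and edge orderings in phases $1, \ldots, j^*$, and Bob for the edge ordering in phase $j^* + 1$ -- but this is standard and does not affect the communication bound. Since the sole message transmitted is Alice's state $s \in \bool^c$, the protocol is one-way with $c$ bits of communication, establishing validity.
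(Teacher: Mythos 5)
Your proof is correct and takes essentially the same approach as the paper: you verify that Alice's Step~1 uses only $x$ and private randomness to simulate the first $j^*$ phases of $\D^Y$, that Bob's Steps~2--3 use only $(G,w)$, the received state $s$, and local computation, and that the pmfs $p_{\tilde S^Y}$, $p_{\tilde S^N}$ are determined by the public objects $\ALG$, $\D$, $j^*$ so Bob can evaluate them. The paper's proof is terser but makes exactly these points (in particular the observation that Bob can compute the pdf of $\tilde S^N$ by imagining a uniformly random $x$, running $j^*$ \YES steps, then one \NO step), so there is no substantive difference.
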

\begin{proof}
Steps 1 and 2 are feasible, as shown above. The distribution $\tilde S^N$ that we constructed can be generated as follows: pick a random $x\in \{0, 1\}^n$, run the algorithm on that $x$ assuming it is the \YES case for $j^*$ steps,  then take one \NO step (the $j^*+1$-st).  Bob can compute the pdf of this distribution. Similarly for $\tilde S^Y$.  Thus, Alice and Bob can execute the protocol.
 \end{proof}

We now prove Lemma~\ref{lm:reduction}, which is the main result of this section.  For that, we will need the following auxiliary claim, whose proof is given in Appendix~\ref{app:reduction}.

\begin{claim}\label{cl:1}
Let $X, Y$ be two random variables. Let $W$ be independent of $(X, Y)$. Then for any function $f$ one has 
$||f(X, W)-f(Y, W)||_{tvd}\leq ||X-Y||_{tvd}$.
\end{claim}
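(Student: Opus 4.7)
The plan is to split the claim into two standard steps: first show that attaching a common independent $W$ preserves total variation distance, and then invoke the data-processing inequality to conclude that applying $f$ can only decrease (or preserve) TVD.

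For the first step, since $W$ is independent of $(X,Y)$, the joint pdfs factor as $p_{(X,W)}(x,w) = p_X(x)\,p_W(w)$ and $p_{(Y,W)}(x,w) = p_Y(x)\,p_W(w)$. Expanding the definition of total variation distance and factoring out $\sum_w p_W(w) = 1$,
\[
\|(X,W)-(Y,W)\|_{tvd} = \tfrac{1}{2}\sum_{x,w}\big|p_X(x)-p_Y(x)\big|\,p_W(w) = \tfrac{1}{2}\sum_x |p_X(x)-p_Y(x)| = \|X-Y\|_{tvd}.
\]

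For the second step, I would invoke the standard data-processing inequality for TVD: for any random variables $Z_1, Z_2$ on a common space and any function $g$, one has $\|g(Z_1) - g(Z_2)\|_{tvd} \leq \|Z_1 - Z_2\|_{tvd}$. This follows directly from the maximum-over-events characterization of TVD: for any event $A$ in the codomain of $g$, $\Pr[g(Z_i) \in A] = \Pr[Z_i \in g^{-1}(A)]$, so the supremum of $|\Pr[g(Z_1) \in A] - \Pr[g(Z_2) \in A]|$ over $A$ is a supremum over only a subfamily of events in the domain, and is therefore at most $\|Z_1 - Z_2\|_{tvd}$.

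Applying the data-processing step to $Z_1 = (X,W)$, $Z_2 = (Y,W)$, and the function $f$, then chaining with the first step, yields $\|f(X,W) - f(Y,W)\|_{tvd} \leq \|(X,W)-(Y,W)\|_{tvd} = \|X-Y\|_{tvd}$. There is no real obstacle here---both ingredients are short and standard. The only subtlety is that the first step genuinely uses the independence assumption to factor the joint pdfs; note that marginal independence of $W$ from each of $X$ and $Y$ separately would already have sufficed, since only the two marginals $(X,W)$ and $(Y,W)$ enter the computation.
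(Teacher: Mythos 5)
Your proof is correct and follows essentially the same two-step structure as the paper's: first show that attaching the common independent $W$ preserves total variation distance, then apply the data-processing inequality for $f$. The only cosmetic difference is that you establish data processing via the $\max$-over-events characterization while the paper does it via the $\tfrac12\sum|p-q|$ formula and the triangle inequality; both are standard and equivalent.
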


We will also need the following lemma, which says that if one receives a sample from one of two distributions $X$ and $Y$ on a finite probability space $\Omega$, a simple test distinguishes between the two distributions with advantage at least $||X-Y||_{tvd}/2$ over random guessing.

\begin{lemma}\label{lm:single-sample}
Let $X, Y$ be distributions on a finite probability space $\Omega$.  Suppose that with probability $1/2$ one is given a sample $\omega$ from $X$ (\YES case) and with probability $1/2$ a sample from $Y$ (\NO case). Then outputting \YES if $p_X(\omega)>p_Y(\omega)$ and \NO otherwise distinguishes between the two cases with advantage over random guessing at least $||X-Y||_{tvd}/2$.
\end{lemma}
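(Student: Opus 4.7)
The plan is to identify the event $S = \{\omega \in \Omega : p_X(\omega) > p_Y(\omega)\}$ as precisely the YES-output region of the test, and then invoke the variational characterization of total variation distance. Specifically, I would use the fact (already implicit in the definition stated earlier in the paper) that
\[
\|X-Y\|_{tvd} = \max_{A \subseteq \Omega} \bigl(p_X(A) - p_Y(A)\bigr),
\]
and that the maximum is attained by $A = S$, since including $\omega$ in $A$ contributes positively to $p_X(A) - p_Y(A)$ exactly when $p_X(\omega) > p_Y(\omega)$. Hence $p_X(S) - p_Y(S) = \|X-Y\|_{tvd}$.

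Next I would compute the success probability of the test directly. Conditional on the YES case, the sample $\omega$ is drawn from $X$, and the test is correct exactly when $\omega \in S$, which occurs with probability $p_X(S)$. Conditional on the NO case, $\omega$ is drawn from $Y$, and the test is correct when $\omega \notin S$, which occurs with probability $1 - p_Y(S)$. Averaging over the two equally likely cases,
\[
\Pr[\text{success}] = \tfrac{1}{2} p_X(S) + \tfrac{1}{2}\bigl(1 - p_Y(S)\bigr) = \tfrac{1}{2} + \tfrac{1}{2}\bigl(p_X(S) - p_Y(S)\bigr) = \tfrac{1}{2} + \tfrac{1}{2}\|X-Y\|_{tvd},
\]
so the advantage over random guessing is $\tfrac{1}{2}\|X-Y\|_{tvd}$, as claimed.

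I do not anticipate any obstacle here: the statement is essentially the standard optimality of the likelihood-ratio test for a binary hypothesis with uniform prior, specialized to the symmetric $1/2$--$1/2$ prior, and the identification of the optimal set $S$ with the supporting set of the total variation distance makes it a one-line calculation. The only minor care needed is the tie-breaking convention on $\{\omega : p_X(\omega) = p_Y(\omega)\}$, where the test outputs NO; but ties contribute $0$ to $p_X(S) - p_Y(S)$ regardless of how they are assigned, so the bound is unaffected.
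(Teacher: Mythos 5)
Your proof is correct and takes essentially the same route as the paper: both identify the YES-output region $S = \{\omega : p_X(\omega) > p_Y(\omega)\}$ as the set achieving the variational maximum in the definition of total variation distance, and both reduce the claim to the identity $p_X(S) - p_Y(S) = \|X-Y\|_{tvd}$. The only cosmetic difference is that you compute the success probability directly while the paper computes the error probability and subtracts from one.
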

The proof is given in Appendix~\ref{app:reduction}.

\begin{proofof}{Lemma~\ref{lm:reduction}}
As before, we assume that \ALG is deteministic. Let $j^*$ be an informative index for \ALG, which exists by Lemma~\ref{lm:inf}.

Let $f:\{0, 1\}^c\times \bool^{{n \choose 2}} \to \{0, 1\}^c$ denote the function that maps the state of \ALG at step $j^*$ and the edges received at step $j^*+1$ to the state of \ALG at step $j^*+1$. Let $G'_{j^*+1}$ denote the set of edges $e$ of $G_{j^*+1}$ that satisfy $(w_{j^*+1})_e=1$. By {\bf Step 2} of our reduction we have 
$\tilde s=f(s, G'_{j^*+1})$, and hence $\tilde S=f(S^Y_{j^*}, G'_{j^*+1})$.

Suppose that we are in the \NO case. Then Bob's input $G'_{j^*+1}$ is a random graph sampled independently from $\G_{n, \alpha/(2n)}$.    Note that $S^{N}_{j^*+1}$ is distributed as $f(S^N_{j^*}, G'_{j^*+1})$, so by  Claim~\ref{cl:1}
\begin{equation}\label{eq:sn}
\begin{split}
||\tilde S^N-S^{N}_{j^*+1}||_{tvd}&=||f(S^Y_{j^*}, G'_{j^*+1})-f(S^{N}_{j^*}, G'_{j^*+1})||_{tvd}\leq ||S^Y_{j^*}-S^{N}_{j^*}||_{tvd}.\\
\end{split}
\end{equation}

Now suppose that we are in the \YES case. Denote the distribution of $\tilde s$ in this case by $\tilde S^Y$. Then $\tilde S^Y=f(S_{j^*}^Y, G'_{j^*+1})=S_{j^*+1}^Y$. Thus,
\begin{equation}\label{eq:sy}
\begin{split}
||\tilde S^Y-S^{N}_{j^*+1}||_{tvd}&=||S^Y_{j^*+1}-S^{N}_{j^*+1}||_{tvd}.
\end{split}
\end{equation}

Putting \eqref{eq:sn} and \eqref{eq:sy} together and using triangle inequality and the assumption that $j^*$ is  $\Omega(\alpha \e^2)$-informative we get
\begin{equation*}
\begin{split}
||\tilde S^Y-\tilde S^N||_{tvd}&\geq ||\tilde S^Y-S^N_{j^*+1}||_{tvd}-||S^N_{j^*+1}-\tilde S^N||_{tvd}\geq ||S^Y_{j^*+1}-S^N_{j^*+1}||_{tvd}-||S^N_{j^*}-S^N_{j^*}||_{tvd} \geq \Omega(1/k).
\end{split}
\end{equation*}

Thus, we are getting one sample from one of two distributions whose total variation distance is at least $\Omega(\alpha \e^2)$. With probability $1/2$ we are getting a sample $\tilde s$ from $\tilde S^Y$ and with probability $1/2$ a sample $\tilde s$ from $\tilde S^N$. By Lemma~\ref{lm:single-sample} the simple algorithm that Bob uses, namely outputting \YES if $p_{\tilde S^Y}(\tilde s)>p_{\tilde S^N}(\tilde s)$ and \NO otherwise yields advantage at least 
$||\tilde S^Y-\tilde S^N||_{tvd}/2\geq \Omega(1/k)$
over random guessing, as required.
\end{proofof}

\section{$\tilde \Omega(\sqrt{n})$ lower bound for $(2-\e)$-approximation}\label{sec:main}

In this section we prove Theorem~\ref{thm:main} and Theorem~\ref{thm:main-iid}.  We restate Theorem~\ref{thm:main} here for convenience of the reader.

\noindent{\em {\bf Theorem~\ref{thm:main}} 
Let $\e>0$ be a constant. Let $G=(V, E), |V|=n, |E|=m$ be an unweighted (multi)graph. Any algorithm that, given a single pass over a stream of edges of $G$ presented in random order, outputs a $(2-\e)$-approximation to the value of the maximum cut in $G$ with probability at least $99/100$ over its internal randomness must use $\tilde \Omega(\sqrt{n})$ space.
}

\noindent\begin{proof}
Consider running \ALG on the edges of the graph sampled from $\D$ that are presented in the canonical random ordering.  By Lemma~\ref{lm:perm} the total variation distance between the input graph in uniformly random order and the canonical random ordering is $O(\alpha \log(1/\alpha))$, so \ALG succeeds on $\D$ with probability at least $9/10$. By Yao's minmax principle there exists a deteministic algorithm {\bf ALG}' with at most the space complexity of \ALG that succeeds on $\D$ with probability at least $9/10$.

By Lemma~\ref{lm:reduction}, {\bf ALG}' can be used to construct a protocol for the \DBHP problem that gives advantage 
$\Omega(\alpha \e^2)$ over random guessing, where $\alpha<1$ is a parameter that remains to be set.

By Lemma~\ref{lm:lb-final} any protocol that uses at most $\gamma \sqrt{n}$ communication can not get advantage over random guessing larger than $O(\gamma+\alpha^{3/2})$.
We choose $\alpha=1/\log n$, which satisfies the preconditions of Lemma~\ref{lm:lb-final}. Substituting the value of $\alpha$, we get that one necessarily has $\gamma+(\log n)^{-3/2}= \Omega( \e^2/\log n)$, and hence $\gamma\geq C \e^2/\log n-\log^{-3/2} n\geq (C/2))\e^2/\log n$
for some constant $C>0$ for sufficiently large $n$. This implies that $\gamma=\Omega(1/\log n)$ for any constant $\e>0$, completing the proof.

\end{proof}

The rest of this section is devoted to proving Theorem~\ref{thm:main-iid}. We need one more ingredient for that. In particular, we now show that as long as the parameter $\alpha$ is sufficiently small, our distribution $\D$ is generating a sequence of edges of $G$ that is very close to a sequence of i.i.d.  samples in distribution. Intuitively, this is because while each of the $k$ phases of our input is distributed as $\G_{n, \alpha/n}$ as opposed to i.i.d., these distributions  are quite close in total variation distance. We make this claim precise below, obtaining a proof of Theorem~\ref{thm:main-iid}.

We will use the following lemmas, which state that the distribution of the stream of edges produced by our distributions $\D^Y$ and $\D^N$ is close in total variation to a stream of i.i.d. samples of edges of the complete bipartite graph and the complete graph respectively.

We first prove some auxiliary statements.
For the \YES case we start by establishing the following claim.
\begin{claim}\label{cl:iid-yes}
Let $G=(P, Q, E)$ be a complete bipartite graph, where $P\cup Q$ is a uniformly random partition of $[n]$.
Let $A=(A_1,A_2,\ldots, A_k)$ denote a sequence of $T$ i.i.d. samples of edges of $G$, where $T=T_1+T_2+\ldots+T_k$ is a sum of $k$ independent random variables distributed as $\text{Binomial}(|P|\cdot |Q|, \alpha/n)$ and $|A_i|=T_i, i=1,\ldots, k$. 

Let $B=(B_1,B_2,\ldots, B_k)$ denote a sequence of $T$ samples of edges of $G$, where for each $i=1,\ldots, k$ each $e\in E$ belongs to $B_i$ independently with probability $\alpha/n$. 

Then 
$$
||A-B||_{tvd}=O(k\alpha^2).
$$
\end{claim}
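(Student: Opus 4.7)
The plan is to exploit the product structure of both distributions over the $k$ blocks, reducing the global bound to a per-block birthday-paradox calculation.

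First, I would observe that $A_1,\ldots,A_k$ are mutually independent: the sizes $T_i$ are independent by hypothesis, and conditional on them the blocks come from disjoint batches of i.i.d.\ samples. The $B_i$'s are independent by construction. So by subadditivity of total variation distance over product distributions, $\|A-B\|_{tvd}\le\sum_{i=1}^{k}\|A_i-B_i\|_{tvd}$, and it suffices to show $\|A_i-B_i\|_{tvd}=O(\alpha^{2})$ for a single block.

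For the per-block bound, I would couple the block sizes: writing $N:=|P|\cdot|Q|$ and $p:=\alpha/n$, the size $|B_i|$ is a sum of $N$ independent $\mathrm{Bernoulli}(p)$'s and hence shares the same $\mathrm{Binomial}(N,p)$ law as $T_i$. After setting $|B_i|=T_i$ under the coupling and conditioning on the common value $t$, the two laws become: a uniformly random length-$t$ tuple drawn from $E$ \emph{with replacement} (for $A_i$), versus a uniformly random length-$t$ tuple of \emph{distinct} edges from $E$ (for $B_i$, after endowing its edge set with a uniformly random ordering). A direct count shows that their TVD equals $1-\prod_{j=0}^{t-1}(1-j/N)\le t^{2}/(2N)$, which is precisely the probability that $t$ i.i.d.\ samples from an $N$-element set produce a collision.

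Averaging over $T_i\sim\mathrm{Bin}(N,p)$ then gives a per-block bound of $\expect[T_i^{2}]/(2N)$, which is $O(\alpha^{2})$ via the routine estimate $\expect[T_i^{2}]=Np(1-p)+(Np)^{2}=O(\alpha^{2}n^{2})$ balanced against $N=\Theta(n^{2})$. (The regime $\alpha>n^{-1/10}$ assures that the lower-order contribution $p/2=\alpha/(2n)$ is dominated by $\alpha^{2}$.) Summing the per-block bound over $i=1,\ldots,k$ yields $\|A-B\|_{tvd}=O(k\alpha^{2})$, as claimed.

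I do not expect a serious technical obstacle: the argument is simply a two-step reduction (product structure, then birthday). The only mild subtlety is that the partition $R=(P,Q)$ is itself random, so $N=|P||Q|$ is a random quantity; but all bounds depend on $N$ only through the trivial upper bound $N\le n^{2}/4$, so everything may be proved pointwise in $R$ and then averaged without incurring any additional loss.
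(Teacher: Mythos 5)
Your proof is correct and takes essentially the same approach as the paper: identify the only discrepancy as per-block collisions, bound the per-block collision probability via the birthday paradox to get $O(\alpha^2)$, and aggregate over the $k$ blocks to obtain $O(k\alpha^2)$. The paper packages the aggregation as a ``condition on the good event of no collisions'' coupling followed by a union bound, while you package it as subadditivity of total variation over independent product distributions (after coupling the block sizes), but these are two formulations of the identical argument.
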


\begin{claim}\label{cl:pq}
Let $P, Q \subseteq V$ be a uniformly random bipartition of $V=[n]$. Then for any $\delta>0$ one has $\prob[||P||Q|-(n/2)^2|>\delta n]<e^{-\Omega(\delta)}$.
\end{claim}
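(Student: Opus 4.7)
The plan is to parametrize the deviation of $|P|$ from $n/2$ and reduce the claim to a standard concentration bound. Since a uniformly random bipartition of $V=[n]$ can be generated by assigning each vertex independently to $P$ or $Q$ via a fair coin, the random variable $|P|$ is distributed as $\text{Binomial}(n,1/2)$, and $|Q|=n-|P|$.

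First I would set $X:=|P|-n/2$, so that $|Q|=n/2-X$ and
\begin{equation*}
|P|\cdot|Q|=\left(\tfrac{n}{2}+X\right)\left(\tfrac{n}{2}-X\right)=\left(\tfrac{n}{2}\right)^2-X^2.
\end{equation*}
In particular, $\bigl||P||Q|-(n/2)^2\bigr|=X^2$, so the event $\{||P||Q|-(n/2)^2|>\delta n\}$ coincides with $\{|X|>\sqrt{\delta n}\}$. This reduces the claim to a tail bound on $|X|$.

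Next I would apply Hoeffding's inequality (equivalently, the Chernoff bound in Theorem~\ref{thm:chernoff}) to the sum of $n$ independent Bernoulli($1/2$) random variables that defines $|P|$, yielding
\begin{equation*}
\prob\bigl[|X|\geq s\bigr]\leq 2\exp\!\left(-\tfrac{2s^2}{n}\right).
\end{equation*}
Setting $s=\sqrt{\delta n}$ immediately gives $\prob[|X|\geq\sqrt{\delta n}]\leq 2e^{-2\delta}=e^{-\Omega(\delta)}$, which is the desired bound.

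There is essentially no obstacle here: the only point worth noting is the algebraic identity that the deviation of $|P|\cdot|Q|$ from $(n/2)^2$ is exactly the square of the deviation of $|P|$ from $n/2$, so a linear-in-$\delta$ exponent on the right-hand side is the natural outcome of a sub-Gaussian tail for $|P|$. One should also verify that the convention used for ``uniformly random bipartition'' (ordered versus unordered pairs) is immaterial, since it affects the distribution of $(P,Q)$ only by a symmetry that preserves $|P|\cdot|Q|$ and the marginal distribution of $|P|$ (up to the irrelevant swap $|P|\leftrightarrow |Q|$).
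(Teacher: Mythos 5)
Your proof is correct and follows essentially the same route as the paper's: both set $\Delta = |P| - n/2$, use the identity $|P||Q| = (n/2)^2 - \Delta^2$ to convert the event into $\{|\Delta| > \sqrt{\delta n}\}$, and then apply a Chernoff/Hoeffding tail bound for the Binomial$(n,1/2)$ distribution of $|P|$. You are slightly more explicit than the paper about the Hoeffding constant and about the irrelevance of the ordered-versus-unordered bipartition convention, but the argument is the same.
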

\begin{proof}
Let $\Delta=|P|-n/2$. Then 
$|P||Q|=(n/2+\Delta)(n/2-\Delta)=(n/2)^2-\Delta^2$,
so 
$$
\prob[||P||Q|-(n/2)^2|>\delta n]=\prob[|\Delta|>\sqrt{\delta n}]<e^{-\Omega(\delta)}
$$
by Chernoff bounds.
\end{proof}

\begin{proofof}{Claim~\ref{cl:iid-yes}}
We first show that none of sets $A_i, i=1,\ldots, k$ contains repeated edges with probability $1-O(\alpha)$.

\begin{equation}\label{eq:aminusb}
\begin{split}
\prob[A_i\text{~contains a duplicate edge}]&\leq \sum_{e\in P\times Q}\prob[e\text{~appears more than once in~}A_i]\\
&\leq |P|\cdot |Q|\cdot (1-e^{-\lambda}-\lambda e^{-\lambda}),
\end{split}
\end{equation}
where $\lambda=T_i/m$ is the rate of arrival of an edge in $T_i$ samples in the i.i.d. setting. By Claim~\ref{cl:pq} one has 
$\prob[||P||Q|-(n/2)^2|>O(\log 1/\alpha) n]<\alpha^3$. Since $T_i$ is distributed as $\text{Binomial}(|P|\cdot |Q|, \alpha/n)$, we have $T_i\leq (\alpha/n)(n^2/4)+O(\log (1/\alpha)\alpha)$ with probability $1-\alpha^3$. Thus, we have 
$\lambda=T_i/m\leq 2(\alpha n/4)/(n^2/4)\leq 2\alpha/n$ with probability at least $1-O(\alpha^3)$. Using this in \eqref{eq:aminusb}, we get
\begin{equation*}
\begin{split}
\prob[A_i\text{~contains a duplicate edge}]&\leq |P|\cdot |Q|\cdot (1-e^{-\lambda}-\lambda e^{-\lambda})\\
&\leq |P|\cdot |Q|\cdot (1-e^{-\lambda}-\lambda e^{-\lambda})=|P|\cdot |Q|\cdot O(\lambda^2)=O(\alpha^2)\\
\end{split}
\end{equation*}
as required. By a union bound over all $i=1,\ldots, k$ we have that no $A_i$ contains a duplicate edge with probability at least $1-O(k\alpha^2)$. 
Conditional on not containing duplicate edges, $A_i$'s are uniformly random sets of edges of $G=(P, Q, E)$ of size $T_i$. Thus, $A$ has the same distribution as $B$ conditional on an event of probability at most $O(k\alpha^2)$, and hence $||A-B||_{tvd}=O(k\alpha^2)$ as required.
\end{proofof}

A similar claim holds for the \NO case:
\begin{claim}\label{cl:iid-no}
Let $G=(V, E)$ be a complete graph, and for each $i=1,\ldots, k$ let $E'_i\subseteq E$ be obtained by including every edge $e\in E={V\choose 2}$ independently with probability $1/2$. Let $A=(A_1,A_2,\ldots, A_k)$ denote a sequence of $T$ i.i.d. samples of edges of $G$ and $T=T_1+T_2+\ldots+T_k$ is a sum of $k$ independent random variables, where $T_i$ is distributed as $\text{Binomial}(|E'_i|, \alpha/n)$ and $|A_i|=T_i, i=1,\ldots, k$. 

Let $B=(B_1,B_2,\ldots, B_k)$ denote a sequence of $T$ samples of edges of $G$, where for each $i=1,\ldots, k$ each $e\in E'$ belongs to $B_i$ independently with probability $\alpha/n$. 

Then 
$$
||A-B||_{tvd}=O(k\alpha^2).
$$
\end{claim}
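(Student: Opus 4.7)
The plan is to mirror the argument for Claim~\ref{cl:iid-yes}: identify a high-probability event on which the i.i.d.\ stream $A$ and the subsampled-subgraph stream $B$ are identically distributed, and bound the probability of the bad event by a birthday-style estimate. The only conceptual wrinkle compared to the bipartite case is that $|E'_i|$ is itself random here (whereas $|P|\cdot|Q|$ in the YES case is determined by the random partition), so one has to check that the marginals of the phase sizes still match up cleanly.

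First I would verify that $(T_i)_{i=1}^k$ and $(|B_i|)_{i=1}^k$ have the same joint distribution. Composing the two Bernoullis per edge, integrating out $|E'_i|\sim \text{Binomial}(\binom{n}{2},1/2)$ in $T_i\mid |E'_i|\sim \text{Binomial}(|E'_i|,\alpha/n)$ gives exactly $T_i\sim \text{Binomial}(\binom{n}{2},\alpha/(2n))$. On the $B$ side each edge $e\in E$ lies in $B_i$ independently with probability $(1/2)\cdot(\alpha/n)=\alpha/(2n)$, so $|B_i|\sim \text{Binomial}(\binom{n}{2},\alpha/(2n))$ as well, and both sequences are independent across $i$. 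Couple them so that $(|A_i|)_i=(|B_i|)_i$.

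Next, I would condition on a fixed vector of phase sizes $(t_1,\ldots,t_k)$ together with the event $\mathcal{F}$ that no $A_i$ contains a repeated edge. Given $T_i=t_i$ and no duplicates, $A_i$ is a uniformly random ordered tuple of $t_i$ distinct edges of $E$, independently across $i$. On the $B$ side, the two-layer construction collapses to an independent $\mathrm{Bern}(\alpha/(2n))$ on each edge, so $B_i$ is just $\mathcal{G}_{n,\alpha/(2n)}$; conditioning on $|B_i|=t_i$ makes $B_i$ a uniformly random size-$t_i$ subset of $E$, and the canonical uniformly random intra-phase ordering turns this into a uniformly random ordered tuple of $t_i$ distinct edges. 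Hence $A$ and $B$ have the same conditional distribution on $\mathcal{F}$.

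Finally I would bound $\Pr[\bar{\mathcal{F}}]$. Conditional on $T_i=t$, for any fixed edge $e\in E$ the probability that $e$ appears at least twice in $A_i$ is at most $\binom{t}{2}/|E|^2$, so a union bound over $e$ yields $\Pr[A_i\text{ has a duplicate}\mid T_i=t]=O(t^2/n^2)$. Since $T_i=O(\alpha n)$ with probability $1-e^{-\Omega(\alpha n)}$ by Chernoff, this gives $\Pr[A_i\text{ has a duplicate}]=O(\alpha^2)$, exactly matching the rate in Claim~\ref{cl:iid-yes}; a union bound over $i=1,\ldots,k$ gives $\Pr[\bar{\mathcal{F}}]=O(k\alpha^2)$, and since $A$ and $B$ coincide in distribution on $\mathcal{F}$, we conclude $\|A-B\|_{tvd}=O(k\alpha^2)$. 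The only delicate step is ensuring that the ``uniform over subsets of fixed size'' symmetry on the $B$ side survives the two-layer sampling, which it does precisely because $\mathrm{Bern}(1/2)\cdot \mathrm{Bern}(\alpha/n)$ collapses to a single independent $\mathrm{Bern}(\alpha/(2n))$ on each edge.
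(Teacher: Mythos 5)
Your proposal is correct and follows the same route the paper intends (the paper omits this proof, stating only that it is ``essentially the same'' as Claim~\ref{cl:iid-yes}): show the streams coincide in distribution on the no-duplicate event, and bound the bad event by $O(k\alpha^2)$ via a per-phase birthday bound. The one place you are more careful than the paper's YES-case template is in making explicit that $T_i$ and $|B_i|$ have matching (unconditional) marginals by integrating out $|E'_i|$ and collapsing $\mathrm{Bern}(1/2)\cdot\mathrm{Bern}(\alpha/n)$ to $\mathrm{Bern}(\alpha/(2n))$; this is exactly the ``wrinkle'' introduced by $|E'_i|$ being random, and handling it explicitly is the right thing to do.
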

The proof of Claim~\ref{cl:iid-no} is essentially the same and is hence omitted.

We can now give 

\begin{proofof}{Theorem~\ref{thm:main-iid}}
  Suppose that \ALG yields a $(2-\e)$-approximation to maxcut with success probability at least $99/100$ on any fixed input if the stream contains $\ell\cdot n$ i.i.d. samples of the edges of the graph. We prove that \ALG must use $\tilde \Omega(\sqrt{n})$ space in two steps. First, we set up the parameters of the input distribution $\D$ so that \ALG must succeed with probability at least $9/10$ on $\D$. We then use Lemma~\ref{lm:lb-final} similarly to the proof of Theorem~\ref{thm:main} 

We choose the number of phases in our input as  $k=C\ell/(\alpha \e^2)$ for a constant $C>8$, and let $\alpha=\frac{1}{\ell^3 \log n}$ (note that this satisfies the condition $\alpha\in (\lba, 1)$). We will use the fact that 
\begin{equation}\label{eq:kalpha-small}
k\alpha^2=O(\frac{\ell}{\alpha \e^2}\cdot \alpha^2)= O(\ell\cdot \alpha)=o(1).
\end{equation}

We now show that with this setting of parameters the input stream contains a sequence of at least $\ell\cdot n$ samples of either a complete bipartite graph (\YES case) or a complete graph (\NO case), and the distribution of these samples is $o(1)$-close to i.i.d. in total variation distance (this claim will crucially rely our setting of parameters to ensure \eqref{eq:kalpha-small}). \footnote{We note that it is sufficient to prove a lower bound in the setting where the stream contains {\em at least $\ell\cdot n$} i.i.d. samples, since the algorithm can simply count the number of edges received and output the answer as soon as $\ell \cdot n$ have been received. This only increases the space complexity by an additive $O(\log n)$ term.} We consider the \YES and \NO cases separately.

\begin{description}
\item[\YES case.]  Let $P\cup Q=V$ denote the uniformly random bipartition used in the definition of $\D^Y$. Since $||P|-n/2|<O(\sqrt{n\log n})$ with probability $1-1/n^3$, say, by standard concentration inequalities, we have that each graph $G'_i$ for $i=1,\ldots,k$ contains at least $\alpha n/8$ edges with probability at least $1-1/n$ (we took a union bound over all $i=1,\ldots, k=O(n/\alpha)=O(n^{1.1})$). Thus, the union of $k=C\ell/(\alpha/\e^2)$ graphs generated by $\D^Y$ contains at least $\ell \cdot n$ edges with probability at least $1-1/n$.

Let $T^{Y}=T_1^Y+T_2^Y+\ldots+T_k^Y$ be a sum of $k$ independent random variables distributed as $\text{Binomial}(|P|\cdot |Q|, \alpha/n)$. 

By Claim~\ref{cl:iid-yes} the total variation distance between the stream of $T^Y$ i.i.d. samples of the complete bipartite graph $G=(P, Q, E)$ for the randomly chosen bipartition $P, Q$ and the stream of edges generated by $\D^Y$ is $O(k\alpha^2)$.  Thus, an algorithm \ALG that succeeds with probability at least $99/100$ on every input as long as the input stream contains at least $\ell\cdot n$ i.i.d. samples of the input graph must succeed at on $\D^Y$ with probability at least $99/100-O(k\alpha^2)\geq 9/10$ by \eqref{eq:kalpha-small}. 

\item[\NO case.]  We have that each graph $G'_i$ for $i=1,\ldots,k$ contains at least $\alpha n/8$ edges with probability at least $1-1/n$ (we took a union bound over all $i=1,\ldots, k\leq O(n^{1.1})$). Thus, the union of $k=C\ell/(\alpha/\e^2)$ graphs generated by $\D^N$ contains at least $\ell \cdot n$ edges with probability at least $1-1/n$.

Let $T^{N}=T^{N}_1+T^{N}_2+\ldots+T^{N}_k$ be a sum of $k$ independent random variables, where $T^{N}_i\sim \text{Binomial}(|E'_i|, \alpha/n)$ for a parameter $\alpha<1$.

By Claim~\ref{cl:iid-no} the total variation distance between the stream of $T^N$ i.i.d. samples of the complete graph $G=(V, E)$ and the stream of edges generated by $\D^N$ is $O(k\alpha^2)$. Thus, an algorithm \ALG that succeeds with probability at least $99/100$ on every input as long as the input stream contains at least $\ell\cdot n$ i.i.d. samples of the input graph must succeed at on $\D^N$ with probability at least $99/100-O(k\alpha^2)\geq 9/10$ by \eqref{eq:kalpha-small}.
\end{description}

 Thus, we have that \ALG must succeed with probability at least $9/10$ when input is drawn from distribution $\D$.  By Yao's minmax principle there exists a deterministic algorithm {\bf ALG}' with space complexity bounded by that of \ALG that succeeds with probability at least $9/10$ on $\D$. Now by Lemma~\ref{lm:reduction} there exists a protocol for the \DBHP problem that gives advantage $\Omega(1/k)=\Omega(\frac1{\ell}\alpha \e^2)$ over random guessing. 

However, for any $\gamma>\lba$ by Lemma~\ref{lm:lb-final} any protocol that uses at most $\gamma \sqrt{n}$ communication can not get advantage over random guessing larger than 
\begin{equation}\label{eq:23hfas}
O(\gamma+\alpha^{3/2}).
\end{equation}
Substituting the value of $\alpha$ into ~\eqref{eq:23hfas}, we get that one necessarily has
$$
\gamma+\left(\ell^3 \log n\right)^{-3/2}= \Omega(\frac{\e^2}{\ell^4 \log n}),
$$
and hence 
$$
\gamma\geq C \frac{\e^2}{\ell^4 \log n}-\ell^{-9/2}/(\log n)^{3/2}
$$
for some constant $C>0$. This implies that $\gamma=\Omega(\frac{\e^2}{\ell^4 \log n})$ for any constant $\e>0$, completing the proof.
\end{proofof}

\newpage
\pdfbookmark[1]{\refname}{My\refname} 

\begin{appendix}
\section{Omitted Proofs}\label{app:reduction}

In this section we give the proofs of Claim~\ref{cl:1} and Lemma~\ref{lm:single-sample}. 

\noindent{\em {\bf Claim~\ref{cl:1}}
Let $X, Y$ be two random variables. Let $W$ be independent of $(X, Y)$. Then for any function $f$ one has 
$||f(X, W)-f(Y, W)||_{tvd}\leq ||X-Y||_{tvd}$.
}

\noindent\begin{proof}
First, one has $||(X, W)-(Y, W)||_{tvd}=||X-Y||_{tvd}$  since $W$ is independent of $X$ and of $Y$.  The claim now follows since $||f(A)-f(B)||_{tvd}\leq ||A-B||_{tvd}$ for any $A, B$, we demonstrated by the following calculation. Suppose that $f: \Omega\to \Omega'$. Then
\begin{equation*}
\begin{split}
||f(A)-f(B)||_{tvd}&=\frac1{2}\sum_{\omega\in \Omega} |p_A(f^{-1}(\omega))-p_B(f^{-1}(\omega))|\\
&=\frac1{2}\sum_{\omega'\in \Omega'} \left|\sum_{\omega\in \Omega: f(\omega)=\omega'} p_A(\omega)-p_B(\omega)\right|\\
&\leq \frac1{2}\sum_{\omega'\in \Omega'} \sum_{\omega\in \Omega: f(\omega)=\omega'} \left|p_A(\omega)-p_B(\omega)\right|\\
&\leq \frac1{2}\sum_{\omega\in \Omega} \left|p_A(\omega)-p_B(\omega)\right|,\\
\end{split}
\end{equation*}
where we used the fact that $|a+b|\leq |a|+|b|$ for any $a, b\in \mathbb{R}$ to go from line 2 to line 3.
\end{proof}
\vspace{0.2in}
\noindent {\em {\bf Lemma~\ref{lm:single-sample}}
Let $X, Y$ be distributions on a finite probability space $\Omega$.  Suppose that with probability $1/2$ one is given a sample $\omega$ from $X$ (\YES case) and with probability $1/2$ a sample from $Y$ (\NO case). Then outputting \YES if $p_X(\omega)>p_Y(\omega)$ and \NO otherwise distinguishes between the two cases with advantage over random guessing at least $||X-Y||_{tvd}/2$.}

\noindent\begin{proof}
Recall that 
$$
||X-Y||_{tvd}=\max_{\Omega'\subseteq \Omega} (p_X(\Omega')-p_Y(\Omega'))=\frac1{2}\sum_{\omega\in \Omega} |p_X(\omega)-p_Y(\omega)|.
$$
Let $\Omega^*$ be the set that achieves the optimum, i.e.
$$
\Omega^*=\{\omega\in \Omega : p_X(\omega)>p_Y(\omega)\}.
$$

The probability of error equals

\begin{equation*}
\begin{split}
&\prob[\text{answer is~}\YES]\sum_{\omega\in \Omega^*} p_Y(\omega)+\prob[\text{~answer is~}\NO]\sum_{\omega\in \Omega\setminus \Omega^*} p_X(\omega)\\
&=\frac1{2}\sum_{\omega\in \Omega^*} (p_X(\omega)-(p_X(\omega)-p_Y(\omega)))+\frac1{2}\sum_{\omega\in \Omega\setminus \Omega^*} p_X(\omega)\\
&=\frac1{2}-\frac1{2}||X-Y||_{tvd}
\end{split}
\end{equation*}
\end{proof}

\begin{proofof}{Lemma~\ref{lm:tvd-cond}}
\begin{equation*}
\begin{split}
||(X, Y^1)-(X, Y^2)||_{tvd}&=\frac1{2}\sum_{x\in \Omega_1, y\in \Omega_2} |p_{(X, Y^1)}-p_{(X, Y^2)}|\\
&=\frac1{2}\sum_{x\in \Omega_1, y\in \Omega_2} |p_{(X, Y^1)}-p_{(X, Y^2)}|\\
&=\frac1{2}\sum_{x\in \Omega_1, y\in \Omega_2} |p_{X}(x) p_{Y^1_x}(y)-p_{X}(x) p_{Y^2_x}(y)|\\
&=\frac1{2}\sum_{x\in \Omega_1}p_{X}(x)\sum_{y\in \Omega_2} |p_{Y^1_x}(y)-p_{Y^2_x}(y)|\\
&=\sum_{x\in \Omega_1}p_{X}(x)||Y^1_x-Y^2_x||_{tvd}\\
&=\expect_X[||Y^1_X-Y^2_X||_{tvd}]\\
\end{split}
\end{equation*}
\end{proofof}

\end{appendix}

\end{document}